\let\l@ENGLISH\l@english
\renewcommand*{\@opargbegintheorem}[3]{\trivlist
  \item[\hskip \labelsep{\itshape #1\ #2}] {\itshape (#3):} {\normalfont}}
\newtheorem{lemma}{Lemma}
\newtheorem{remark}{Remark}
\newtheorem{proposition}{Proposition}
\newtheorem{define}{Definition}
\newcommand{\suc}{\textrm{suc}}
\newcommand{\out}{\textrm{out}}
\newcommand{\near}{\textrm{near}}
\newcommand{\far}{\textrm{far}}
\newcommand{\AuthorOne}{Jing~Guo, {\em{Member, IEEE}}}
\newcommand{\AuthorTwo}{Salman~Durrani, {\em{Senior Member, IEEE}}}
\newcommand{\AuthorThree}{Xiangyun~Zhou, {\em{Senior Member, IEEE}}}
\newcommand{\AuthorFour}{Halim~Yanikomeroglu, {\em{Fellow, IEEE}}}
\newcommand{\ThankOne}{J. Guo, X. Zhou and S. Durrani are with the Research School of Engineering, The Australian National University, Canberra, ACT 2601, Australia (Emails: \{jing.guo, xiangyun.zhou, salman.durrani\}@anu.edu.au). H. Yanikomeroglu is with the Department of Systems and Computer Engineering, Carleton University, Ottawa, ON K1S 5B6, Canada (E-mail: halim@sce.carleton.ca).}
\begin{document}
\clearpage
This work has been submitted to the IEEE for possible publication.
Copyright may be transferred without notice, after which this version
may no longer be accessible.
  \newpage
\setcounter{page}{1}

\title{Design of Non-orthogonal Multiple Access Enhanced Backscatter Communication}
\author{\IEEEauthorblockN{\AuthorOne,~\AuthorThree,~\AuthorTwo,~and~\AuthorFour\thanks{\ThankOne}}}
\maketitle
%

%
\begin{abstract}
Backscatter communication (BackCom), which allows a backscatter node (BN) to communicate with the reader by modulating and reflecting the incident continuous wave from the reader, is considered as a promising solution to power the future Internet-of-Things. In this paper, we consider a single BackCom system, where multiple BNs are served by a reader. We propose to use the power-domain non-orthogonal multiple access (NOMA), i.e., multiplexing the BNs in different regions or with different backscattered power levels, to enhance the spectrum efficiency of the BackCom system. To better exploit power-domain NOMA, we propose to set the reflection coefficients for multiplexed BNs to be different. Based on this considered model, we develop the reflection coefficient selection criteria. To illustrate the enhanced system with the proposed criteria, we analyze the performance of BackCom system in terms of the average number of bits that can be successfully decoded by the reader for two-node pairing case and the average number of successful BNs for the general multiplexing case. Our results shows that NOMA achieves much better performance gain in the BackCom system as compared to its performance gain in the conventional system, which highlights the importance of applying NOMA to the BackCom system.
\end{abstract}
%
\begin{IEEEkeywords}
Backscatter communication, reflection coefficient design, non-orthogonal multiple access, user pairing, interference networks.
\end{IEEEkeywords}

\ifCLASSOPTIONpeerreview
    \newpage
\fi

\section{Introduction}
To provide ubiquitous connectivity among tens of billions devices, the internet-of-things (IoT) is envisaged as one of the key technology trends for the fifth generation (5G) system~\cite{8030485}. Under the IoT paradigm, the low-cost devices can automatically communication with each other without human intervention. Nonetheless, with the development of IoT technology, there are currently many research challenges needed to be addressed, one of them being the energy issue~\cite{Dawy-2017,8030504}. For those devices where the battery replacement can be very costly, the energy harvesting becomes a desirable approach to maintain the functionality of devices for a long period. It is worthy to note that the energy harvesting can be very compatible with most IoT devices, because these devices only consume a small amount of energy~\cite{Dawy-2017,Jayakody-2017}.

One of the promising energy harvesting techniques is the backscatter communication (BackCom)~\cite{Lium-2017}. A BackCom system generally has two main components, a reader and a backscatter node (BN). The BN does not have any active radio frequency (RF) component, and it reflects and modulates the incident single-tone sinusoidal continuous wave (CW) from the reader for the uplink communication. The reflection is achieved by intentionally mismatching the antennas input impedance and the signal encoding is achieved by varying the antenna impedance~\cite{Boyer-2014}. The BN can also harvests the energy from the CW signal. These energy-saving features make the BackCom system become a prospective candidate for IoT.

The backscatter technique is commonly used in the radio frequency identification systems (RFID), which usually accommodates the short range communication (i.e., several meters)~\cite{Vannucci-2008,Boyer-2014}. Recently, the BackCom system has been proposed for providing longer range communications, e.g., by installation of battery units and supporting low-bit rate communications~\cite{Vannucci-2008,Bletsas-2009}, or exploiting the bistatic architectures~\cite{6742719}. Such extended-range BackCom systems have been considered for point-to-point communication~\cite{6836141,Liu-2017,Vincent-2013,Yang-2017,Han-2017,Mudasar-2017} and one-to-many communication~\cite{Vannucci-2008,Bletsas-2009,Yang-2015,Psomas-2017,Zhu-2017}. For the \textit{point-to-point communication}, the physical layer security mechanism was developed in~\cite{6836141}, where the reader interferes with the eavesdropper by injecting a randomly generated noise signal which is added to the CW sent to the tag. In~\cite{Liu-2017}, for a BackCom system consisting of multiple reader-tag pairs, a multiple access scheme, named as time-hopping full-duplex BackCom, was proposed to avoid interference and enable full-duplex communication. Other works have considered BackCom systems with BNs powered by the ambient RF signal~\cite{Vincent-2013,Yang-2017} or power beacons~\cite{Han-2017,Mudasar-2017}. For the \textit{one-to-many communication}, a set of signal and data extraction techniques for the backscatter sensors' information was proposed in~\cite{Vannucci-2008}, where the sensors operate in different subcarrier frequencies. In~\cite{Bletsas-2009}, the authors used the beamforming and frequency-shift keying modulation to minimize the collision in a backscatter sensor network and studied the sensor collision (interference) performance. In~\cite{Yang-2015}, an energy beamforming scheme was proposed based on the backscatter-channel state information and the optimal resource allocation schemes were also obtained to maximize the total utility of harvested energy. In~\cite{Psomas-2017}, the decoding probability for a certain sensor was derived using stochastic geometry, where three collision resolution techniques (i.e., directional antennas, ultra-narrow band transmissions and successive interference cancellation (SIC)) were incorporated. For an ALOHA-type random access, by applying the machine learning to implement intelligent sensing, the work in~\cite{Zhu-2017} presented a framework of backscatter sensing with random encoding at the BNs and the statistic inference at the reader.

In this work, we focus on the uplink communication in a one-to-many BackCom system. To handle the multiple access, non-orthogonal multiple access (NOMA) is employed. By allowing multiple users to be served in the same resource block, NOMA can greatly improve the spectrum efficiency and it is also envisaged as an essential technology for 5G systems~\cite{Ding-2017}. In general, the NOMA technique can be divided into power-domain NOMA and code-domain NOMA. The code-domain NOMA utilizes user-specific spreading sequences for concurrently using the same resource, while the power-domain NOMA exploits the difference in the channel gain among users for multiplexing. The power-domain NOMA has the advantages of low latency and high spectral efficiency~\cite{Shin-2017} and it will be considered in our work. For the conventional communication system, the implementation of power-domain NOMA in the uplink communication has been well investigated in the literature, e.g.,~\cite{Imari-2014,Ding-2014,Diamantoulakis-2016,Mohammad-2017}. Very recently, the authors in~\cite{Lyu-2017} investigated NOMA in the context of a power station-powered BackCom system. To implement NOMA, the time spent on energy harvesting for each BN is different, where the optimal time allocation policy was obtained.

\textit{Paper contributions:} In this paper, we consider a single BackCom system, where one reader serves multiple randomly deployed BNs. We adopt a hybrid power-domain NOMA and time division multiple access (TDMA) to enhance the BackCom system performance. Specifically, we multiplex the BNs in different spatial regions (namely the region division approach) or with different backscattered power levels (namely the power division approach) to implement NOMA. Different from the conventional wireless devices that can actively adjust the transmit power, we set the reflection coefficients for the multiplexed BNs to be different in order to better exploit power-domain NOMA. We make the following major contributions in this paper:
\begin{itemize}
  \item We propose a NOMA-enhanced BackCom system, where the reflection coefficients for the multiplexed BNs from different groups are set to different values to utilize the power-domain NOMA. Based on the considered system model, we develop criteria for choosing the reflection coefficients for the different groups. To the best of our knowledge, such guidelines have not yet been proposed in the literature.
  \item We adopt a metric, named as the average number of successfully decoded bits (i.e., the average number of bits can be successfully decoded by the reader in one time slot), to evaluate the system performance. For the most practical case of two-node pairing, we derive the exact analytical closed-form results for the fading-free scenario and semi closed-form results for the fading scenario (cf. Table~\ref{tb:1}). For analytical tractability, under the fading-free and general multiple-node multiplexing case, we analyze a metric, the average number of successful BNs given $N$ multiplexing BNs, which has similar performance trend as the average number of successfully decoded bits. The derived expressions allow us to verify the proposed selection criteria and investigate the impact of system parameters.
  \item Our numerical results show that, NOMA generally can achieve much better performance gain in the BackCom system compared to its performance gain in the conventional system. This highlight the importance of incorporating NOMA with the BackCom system.
\end{itemize}

The remainder of the paper is organized as follows. Section~\ref{sec:system} presents the detailed system model, including the developed NOMA scheme. The proposed reflection coefficient selection criterion is presented in Section~\ref{sec:designtuition}. The definition and the analysis of the considered performance metrics for the fading-free and fading scenarios are given in Sections~\ref{sec:ana1} and~\ref{sec:fading}, respectively. Section~\ref{sec:result} presents the numerical and simulation results to study the NOMA-enhanced BackCom system. Finally, conclusions are presented in Section~\ref{sec:summary}.

\section{System Model}\label{sec:system}
\subsection{Spatial Model}\label{sec:spatialmodel}
We consider a BackCom system consisting of a single reader and $M$ BNs (sensors), as illustrated in Fig.~\ref{fig_systemmodel1}. The coverage zone $\mathcal{S}$ for the reader is assumed to be an annular region specified by the inner and outer radii $R_{1}$ and $R$, where the reader is located at the origin~\cite{Bletsas-2009,Psomas-2017}. The $M$ BNs are randomly independently and uniformly distributed inside $\mathcal{S}$, i.e., the location of BNs is modelled as the binomial point process. Consequently, the distribution of the random distance between a BN and the reader, $r$, is $f_r(r)=\frac{2r}{R^2-R_1^2}$~\cite{Zubair-2013}.
\ifCLASSOPTIONpeerreview
\begin{figure}
\centering
\subfigure[Spatial model (${\color{red}\blacktriangle}=$ reader, ${\color{blue}\bullet}=$ BNs).]{\label{fig_systemmodel1}\includegraphics[width=0.3 \textwidth]{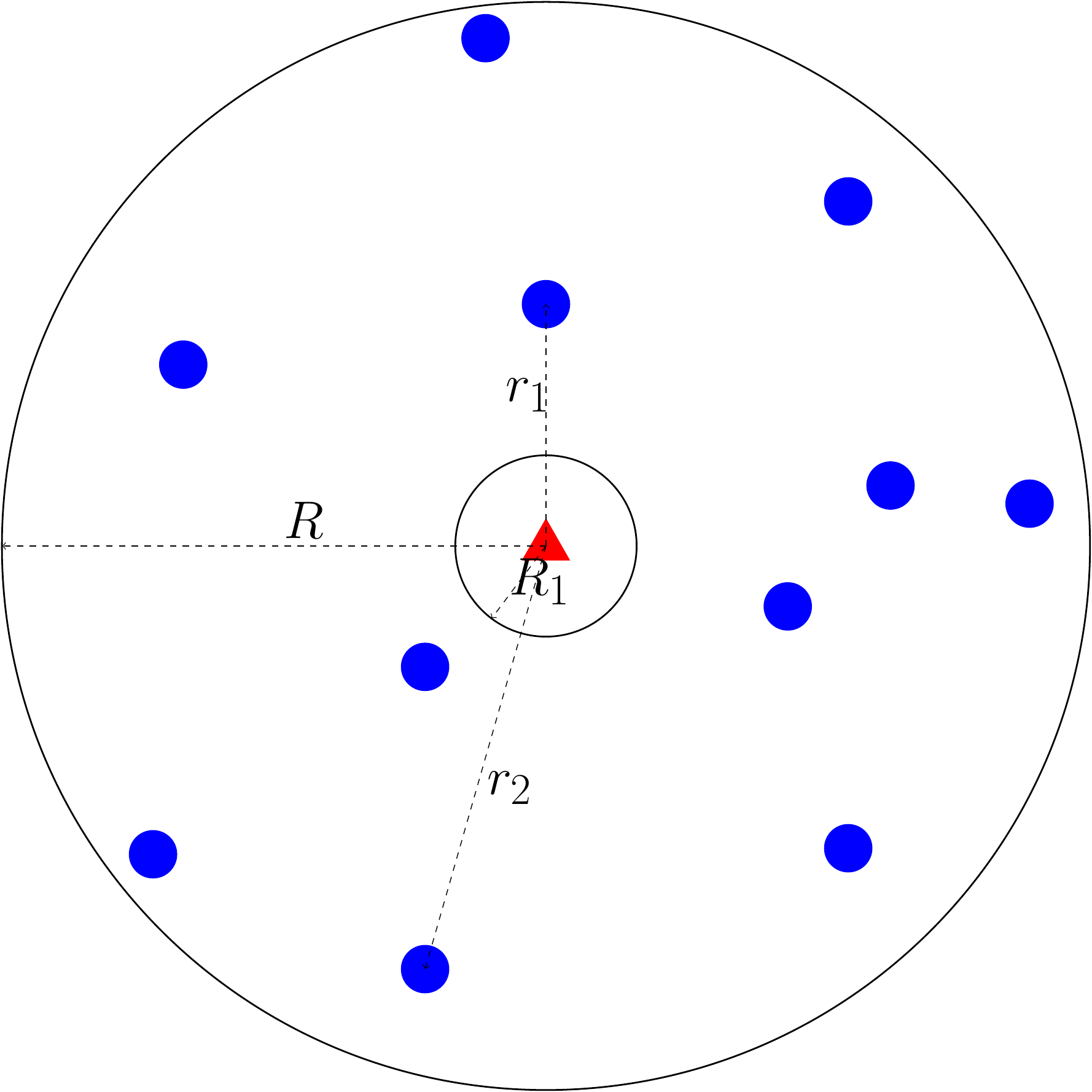}}
\mbox{\hspace{3cm}}
\subfigure[Time slot structure (${\color{green}\blacksquare}=$ mini-slot on NOMA, ${\color{black}\square}=$ mini-slot on single access).]{\label{fig_systemmodel3}\includegraphics[width=0.3\textwidth]{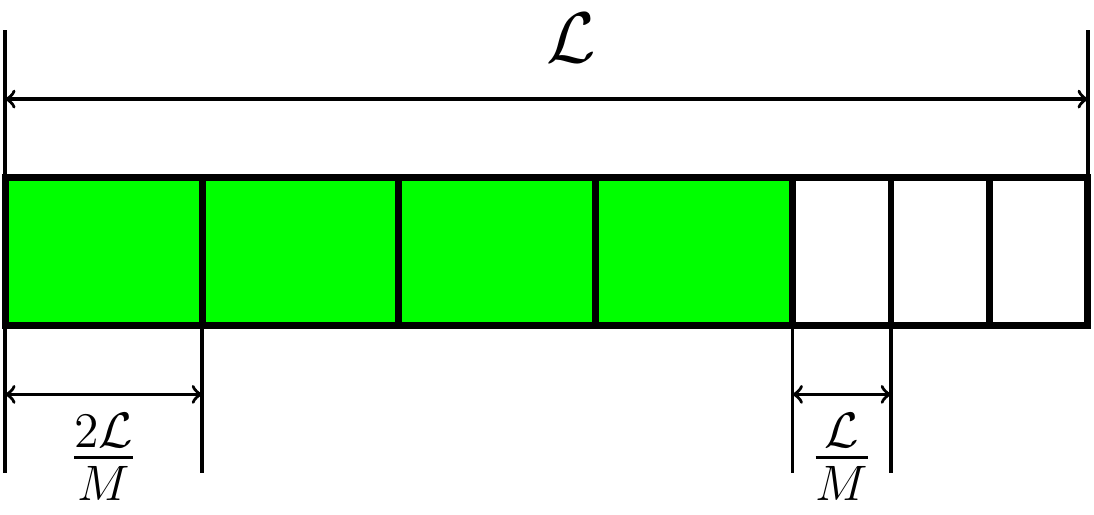}}

        \caption{Illustration of the system model for two-node pairing case.}

\end{figure}
\else
\begin{figure}
\centering
\subfigure[Spatial model (${\color{red}\blacktriangle}=$ reader, ${\color{blue}\bullet}=$ BNs).]{\label{fig_systemmodel1}\includegraphics[width=0.25 \textwidth]{systemmodel}}\\
\vspace{+0.1in}
\subfigure[Time slot structure (${\color{green}\blacksquare}=$ mini-slot on NOMA, ${\color{black}\square}=$ mini-slot on single access).]{\label{fig_systemmodel3}\includegraphics[width=0.3\textwidth]{systemmodel3}}

        \caption{Illustration of the system model for two-node pairing case.}

\end{figure}
\fi

\subsection{Channel Model}\label{sec:channelmodel}
In this work, we first consider the fading-free channel model, i.e., we use the path-loss to model the wireless communication channel. Thus, for a receiver, its received power from a transmitter is given by $p_t r^{-\alpha}$, where $p_t$ is the transmitter's transmit power, $\alpha$ is the path-loss exponent, and $r$ is the distance between the transmitter and receiver pair, respectively. This fading-free channel model is a reasonable assumption for the BackCom system with strong line-of-sight (LOS) links~\cite{Psomas-2017}. This can be justified as follows. The coverage zone for a reader is generally relatively small, especially compared to the cell's coverage region, and the BNs are close to the reader; hence, the communication link is very likely to experience strong LOS fading. In Section~\ref{sec:fading}, we will extend the system model to include the fading. Under the fading case, we assume that the fading on the communication link is identically and independently distributed (i.i.d.) Nakagami-$m$ fading. Also we will show that the design intuition gained from the fading-free scenario can provide a good guideline for LOS fading scenario. The additive white Gaussian noise (AWGN) with noise power $\mathcal{N}$ is also included in the system.

\subsection{Backscatter Communication Model}
In general, the BNs do not actively transmit any radio signal. Instead, the communication from a BN to the reader is achieved by reflecting the incident CW signal from the reader. In this work, the reader is assumed to transmit a CW signal for most of the time, while each BN has two states, namely the backscattering state and the waiting state. Fig.~\ref{fig_systemmodel2} depicts the structure of the considered BN; it is mainly composed of the transmitter, receiver, energy harvester, information decoder, micro-controller and variable impedance.

In the \emph{backscattering state}, the BN's transmitter is active and is backscattering the modulated signal via a variable impedance. We consider the binary phase shift keying modulation in this work. To modulate the signal, the in-built micro-controller switches impedances between the two impedance states. These two impedance are assumed to generate two reflection coefficients with the same magnitude (denoted as $\xi$) but with different phase shift (i.e., zero degree and 180 degree). Combining with our channel model, given that the transmit power of the reader is $P_T$, the backscattered power at a BN is $\xi P_T r^{-\alpha}$.

In the \emph{waiting state}, the BN stops backscattering and only harvests the energy from the CW signal. The harvested energy is used to power the circuit and sensing functions. We assume that each BN has a relatively large energy storage. The storage battery allows the accumulation of energy with random arrivals and the stored energy can be used to maintain the normal operation of BNs in the long run.
\ifCLASSOPTIONpeerreview
\begin{figure}
\centering
\includegraphics[width=0.9 \textwidth]{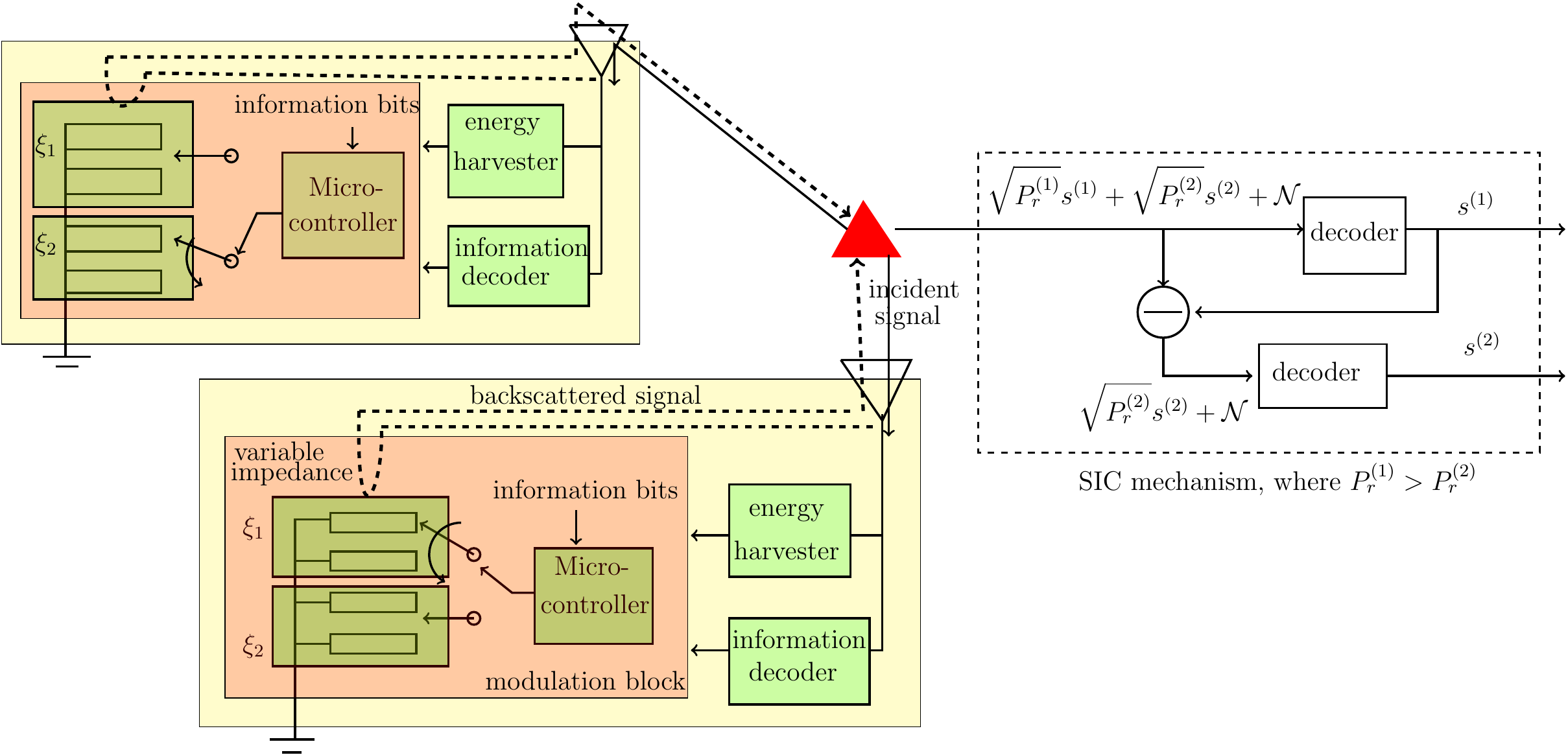}
        \caption{Illustration of the BackCom with NOMA for $N=2$ scenario. $P_r^{(1)}$ and $P_r^{(2)}$ denote the stronger signal power and the weaker signal power at the reader, respectively. $s^{(\cdot)}$ denotes the corresponding normalized information signal.}
        \label{fig_systemmodel2}
\end{figure}
\else
   \begin{figure*}[!t]
\centering
\includegraphics[width=0.8 \textwidth]{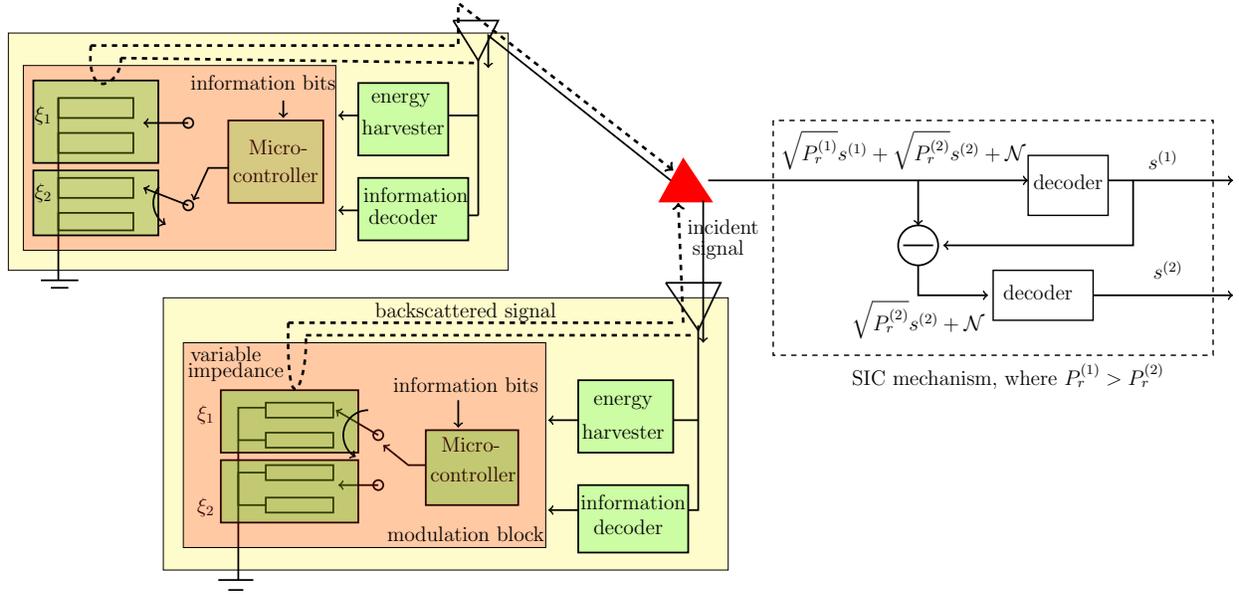}
        \caption{Illustration of the BackCom with NOMA for $N=2$ scenario. $P_r^{(1)}$ and $P_r^{(2)}$ denote the stronger signal power and the weaker signal power at the reader, respectively. $s^{(\cdot)}$ denotes the corresponding normalized information signal.}
        \label{fig_systemmodel2}
\end{figure*}
\fi

\subsection{Proposed NOMA Scheme}\label{sec:NOMAmodel}
In this section, we describe the proposed NOMA scheme for the BackCom system, which is a contribution of this work. We focus on the uplink communication and employ a hybrid of power-domain NOMA and TDMA. Each time slot lasts $\mathcal{L}$ seconds and the data rate for each BN is $\mathcal{R}$ bits/secs. Each time slot $\mathcal{L}$ is further divided into multiple mini-slots depending on the multiplexing situation, which will be explained later in Section~\ref{sec:NOMA:frame}.

\subsubsection{Region division for multiplexing}It is widely known that the fundamental principle of implementing power-domain NOMA is to multiplex (group) users with the relatively large channel gain difference on the same spectrum resource~\cite{Shin-2017}. Hence, we utilize the BNs residing in separate regions to implement power-domain NOMA, which is named as the region division approach. Specifically, the reader ``virtually'' divides the coverage zone into $N$ subregions\footnote{In this work, we mainly focus on the $N=2$ case (i.e., NOMA with two-node pairing case), which is widely considered in the literature. The analysis for the general $N$ case will be presented in Section~\ref{sec:generalcase}.} and the $i$-th subregion is an annular region specified by the radii $R_{i}$ and $R_{i+1}$, where $i\in [1,N]$, $R_{i}<R_{i+1}$ and $R_{N+1}=R$. The reader randomly picks one BN from each subregion to implement NOMA. Since the BNs are randomly deployed in $\mathcal{S}$, it is possible that the number of BNs in each subregion is not equal. For this unequal number of BNs scenario, the reader will first multiplex $N$ BNs. If the reader cannot further multiplex $N$ BNs, it will then multiplex $N-1$ BNs, $N-2$ BNs and so on and so forth.

\subsubsection{Time slot structure}\label{sec:NOMA:frame} Each time slot $\mathcal{L}$ is divided into multiple mini-slots. For the mini-slot used to multiplex $n$ BNs, the time allocated to this mini-slot is assumed to be $n\frac{\mathcal{L}}{M}$. Let us consider $N=2$ as an example and assume that there are $t$ BNs residing in the first subregion (namely the near subregion) and $M-t$ BNs in the second subregion (namely the far subregion), where $t\leq M/2$ is considered. In the first $t$ mini-slots, where each mini-slot lasts $\frac{2\mathcal{L}}{M}$ seconds, the reader will randomly select one BN from the near subregion and another BN from the far subregion to implement NOMA for each mini-slot. As for the remaining $M-2t$ BNs in the far subregion, since there are no available BNs in the near subregion to pair them, they can only communicate with the reader in a TDMA fashion in the following $M-2t$ mini-slots, i.e., each BN is allocated $\frac{\mathcal{L}}{M}$ seconds to backscatter the signal alone. Note that the BNs which are not selected by the reader to backscatter signal on a certain mini-slot are in waiting state. The time slot structure for two-node pairing case is illustrated in Fig.~\ref{fig_systemmodel3}.

\subsubsection{Reflection coefficient differentiation and its implementation} To make the difference of channel gains for multiplexing nodes more significant, we consider that the reflection coefficient for the BN belonging to different subregion is different. Let $\xi_i$ denote the reflection coefficient for the BN in the $i$-th subregion and we set $1\geq \xi_{1}\geq...\xi_{i-1}\geq\xi_{i}...\geq  \xi_{N}>0$. The reflection coefficient $\xi$ is of importance for the BackCom system with NOMA. In Section~\ref{sec:designtuition}, we will provide design guidelines on how to choose the reflection coefficient for each subregion to improve the system performance.

In order to know which BNs belong to which subregions, the following approach is adopted in this work. We assume that each BN has a unique ID, which is known by the reader~\cite{Bletsas-2009}. The reader broadcasts the training signal to all BNs and each node then backscatters this signal in its corresponding assigned slot~\cite{Yang-2015}. By receiving the backscattered signal, the reader can categorize the BNs into different subregions based on the different power levels. At the same time, each BN can decide which subregion it belongs to according to the received training signal power from the reader, and then switches its impedance pair to the corresponding subregion's impedance pair for the NOMA implementation. Note that, we assume that each BN has $N$ impedance pairs corresponding to the $N$ reflection coefficients for each subregion, from which the micro-controller can select\footnote{Note that $N$ is a pre-defined system parameter. Once $N$ is chosen, the hardware (e.g., the impedance pairs) is fixed.}. Additionally, during the training period, all BNs switch to the first impedance pair (e.g., the reflection coefficient is $\xi_1$).

\subsubsection{SIC mechanism} NOMA is carried out via the SIC technique at the reader. We assume that the decoding order is always from the strongest signal to the weakest signal\footnote{Under the fading-free scenario, the decoding order is from the nearest BN to the farthest BN. Under the fading case, the signal here implies the instantaneous backscattered signal received at the reader and the strongest signal may not come from the nearest BN.} and error propagation is also included. For example, the reader firstly detects and decodes the strongest signal, and treats the weaker signal as the interference. If the signal-to-interference-plus-noise ratio (SINR) at the reader is greater than a threshold $\gamma$, the strongest signal can be successfully decoded and extracted from the received signal. The reader then decodes the second strongest signal and so on and so forth. If the SINR is below the threshold, the strongest signal cannot be decoded and the reader will not continue to decode the weaker signals, which implies that the remaining weaker signals fail to be decoded as well~\cite{6954404}. Fig.~\ref{fig_systemmodel2} illustrates the basic structure for the SIC technique.

\section{Design Guideline for the Reflection Coefficients}\label{sec:designtuition}
For the conventional communication system implementing power-domain NOMA, the multiplexed devices transmit with different powers in order to gain the benefits from NOMA. Unfortunately, actively updating the transmit power is impossible for BNs, since they are passive devices. Instead, the reflection coefficient is an adjustable system parameter for BNs to enhance the system performance. It is intuitive to set the reflection coefficients for the near subregions as large as possible and set the reflection coefficients for the far subregions as small as possible. Then the question is how small (or large) should the reflection coefficients be for the near (or far) subregions. In this section, we provide a simple design guideline for choosing the reflection coefficients for the subregions, which is presented in the following proposition.

\begin{proposition}
Based on our system model considered in Section~\ref{sec:system}, to achieve the best system performance, the reflection coefficient for each subregion should satisfy the following conditions
\ifCLASSOPTIONpeerreview
\begin{align}
  &\xi_N\geq \gamma\frac{\mathcal{N}R^{2\alpha}}{P_T},\label{eq:designguide1}\\
  &\xi_i\geq \max\left\{\xi_{i+1},\gamma\left (\sum_{j=i+1}^{N}\xi_j\frac{R_{i+1}^{2\alpha}}{R_j^{2\alpha}}+\frac{\mathcal{N}R_{i+1}^{2\alpha}}{P_T} \right)\right\}, \quad {i\leq N-1.}\label{eq:designguide}
\end{align}
\else
\begin{align}
  &\xi_N\geq \gamma\frac{\mathcal{N}R^{2\alpha}}{P_T},\label{eq:designguide1}\\
  &\xi_i\geq \max\left\{\xi_{i+1},\gamma\left (\sum_{j=i+1}^{N}\xi_j\frac{R_{i+1}^{2\alpha}}{R_j^{2\alpha}}+\frac{\mathcal{N}R_{i+1}^{2\alpha}}{P_T} \right)\right\}, \nonumber\\
  &\quad\quad\quad\quad\quad\quad\quad\quad\quad\quad\quad\quad\quad\quad\quad\quad\quad {i\leq N-1.}\label{eq:designguide}
\end{align}
\fi

For the simplest case where $N=2$, we have $\xi_2\geq \gamma\frac{\mathcal{N}R^{2\alpha}}{P_T}$ and $\xi_1\geq \max\left\{\xi_{2},\gamma\left (\xi_2+\frac{\mathcal{N}R_{2}^{2\alpha}}{P_T} \right)\right\}$.
\end{proposition}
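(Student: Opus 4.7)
The plan is to derive the two conditions by analyzing a worst-case SINR requirement at the reader for each BN under the prescribed SIC decoding order. Using the backscatter channel model in Section~\ref{sec:channelmodel}, the signal from a BN at distance $r$ with reflection coefficient $\xi$ arrives at the reader with round-trip power $\xi P_T r^{-2\alpha}$, since the CW travels from reader to BN and the modulated backscatter travels back. Because each BN in the $i$-th subregion can lie anywhere in the annulus $[R_i, R_{i+1}]$, I would work with the worst-case location inside each subregion and then demand that the SINR threshold $\gamma$ is met there; this produces a universal (location-independent) sufficient condition of the form stated in the proposition.

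First I would fix the decoding order. The reader decodes strongest-to-weakest, so to make the nominal order ``BN in subregion $1$ first, then subregion $2$, \ldots, then subregion $N$'' robust against BN placement, the weakest received signal from subregion $i$ (BN at the farthest point $R_{i+1}$) must still dominate the strongest received signal from subregion $i+1$ (BN at the nearest point $R_{i+1}$). Since both are evaluated at the same distance $R_{i+1}$, the path-loss terms cancel and the ordering reduces to $\xi_i \geq \xi_{i+1}$, which accounts for the first argument of the $\max$ in~\eqref{eq:designguide}.

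Next I would write the SINR for the $i$-th decoding stage assuming the previous $i-1$ signals have been perfectly cancelled. The desired signal is $\xi_i P_T r_i^{-2\alpha}$ and the residual interference is $\sum_{j=i+1}^{N} \xi_j P_T r_j^{-2\alpha}$ plus noise $\mathcal{N}$. Worst case for the $i$-th BN is at $r_i=R_{i+1}$ (weakest desired signal), and worst case for each interferer is $r_j=R_j$ (strongest interferers). Substituting and imposing
\begin{equation*}
\frac{\xi_i P_T R_{i+1}^{-2\alpha}}{\sum_{j=i+1}^{N}\xi_j P_T R_j^{-2\alpha} + \mathcal{N}} \geq \gamma
\end{equation*}
and clearing denominators by multiplying through by $R_{i+1}^{2\alpha}$ recovers the second argument of the $\max$ in~\eqref{eq:designguide}. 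For $i=N$ there is no residual interference, and the same worst-case argument with $r_N=R$ gives~\eqref{eq:designguide1}. Specializing to $N=2$ gives the two simple inequalities stated at the end of the proposition.

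The main obstacle is conceptual rather than computational: justifying why the stated conditions are the \emph{right} sufficient criterion. Two issues need to be handled carefully. (i) Under strictly worst-case positions one must argue that these suffice everywhere inside the subregions, which follows from monotonicity of the signal power in $r_i$ and of the interference in $r_j$. (ii) With error propagation in SIC, failing at stage $i$ kills all later stages; so a single worst-case bound per stage, combined with the ordering constraint $\xi_i \geq \xi_{i+1}$, is what simultaneously guarantees each stage succeeds and the nominal decoding order is the physical order. Stitching these together yields the proposition.
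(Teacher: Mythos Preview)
Your proposal is correct and follows essentially the same worst-case SINR argument as the paper: place the $i$-th BN at its farthest boundary $r_i=R_{i+1}$, place each remaining interferer at its nearest boundary $r_j=R_j$, impose $\textsf{SINR}_i\geq\gamma$, and rearrange. The paper's proof is terser---it writes the worst-case $\textsf{SINR}_1$ explicitly and then says ``adopting the same procedure'' for the other stages---whereas you additionally spell out why the ordering constraint $\xi_i\geq\xi_{i+1}$ appears as the first argument of the $\max$ and why monotonicity makes the worst-case bound sufficient everywhere; these are elaborations, not a different route.
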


\begin{proof}
We consider the case of $N$ multiplexing nodes and the design guideline obtained for this scenario also holds for the case of $n$ multiplexing nodes, where $n<N$, since the decoding signal receives the most severe interference for the $N$ multiplexing nodes case. The best performance that can be achieved by the BackCom system is that the signals from all the multiplexed BNs are successfully decoded. In other words, the SINR for the $i$-th strongest signal, denoted $\textsf{SINR}_i$, is greater than the channel threshold $\gamma$, where $i\in [1,N-1]$, and the signal-to-interference (SNR) for the weakest signal, denoted as $\textsf{SNR}_N$ is also higher than $\gamma$.

Let us start from the strongest signal and its SINR is given by $\textsf{SINR}_1=\frac{P_T\xi_1r_1^{-2\alpha}}{\sum_{j=2}^{N}P_T\xi_jr_j^{-2\alpha}+\mathcal{N}}$, where $r_j$ represents the random distance between the reader and the BN from the $j$-th subregion and its conditional probability density function (PDF) is $f_{r_j}(r_j)=\frac{2r_j}{R_{j+1}^2-R_{j}^2}$ with $r_j\in[R_{j},R_{j+1}]$. In order to ensure that the strongest signal will always be successfully decoded, the worst case of $\textsf{SINR}_1$ should always be greater than $\gamma$. The worst case for $\textsf{SINR}_1$ is that $r_1=R_2$ and $r_j=R_j$; hence, we can write the condition that the strongest signal is always successfully decoded as $\frac{P_T\xi_1R_2^{-2\alpha}}{\sum_{j=2}^{N}P_T\xi_jR_j^{-2\alpha}+\mathcal{N}}\geq \gamma$. After rearranging the inequality, we obtain $\xi_1\geq \gamma\left( \sum_{j=2}^{N}\xi_j\frac{R_2^{2\alpha}}{R_j^{2\alpha}}+\frac{\mathcal{N}R_2^{2\alpha}}{P_T}\right)$. Adopting the same procedure, we can find the value of $\xi_i$ for the other signals.
\end{proof}

\begin{remark}
Under the proposed selection criterion, every BN can be successfully decoded for the fading-free scenario. Clearly, when more BNs can be multiplexed (i.e., $N$ is a relatively large value), the network performance can be greatly improved. From~\eqref{eq:designguide}, we can see that $\xi_i$ is the summation of $\xi_j$, where $j>i$, and also depends on $\gamma$. When $\gamma$ is large, the obtained $\xi_i$ can be greater than one, which is impractical. In order to meet the condition in~\eqref{eq:designguide}, we have to set $\xi_N$ as small as possible. Correspondingly, when $N$ is large, the transmit power of the reader $P_T$ should be increased in order to satisfy condition in~\eqref{eq:designguide1}. Hence, there is a tradeoff between the BackCom system performance and the reader's transmit power together with the SIC implementation complexity.
\end{remark}

\section{Analysis of the Proposed BackCom System with NOMA}\label{sec:ana1}
In this section, we present the analysis of the performance metrics for our considered BackCom system with NOMA, under the fading-free scenario.
\subsection{Performance Metrics}
The \emph{average number of successfully decoded bits}, $\bar{\mathcal{C}}_{\suc}$, is the main metric considered in this work. It is defined as the average number of bits that can be successfully decoded at the reader in one time slot. For the system where the coverage region is divided into $N$ subregions, this metric depends on: (i) the average number of successful BNs given that $n$ (where $n\in[1,N]$) BNs are multiplexed, denoted as $\bar{\mathcal{M}}_n$; and (ii) all possible multiplexing scenarios (i.e., the number of BNs in each separate subregion).

For $N=2$, we investigate the average number of successfully decoded bits, $\bar{\mathcal{C}}_{\suc}$. When $N\geq 3$, there is no general expression for $\bar{\mathcal{C}}_{\suc}$, because the second condition corresponds to the classical balls into bins problem and currently the general form listing all possible allocation cases is not possible~\cite{Raab-1998}. In this work, for $N\geq 3$ scenario, we consider the metric $\bar{\mathcal{M}}_N$, i.e., the average number of successful BNs given $N$ multiplexing nodes. As will be shown in Section~\ref{sec:result}, $\bar{\mathcal{M}}_N$ has similar trends as $\bar{\mathcal{C}}_{\suc}$ for general $N$ case.

\subsection{Two-Node Pairing Case ($N=2$)}\label{sec:nofading}
We first consider the two-node pairing case, which is widely adopted and considered in the NOMA literature due to its feasibility in practical implementation. The definition and the essential expression of $\bar{\mathcal{C}}_{\suc}$ are given below, where the factors used to calculate this metric for different scenarios are summarized in Table~\ref{tb:1} (cf. Section~\ref{sec:fading:summary}).

\begin{define}
Based on our NOMA-enhanced BackCom system in Section~\ref{sec:NOMAmodel}, the average number of successfully decoded bits $\bar{\mathcal{C}}_{\suc}$ is
\ifCLASSOPTIONpeerreview
\begin{align}\label{eq:general:C}
\bar{\mathcal{C}}_{\suc}=&\sum_{t=0}^{M/2}\binom{t}{M}p_{\near}^t(1-p_{\near})^{M-t}\left(t\frac{2\mathcal{L}\mathcal{R}}{M}\bar{\mathcal{M}}_2+(M-2t)\frac{\mathcal{L}\mathcal{R}}{M}\bar{\mathcal{M}}_{1\near} \right)\nonumber\\
&+\sum_{t=M/2+1}^{M}\binom{t}{M}p_{\near}^t(1-p_{\near})^{M-t}\left((M-t)\frac{2\mathcal{L}\mathcal{R}}{M}\bar{\mathcal{M}}_2+(2t-M)\frac{\mathcal{L}\mathcal{R}}{M}\bar{\mathcal{M}}_{1\far}\right),
\end{align}
\else
\begin{align}\label{eq:general:C}
\bar{\mathcal{C}}_{\suc}=&\sum_{t=0}^{M/2}\binom{t}{M}p_{\near}^t(1-p_{\near})^{M-t}\nonumber\\
&\times\left(t\frac{2\mathcal{L}\mathcal{R}}{M}\bar{\mathcal{M}}_2+(M-2t)\frac{\mathcal{L}\mathcal{R}}{M}\bar{\mathcal{M}}_{1\near} \right)\nonumber\\
&+\sum_{t=M/2+1}^{M}\binom{t}{M}p_{\near}^t(1-p_{\near})^{M-t}\nonumber\\
&\times\left((M-t)\frac{2\mathcal{L}\mathcal{R}}{M}\bar{\mathcal{M}}_2+(2t-M)\frac{\mathcal{L}\mathcal{R}}{M}\bar{\mathcal{M}}_{1\far}\right),
\end{align}
\fi
\noindent where $p_{\near}$ is the average probability that a BN is residing in the near subregion (i.e., the first subregion) and it equals to $p_{\near}=\frac{R_2^2-R_1^2}{R^2-R_1^2}$. $\bar{\mathcal{M}}_{1\near}$ ($\bar{\mathcal{M}}_{1\far}$) denotes the average number of successful BNs coming from the near (far) subregion, given that it accesses the reader alone. $\bar{\mathcal{M}}_2$ is the average number of successful BNs when two BNs are paired, and it can be expressed as $\bar{\mathcal{M}}_2=p_1+2p_2$, where $p_2$ is the average probability that signals for the paired BNs are successfully decoded and $p_1$ is the probability that only the stronger signal is successfully decoded.
\end{define}

The key elements that determine $\bar{\mathcal{C}}_{\suc}$ are presented in the following lemmas.
\ifCLASSOPTIONpeerreview
\begin{lemma}
Based on our system model in Section~\ref{sec:system}, given that two BNs are paired, the probability that the signals from the two BNs are successfully decoded and the probability that the signal from only one BN is successfully decoded are given by
\begin{align}\label{eq:nofading:p2}
p_2=\left\{ \begin{array}{ll}
0, \quad\quad\quad\quad\quad\quad\quad\quad\quad\quad\quad\,\,\,\,\quad\quad\quad\quad\quad\quad\quad\quad\quad\quad\quad\quad\quad\quad\quad\quad\quad\quad\quad\quad{\gamma\geq \frac{P_T\xi_2R_2^{-2\alpha}}{\mathcal{N}};}\\
\frac{\left(\textrm{max}\left\{\frac{1}{R^{2\alpha}},\frac{\mathcal{N}\gamma}{\xi_2P_T} \right\}
\right)^{-\frac{1}{\alpha}}-R_2^2}{R^2-R_2^2}, \quad\quad\quad\quad\quad\quad\quad\quad\quad\quad\quad\quad{\left(\gamma< \frac{P_T\xi_2R_2^{-2\alpha}}{\mathcal{N}}\right)\&\&\left(\gamma\leq\frac{R_2^{-2\alpha}}{\kappa R_2^{-2\alpha}+\frac{\mathcal{N}}{\xi_1P_T}} \right);}\\
0,
\quad\quad\quad\quad\quad\quad\quad\quad\quad\quad\,\,\,\quad\quad{\left(\gamma< \frac{P_T\xi_2R_2^{-2\alpha}}{\mathcal{N}}\right)\&\&\left(R_1^{-2\alpha}\leq\frac{\mathcal{N}\gamma}{\xi_1P_T}+\textrm{max}\left\{ \frac{1}{R^{2\alpha}},\frac{\mathcal{N}\gamma}{\xi_2P_T} \right\}\gamma\kappa \right);}\\
\Omega\left(\textrm{max}\left\{\frac{1}{R^{2\alpha}},\frac{R_2^{-2\alpha}}{\gamma\kappa}-\frac{\mathcal{N}}{\xi_2P_T},\frac{\mathcal{N}\gamma}{\xi_2P_T}\right\}
, \textrm{min}\left\{\frac{1}{R_2^{2\alpha}},\frac{R_1^{-2\alpha}}{\gamma\kappa}-\frac{\mathcal{N}}{\xi_2P_T}\right\} ,\textrm{max}\left\{ \frac{1}{R^{2\alpha}},\frac{\mathcal{N}\gamma}{\xi_2P_T} \right\}
\right), \quad{\textrm{otherwise};}\\
\end{array} \right.
\end{align}

\begin{align}\label{eq:nofading:p1}
p_1=\left\{ \begin{array}{ll}
0, \quad\quad\quad\quad\quad\quad\quad\quad\quad\quad\quad\,\,\,\,\quad\quad\quad\quad\quad\quad\quad\quad\quad\quad\quad\quad\quad\quad\quad\quad\quad\quad\quad\quad{\gamma\leq \frac{P_T\xi_2R^{-2\alpha}}{\mathcal{N}};}\\
\frac{R^2-\left(\textrm{min}\left\{ \frac{1}{R_2^{2\alpha}},\frac{\mathcal{N}\gamma}{\xi_2P_T} \right\}\right)^{-\frac{1}{\alpha}} }{R^2-R_2^2}, \quad\quad\quad\quad{\left(\gamma> \frac{P_T\xi_2R^{-2\alpha}}{\mathcal{N}}\right)\&\&\left(R_2^{-2\alpha}\geq\frac{\mathcal{N}\gamma}{\xi_1P_T}+\textrm{min}\left\{ \frac{1}{R_2^{2\alpha}},\frac{\mathcal{N}\gamma}{\xi_2P_T} \right\}
\gamma\kappa \right);}\\
0,
\quad\quad\quad\quad\quad\quad\quad\quad\quad\quad\quad\quad\quad\quad\quad\quad\,\,\quad\quad\quad\quad{\left(\gamma> \frac{P_T\xi_2R^{-2\alpha}}{\mathcal{N}}\right)\&\&\left(\gamma\leq\frac{R_1^{-2\alpha}}{\kappa R^{-2\alpha}+\frac{\mathcal{N}}{\xi_1P_T}} \right);}\\
\Omega\left(\textrm{max}\left\{\frac{1}{R^{2\alpha}},\frac{R_2^{-2\alpha}}{\gamma\kappa}-\frac{\mathcal{N}}{\xi_2P_T}\right\}
,
\textrm{min}\left\{\frac{1}{R_2^{2\alpha}},\frac{R_1^{-2\alpha}}{\gamma\kappa}-\frac{\mathcal{N}}{\xi_2P_T},\frac{\mathcal{N}\gamma}{\xi_2P_T}\right\}
,R^{-2\alpha}\right), \quad\quad\quad\quad\quad\,\,\,{\textrm{otherwise};}\\
\end{array} \right.
\end{align}
\noindent respectively, where $\Omega\left(p,q,w\right)\triangleq \frac{q^{-\frac{1}{\alpha}}R_1^2-\left(\frac{P_T\xi_1}{\mathcal{N}\gamma q}\right)^{\frac{1}{\alpha}}\,_2F_1\left[-\frac{1}{\alpha},\frac{1}{\alpha},\frac{\alpha-1}{\alpha},-\frac{P_T\xi_2}{\mathcal{N}}q\right] -
p^{-\frac{1}{\alpha}}R_1^2+\left(\frac{P_T\xi_1}{\mathcal{N}\gamma p}\right)^{\frac{1}{\alpha}}\,_2F_1\left[-\frac{1}{\alpha},\frac{1}{\alpha},\frac{\alpha-1}{\alpha},-\frac{P_T\xi_2}{\mathcal{N}}p\right]}{(R_2^2-R_1^2)(R^2-R_2^2)}-\frac{w^{-\frac{1}{\alpha}}-p^{-\frac{1}{\alpha}}}{R^2-R_2^2}$ and $\kappa\triangleq\frac{\xi_2}{\xi_1}$.
\end{lemma}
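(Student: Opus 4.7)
The plan is to set up both $p_2$ and $p_1$ as double integrals of the joint distance density over the decoding-feasibility regions in the $(r_1,r_2)$-plane, and then extract the piecewise form by enumerating how the feasibility boundaries cut through the rectangle $[R_1,R_2]\times[R_2,R]$. Because the near BN and far BN are drawn independently from their respective annuli, the joint PDF factorises as $\frac{2r_1}{R_2^2-R_1^2}\cdot\frac{2r_2}{R^2-R_2^2}$ on that rectangle. The SIC events translate to: the near (stronger) BN is decoded iff $r_1^{-2\alpha}\geq \gamma\kappa r_2^{-2\alpha}+\gamma\mathcal{N}/(\xi_1 P_T)$, and, conditional on that, the far (weaker) BN is decoded iff $r_2\leq (\xi_2 P_T/(\gamma\mathcal{N}))^{1/(2\alpha)}$. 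Thus $p_2$ is the integral over the region where both inequalities hold, and $p_1$ is the integral over the region where only the first holds.

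The degenerate branches come from checking whether the relevant sub-rectangle of $(r_1,r_2)$ is empty. For $p_2$, the far-signal condition restricts $r_2$ to $[R_2,\min\{R,(\xi_2 P_T/(\gamma\mathcal{N}))^{1/(2\alpha)}\}]$; this interval collapses precisely when $\gamma\geq P_T\xi_2 R_2^{-2\alpha}/\mathcal{N}$, giving the first zero case. Even when this interval is non-empty, the near-signal SINR constraint may already be infeasible at the most favourable configuration $r_1=R_1$ with $r_2$ as large as possible within the admissible set; writing out that worst-case inequality gives exactly the third case of $p_2$, with the $\max$ term selecting whichever of $R^{-2\alpha}$ or $\gamma\mathcal{N}/(\xi_2 P_T)$ is the smallest admissible value of $r_2^{-2\alpha}$. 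The analogous threshold calculations on the far-signal failure region produce the zero branches for $p_1$.

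The non-trivial branches are handled by first integrating out $r_1$. Writing $r_1^{\max}(r_2)=\min\{R_2,[\gamma\kappa r_2^{-2\alpha}+\gamma\mathcal{N}/(\xi_1 P_T)]^{-1/(2\alpha)}\}$, the inner integration gives $[(r_1^{\max}(r_2))^2-R_1^2]/(R_2^2-R_1^2)$. When the first branch of $r_1^{\max}$ is active throughout the feasible $r_2$ range, the inner factor is unity and only the outer uniform-in-$r_2^2$ integral remains, yielding the simple ratios in the second line of each formula. When the second branch of $r_1^{\max}$ is active, the outer integral becomes $\int [\gamma\kappa r_2^{-2\alpha}+\gamma\mathcal{N}/(\xi_1 P_T)]^{-1/\alpha}\,2r_2\,dr_2$; the substitution $v=r_2^{-2\alpha}$ reduces this to an integral of the form $\int (av+b)^{-1/\alpha}v^{-1/\alpha-1}\,dv$, which has the standard $_2F_1$ antiderivative appearing in the definition of $\Omega$. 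The three arguments $p,q,w$ of $\Omega$ are then just the two effective endpoints of the outer integration together with the boundary at which $r_1^{\max}$ switches branches, and the various $\max$ and $\min$ in the arguments record the competition among $R^{-2\alpha}$, $R_2^{-2\alpha}$, the SNR-threshold $\gamma\mathcal{N}/(\xi_2 P_T)$, and the branch-switch value $R_i^{-2\alpha}/(\gamma\kappa)-\mathcal{N}/(\xi_2 P_T)$.

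The main obstacle is not any single integral, all of which are standard once the substitution $v=r_2^{-2\alpha}$ is performed, but rather the bookkeeping: one must systematically order the four candidate thresholds on $r_2^{-2\alpha}$ and, for each of $p_2$ and $p_1$, verify that the four piecewise cases correspond respectively to (i) an empty feasibility set from the far-signal boundary, (ii) a feasibility rectangle that does not intersect the near-signal SINR boundary, (iii) an empty feasibility set because the near-signal SINR fails everywhere, and (iv) the generic case in which the near-signal boundary cuts across the feasibility rectangle and produces the $\Omega$ expression. Matching each case to the specific $\max$/$\min$ clauses stated in the lemma is the most error-prone step.
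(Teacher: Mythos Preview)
Your proposal is correct and follows essentially the same route as the paper: the paper makes the change of variables $y_i=r_i^{-2\alpha}$ at the outset (rather than your late substitution $v=r_2^{-2\alpha}$), writes $p_2=\Pr\!\big(y_1\ge\gamma\kappa y_2+\tfrac{\mathcal N\gamma}{P_T\xi_1}\ \&\&\ y_2\ge\tfrac{\mathcal N\gamma}{P_T\xi_2}\big)$, integrates the inner variable to obtain the conditional $p_{2\mid y_2}$, and then uses a diagram of the candidate thresholds on $y_2$ to read off exactly the four cases and the $\Omega$ integral you describe. Your identification of the degenerate branches (empty far-SNR window, SINR feasible everywhere, SINR infeasible even at the best corner) and of the generic $\,_2F_1$ branch matches the paper's case split line by line.
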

\else
\begin{lemma}
Based on our system model in Section~\ref{sec:system}, given that two BNs are paired, the probability that the signals from the two BNs are successfully decoded and the probability that the signal from only one BN is successfully decoded are given by~\eqref{eq:nofading:p2} and~\eqref{eq:nofading:p1}, respectively, as shown at the top of next page, where $\Omega\left(p,q,w\right)\triangleq \frac{q^{-\frac{1}{\alpha}}R_1^2-\left(\frac{P_T\xi_1}{\mathcal{N}\gamma q}\right)^{\frac{1}{\alpha}}\,_2F_1\left[-\frac{1}{\alpha},\frac{1}{\alpha},\frac{\alpha-1}{\alpha},-\frac{P_T\xi_2}{\mathcal{N}}q\right] }{(R_2^2-R_1^2)(R^2-R_2^2)}
-\frac{p^{-\frac{1}{\alpha}}R_1^2+\left(\frac{P_T\xi_1}{\mathcal{N}\gamma p}\right)^{\frac{1}{\alpha}}\,_2F_1\left[-\frac{1}{\alpha},\frac{1}{\alpha},\frac{\alpha-1}{\alpha},-\frac{P_T\xi_2}{\mathcal{N}}p\right]}{(R_2^2-R_1^2)(R^2-R_2^2)}-\frac{w^{-\frac{1}{\alpha}}-p^{-\frac{1}{\alpha}}}{R^2-R_2^2}$ and $\kappa\triangleq\frac{\xi_2}{\xi_1}$.
   \begin{figure*}[!t]
\normalsize
\begin{align}\label{eq:nofading:p2}
p_2=\left\{ \begin{array}{ll}
0, \quad\quad\quad\quad\quad\quad\quad\quad\quad\quad\quad\,\,\,\,\quad\quad\quad\quad\quad\quad\quad\quad\quad\quad\quad\quad\quad\quad\quad\quad\quad\quad\quad\quad{\gamma\geq \frac{P_T\xi_2R_2^{-2\alpha}}{\mathcal{N}};}\\
\frac{\left(\textrm{max}\left\{\frac{1}{R^{2\alpha}},\frac{\mathcal{N}\gamma}{\xi_2P_T} \right\}
\right)^{-\frac{1}{\alpha}}-R_2^2}{R^2-R_2^2}, \quad\quad\quad\quad\quad\quad\quad\quad\quad\quad\quad\quad{\left(\gamma< \frac{P_T\xi_2R_2^{-2\alpha}}{\mathcal{N}}\right)\&\&\left(\gamma\leq\frac{R_2^{-2\alpha}}{\kappa R_2^{-2\alpha}+\frac{\mathcal{N}}{\xi_1P_T}} \right);}\\
0,
\quad\quad\quad\quad\quad\quad\quad\quad\quad\quad\,\,\,\quad\quad{\left(\gamma< \frac{P_T\xi_2R_2^{-2\alpha}}{\mathcal{N}}\right)\&\&\left(R_1^{-2\alpha}\leq\frac{\mathcal{N}\gamma}{\xi_1P_T}+\textrm{max}\left\{ \frac{1}{R^{2\alpha}},\frac{\mathcal{N}\gamma}{\xi_2P_T} \right\}\gamma\kappa \right);}\\
\Omega\left(\textrm{max}\left\{\frac{1}{R^{2\alpha}},\frac{R_2^{-2\alpha}}{\gamma\kappa}-\frac{\mathcal{N}}{\xi_2P_T},\frac{\mathcal{N}\gamma}{\xi_2P_T}\right\}
, \textrm{min}\left\{\frac{1}{R_2^{2\alpha}},\frac{R_1^{-2\alpha}}{\gamma\kappa}-\frac{\mathcal{N}}{\xi_2P_T}\right\} ,\textrm{max}\left\{ \frac{1}{R^{2\alpha}},\frac{\mathcal{N}\gamma}{\xi_2P_T} \right\}
\right), \quad{\textrm{otherwise};}\\
\end{array} \right.
\end{align}

\begin{align}\label{eq:nofading:p1}
p_1=\left\{ \begin{array}{ll}
0, \quad\quad\quad\quad\quad\quad\quad\quad\quad\quad\quad\,\,\,\,\quad\quad\quad\quad\quad\quad\quad\quad\quad\quad\quad\quad\quad\quad\quad\quad\quad\quad\quad\quad{\gamma\leq \frac{P_T\xi_2R^{-2\alpha}}{\mathcal{N}};}\\
\frac{R^2-\left(\textrm{min}\left\{ \frac{1}{R_2^{2\alpha}},\frac{\mathcal{N}\gamma}{\xi_2P_T} \right\}\right)^{-\frac{1}{\alpha}} }{R^2-R_2^2}, \quad\quad\quad\quad{\left(\gamma> \frac{P_T\xi_2R^{-2\alpha}}{\mathcal{N}}\right)\&\&\left(R_2^{-2\alpha}\geq\frac{\mathcal{N}\gamma}{\xi_1P_T}+\textrm{min}\left\{ \frac{1}{R_2^{2\alpha}},\frac{\mathcal{N}\gamma}{\xi_2P_T} \right\}
\gamma\kappa \right);}\\
0,
\quad\quad\quad\quad\quad\quad\quad\quad\quad\quad\quad\quad\quad\quad\quad\quad\,\,\quad\quad\quad\quad{\left(\gamma> \frac{P_T\xi_2R^{-2\alpha}}{\mathcal{N}}\right)\&\&\left(\gamma\leq\frac{R_1^{-2\alpha}}{\kappa R^{-2\alpha}+\frac{\mathcal{N}}{\xi_1P_T}} \right);}\\
\Omega\left(\textrm{max}\left\{\frac{1}{R^{2\alpha}},\frac{R_2^{-2\alpha}}{\gamma\kappa}-\frac{\mathcal{N}}{\xi_2P_T}\right\}
,
\textrm{min}\left\{\frac{1}{R_2^{2\alpha}},\frac{R_1^{-2\alpha}}{\gamma\kappa}-\frac{\mathcal{N}}{\xi_2P_T},\frac{\mathcal{N}\gamma}{\xi_2P_T}\right\}
,R^{-2\alpha}\right), \quad\quad\quad\quad\quad\,\,\,{\textrm{otherwise}.}\\
\end{array} \right.
\end{align}
 \hrulefill
\vspace*{4pt}
\vspace{-0.05 in}
\end{figure*}
\end{lemma}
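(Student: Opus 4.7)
The plan is to express $p_1$ and $p_2$ as integrals over the joint density of the two paired distances, reduce to a single-variable integral by exploiting the product structure, and then identify the closed-form antiderivative in terms of $\,_2F_1$. Since the two multiplexed BNs are drawn independently and uniformly from their respective annuli, $r_1\in[R_1,R_2]$ and $r_2\in[R_2,R]$ are independent with $f_{r_1}(r_1)=\frac{2r_1}{R_2^2-R_1^2}$ and $f_{r_2}(r_2)=\frac{2r_2}{R^2-R_2^2}$. The event ``both decoded'' is $\{\textsf{SINR}_1\geq\gamma\}\cap\{\textsf{SNR}_2\geq\gamma\}$, with $\textsf{SINR}_1=\frac{P_T\xi_1 r_1^{-2\alpha}}{P_T\xi_2 r_2^{-2\alpha}+\mathcal{N}}$ and $\textsf{SNR}_2=\frac{P_T\xi_2 r_2^{-2\alpha}}{\mathcal{N}}$; the event ``only the stronger signal decoded'' replaces the second constraint with $\textsf{SNR}_2<\gamma$.

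Next, I would translate the two threshold constraints into a feasible region in the rectangle $[R_1,R_2]\times[R_2,R]$. The SNR constraint is a one-sided bound on $r_2$ alone: it confines $r_2$ to $[R_2,r_2^\star]$ for $p_2$, or to $[r_2^\star,R]$ for $p_1$, where $r_2^\star=(\xi_2 P_T/(\mathcal{N}\gamma))^{1/(2\alpha)}$. The SINR constraint rearranges to $r_1\leq\bigl[\gamma\kappa r_2^{-2\alpha}+\gamma\mathcal{N}/(\xi_1 P_T)\bigr]^{-1/(2\alpha)}$ with $\kappa=\xi_2/\xi_1$; this $r_2$-dependent ceiling must then be clipped against $[R_1,R_2]$. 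The trivial branches of \eqref{eq:nofading:p2} and \eqref{eq:nofading:p1} correspond exactly to the parameter regimes in which the feasible region is either empty (returning $0$) or in which the inner upper bound on $r_1$ already exceeds $R_2$ on the entire range of admissible $r_2$, so the inner integral collapses to $\frac{R_2^2-R_1^2}{R_2^2-R_1^2}=1$ and only the outer one-dimensional integral in $r_2$ survives, producing the radical expressions in the second branches.

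In the generic branch I would integrate out $r_1$ first to obtain
\begin{equation*}
p_2=\int_{r_2^{(\ell)}}^{r_2^{(u)}}\frac{2r_2}{R^2-R_2^2}\cdot\frac{\bigl[\gamma\kappa r_2^{-2\alpha}+\gamma\mathcal{N}/(\xi_1 P_T)\bigr]^{-1/\alpha}-R_1^2}{R_2^2-R_1^2}\,dr_2,
\end{equation*}
with endpoints $r_2^{(\ell)},r_2^{(u)}$ determined by intersecting $r_2^\star$, $R_2$, $R$, and the $r_2$-values at which the inner ceiling hits $R_1$ or $R_2$. The substitution $u=r_2^{-2\alpha}$ turns the nontrivial part of the integrand into $u^{-1/\alpha-1}(au+b)^{-1/\alpha}$, whose antiderivative is an instance of Euler's integral representation ${}_2F_1(a,b;c;z)=\frac{\Gamma(c)}{\Gamma(b)\Gamma(c-b)}\int_0^1 t^{b-1}(1-t)^{c-b-1}(1-zt)^{-a}\,dt$. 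Re-expressing the endpoints in the original variables and simplifying shows that each limit contributes exactly one of the two $\,_2F_1$ terms inside $\Omega(p,q,w)$, while the $R_1^2$ piece contributes the $q^{-1/\alpha}R_1^2$ and $p^{-1/\alpha}R_1^2$ terms. The formula for $p_1$ arises from the same recipe applied to the complementary $r_2$-range, which is why the third argument $w$ appears only in $p_1$ via the additional contribution at $r_2=R$.

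The main obstacle is not the closed-form evaluation of the integral, which is essentially mechanical once the region is identified, but rather the case enumeration: each of the two threshold curves in the rectangle $[R_1,R_2]\times[R_2,R]$ can lie entirely outside, entirely inside, or cross the rectangle, and the three qualitatively different outcomes (empty region, full $r_1$-interval, genuine hypergeometric case) must be matched to the explicit inequalities in \eqref{eq:nofading:p2} and \eqref{eq:nofading:p1}. I would organize the bookkeeping by first disposing of the $\textsf{SNR}_2$ constraint (which only involves $r_2$ and gives the leading inequalities of each branch) and then comparing the $r_1$-ceiling at the two candidate endpoints of $r_2$ against $R_1$ and $R_2$; this naturally generates the four piecewise cases stated in the lemma.
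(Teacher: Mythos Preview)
Your proposal is correct and follows essentially the same approach as the paper: the paper likewise writes $p_2$ as the joint event $\{\textsf{SINR}_1\ge\gamma\}\cap\{\textsf{SNR}_2\ge\gamma\}$, performs the change of variable $y_i=r_i^{-2\alpha}$ (your substitution $u=r_2^{-2\alpha}$), integrates out $y_1$ to obtain a conditional probability that is $1$, $0$, or the fractional expression $\frac{(\gamma\kappa y_2+\mathcal{N}\gamma/(P_T\xi_1))^{-1/\alpha}-R_1^2}{R_2^2-R_1^2}$, and then carries out the case split on the position of the threshold curves relative to the rectangle. One small point to tighten in your generic branch: the single integral you wrote omits the companion piece $\int f_{y_2}\,dy_2$ over the sub-range where the $r_1$-ceiling exceeds $R_2$ (i.e., where $p_{2\mid y_2}=1$); this extra term is precisely what produces the $-(w^{-1/\alpha}-p^{-1/\alpha})/(R^2-R_2^2)$ part of $\Omega(p,q,w)$, and it is present in both $p_1$ and $p_2$, not only in $p_1$.
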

\fi
\textit{Proof:} See Appendix~A.

\begin{lemma}
Based on our system model in Section~\ref{sec:system}, given that only one BN from the near subregion accesses the reader, the average number of successful BNs is
\begin{align}\label{eq:N1case}
\bar{\mathcal{M}}_{1\near}=\frac{\min\left\{R_2^2,\left(\frac{P_T\xi_1}{\gamma\mathcal{N}} \right)^{\frac{1}{\alpha}}\right\}-R_1^2}{R_2^2-R_1^2}\textbf{1}\left(\frac{P_T\xi_1R_1^{-2\alpha}}{\mathcal{N}}\geq\gamma \right),
\end{align}
and the average number of successful BNs when only one BN from the far subregion accesses the reader is
\begin{align}\label{eq:N1case1}
\bar{\mathcal{M}}_{1\far}=\frac{\min\left\{R^2,\left(\frac{P_T\xi_2}{\gamma\mathcal{N}} \right)^{\frac{1}{\alpha}}\right\}-R_2^2}{R^2-R_2^2}\textbf{1}\left(\frac{P_T\xi_2R_2^{-2\alpha}}{\mathcal{N}}\geq\gamma \right).
\end{align}
\end{lemma}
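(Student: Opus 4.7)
The plan is to observe that when only a single BN accesses the reader, there is no multi-user interference, so the decoding condition collapses from an SINR test to a pure SNR test, which depends only on the random distance $r$ of that BN. The proof then reduces to computing the probability that this distance falls inside the decodable region, integrated against the uniform annular PDF $f_r(r)=\frac{2r}{R_2^2-R_1^2}$ (resp.\ $\frac{2r}{R^2-R_2^2}$), and interpreting the result as the expected count of successful BNs (which, for a single BN, equals the success probability).

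Concretely, for the near-subregion case I would first write down the received backscattered power as $P_T\xi_1 r^{-2\alpha}$ (the $2\alpha$ exponent reflecting the round-trip path loss inherent to the BackCom link model of Section~\ref{sec:channelmodel}), so that successful decoding requires $\frac{P_T\xi_1 r^{-2\alpha}}{\mathcal{N}}\geq\gamma$. Solving this inequality for $r$ gives the maximum decodable distance $r_{\max}=\bigl(\frac{P_T\xi_1}{\gamma\mathcal{N}}\bigr)^{1/(2\alpha)}$, hence a decodable radius-squared of $\bigl(\frac{P_T\xi_1}{\gamma\mathcal{N}}\bigr)^{1/\alpha}$. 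I would then compute
\begin{equation*}
\bar{\mathcal{M}}_{1\near}=\Pr(r\leq r_{\max})=\int_{R_1}^{\min\{R_2,r_{\max}\}}\frac{2r}{R_2^2-R_1^2}\,dr,
\end{equation*}
which, after evaluating the integral, yields $\frac{\min\{R_2^2,r_{\max}^2\}-R_1^2}{R_2^2-R_1^2}$ exactly as claimed, provided the upper limit is at least $R_1$.

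Next I would handle the three geometric cases separately so as to arrive at the compact form given in~\eqref{eq:N1case}: (i) if $r_{\max}<R_1$ (equivalently $\frac{P_T\xi_1 R_1^{-2\alpha}}{\mathcal{N}}<\gamma$) then no point in the near annulus is decodable, which is precisely what the indicator $\mathbf{1}(\cdot)$ on the right-hand side of~\eqref{eq:N1case} enforces; (ii) if $r_{\max}\geq R_2$ the success probability saturates at $1$, matching the $\min\{R_2^2,\cdot\}$ truncation; (iii) otherwise the partial-coverage case gives the middle branch directly. Combining these into one line produces exactly the stated expression for $\bar{\mathcal{M}}_{1\near}$.

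Finally, the far-subregion result~\eqref{eq:N1case1} follows by the same argument, merely substituting $\xi_1\!\to\!\xi_2$ and the annulus endpoints $[R_1,R_2]\!\to\![R_2,R]$ in both the PDF and the boundary tests; no new derivation is needed. The only subtlety worth flagging (which will be the main place to be careful) is the correct path-loss exponent in the backscatter SNR, i.e.\ $r^{-2\alpha}$ rather than $r^{-\alpha}$, since this is what produces the $1/\alpha$ power in the final expression; apart from that the result is an elementary calculation with the uniform annular distance distribution.
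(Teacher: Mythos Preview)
Your proposal is correct and follows essentially the same approach as the paper: the paper's proof simply writes $\bar{\mathcal{M}}_{1\near}=\mathbb{E}_{r_1}\!\left[\Pr\!\left(\frac{P_T\xi_1 r_1^{-2\alpha}}{\mathcal{N}}\geq\gamma\right)\right]$ and states that rearranging and evaluating yields~\eqref{eq:N1case}. Your version spells out the distance threshold $r_{\max}$, the three geometric cases, and the round-trip $r^{-2\alpha}$ subtlety, but the underlying argument is identical.
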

\begin{proof}
According to the definition of $\bar{\mathcal{M}}_{1\near}$, it can be expressed as $\bar{\mathcal{M}}_{1\near}=\mathbb{E}_{r_1}\left[\Pr\left(\frac{P_T\xi_1r_1^{-2\alpha}}{\mathcal{N}}\geq\gamma\right)\right]$. After rearranging and evaluating this expression, we arrive at the result in~\eqref{eq:N1case}.
\end{proof}

\ifCLASSOPTIONpeerreview
\begin{remark}
Under the selection criterion of the reflection coefficient proposed in Proposition 1, it is clear that $\bar{\mathcal{M}}_2=2$ and $\bar{\mathcal{M}}_{1\near}=\bar{\mathcal{M}}_{1\far}=1$. Consequently, $\bar{\mathcal{C}}_{\suc}$ can be simplified into $\bar{\mathcal{C}}_{\suc}=\mathcal{L}\mathcal{R}(1+2p_{\near})+\frac{4\mathcal{L}\mathcal{R}p_{\near}^{\frac{M+2}{2}}}{M(1-p_{\near})^{\frac{4-M}{2}}}+\left((p_{\near}-1)\binom{M}{\frac{M+2}{2}}
\,_2F_1\left[1,\frac{2-M}{2},\frac{4+M}{2},\frac{p_{\near}}{p_{\near}-1}\right]-p_{\near}\binom{M}{\frac{M+4}{2}}
\,_2F_1\left[2,\frac{4-M}{2},\frac{6+M}{2},\frac{p_{\near}}{p_{\near}-1}\right] \right)$, which is the same as the total number of bits transmitted by BNs. Note that this quantity strongly relies on the radius $R_2$. The impact of $R_2$ will be presented in Section~\ref{sec:result::r2}.
\end{remark}
\else
\begin{remark}
Under the selection criterion of the reflection coefficient proposed in Proposition 1, it is clear that $\bar{\mathcal{M}}_2=2$ and $\bar{\mathcal{M}}_{1\near}=\bar{\mathcal{M}}_{1\far}=1$. Consequently, $\bar{\mathcal{C}}_{\suc}$ can be simplified into $\bar{\mathcal{C}}_{\suc}=\mathcal{L}\mathcal{R}(1+2p_{\near})+\frac{4\mathcal{L}\mathcal{R}p_{\near}^{\frac{M+2}{2}}}{M(1-p_{\near})^{\frac{4-M}{2}}}$ $+\left((p_{\near}-1)\binom{M}{\frac{M+2}{2}}
\,_2F_1\left[1,\frac{2-M}{2},\frac{4+M}{2},\frac{p_{\near}}{p_{\near}-1}\right]\right.$
 $\left.-p_{\near}\binom{M}{\frac{M+4}{2}}
\,_2F_1\left[2,\frac{4-M}{2},\frac{6+M}{2},\frac{p_{\near}}{p_{\near}-1}\right] \right)$, which is the same as the total number of bits transmitted by BNs. Note that this quantity strongly relies on the radius $R_2$. The impact of $R_2$ will be presented in Section~\ref{sec:result::r2}.
\end{remark}
\fi

Note that, when we set $\xi_2\geq\gamma\frac{\mathcal{N}R^{2\alpha}}{P_T}$, the average number of successful BNs is $\bar{\mathcal{M}}_{\suc}=2p_2$, which is directly proportional to $p_2$. The closed-form expression shown in~\eqref{eq:nofading:p2} involves the hypergeometric function coming from the noise term in the SINR, which makes it generally difficult to obtain any design intuition. By assuming that the noise is negligible, we obtain the following simplified asymptotic result for $\bar{\mathcal{M}}_{2}$ as
\ifCLASSOPTIONpeerreview
\begin{align}\label{eq:nofadingnonoise}
\lim_{\mathcal{N}\rightarrow 0}\bar{\mathcal{M}}_{2}=\left\{ \begin{array}{ll}
2, &{\gamma\kappa\leq 1 ;}\\
\frac{2R_2^2R^2+2R_1^2R_2^2-2R_1^2R^2-R_2^4\left((\gamma\kappa)^{-\frac{1}{\alpha}}+(\gamma\kappa)^{\frac{1}{\alpha}}\right)}{2(R_2^2-R_1^2)(R^2-R_2^2)}, &{1<\gamma\kappa\leq \frac{R_2^{-2\alpha}}{R^{-2\alpha}} ;}\\
\frac{2R_1^2R_2^2-2R_1^2R^2+(R^4-R_2^4)(\gamma\kappa)^{\frac{1}{\alpha}}}{2(R_2^2-R_1^2)(R^2-R_2^2)}, &{\frac{R_2^{-2\alpha}}{R^{-2\alpha}}<\gamma\kappa\leq \frac{R_1^{-2\alpha}}{R_2^{-2\alpha}} ;}\\
\frac{-2R_1^2R^2+R^4(\gamma\kappa)^{-\frac{1}{\alpha}}+R_1^4(\gamma\kappa)^{\frac{1}{\alpha}}}{2(R_2^2-R_1^2)(R^2-R_2^2)}, &{\frac{R_1^{-2\alpha}}{R_2^{-2\alpha}}<\gamma\kappa\leq \frac{R_1^{-2\alpha}}{R^{-2\alpha}} ;}\\
0, &{\gamma\kappa> \frac{R_1^{-2\alpha}}{R^{-2\alpha}}.}\\
\end{array} \right.
\end{align}
\else
\begin{align}\label{eq:nofadingnonoise}
\lim_{\mathcal{N}\rightarrow 0}\bar{\mathcal{M}}_{2}=\left\{ \begin{array}{ll}
2, \quad\quad\quad\quad\quad\quad\quad\quad\quad\quad\quad\quad\,\,\,\,\,{\gamma\kappa\leq 1 ;}\\
\frac{2R_2^2R^2+2R_1^2R_2^2-2R_1^2R^2-R_2^4\left((\gamma\kappa)^{-\frac{1}{\alpha}}+(\gamma\kappa)^{\frac{1}{\alpha}}\right)}{2(R_2^2-R_1^2)(R^2-R_2^2)},\\
 \quad\quad\quad\quad\quad\quad\quad\quad\quad\quad\,\,\,{1<\gamma\kappa\leq \frac{R_2^{-2\alpha}}{R^{-2\alpha}} ;}\\
\frac{2R_1^2R_2^2-2R_1^2R^2+(R^4-R_2^4)(\gamma\kappa)^{\frac{1}{\alpha}}}{2(R_2^2-R_1^2)(R^2-R_2^2)}, \\
\quad\quad\quad\quad\quad\quad\quad\quad\,\,\,\,{\frac{R_2^{-2\alpha}}{R^{-2\alpha}}<\gamma\kappa\leq \frac{R_1^{-2\alpha}}{R_2^{-2\alpha}} ;}\\
\frac{-2R_1^2R^2+R^4(\gamma\kappa)^{-\frac{1}{\alpha}}+R_1^4(\gamma\kappa)^{\frac{1}{\alpha}}}{2(R_2^2-R_1^2)(R^2-R_2^2)},\\
 \quad\quad\quad\quad\quad\quad\quad\quad\,\,\,\,\,{\frac{R_1^{-2\alpha}}{R_2^{-2\alpha}}<\gamma\kappa\leq \frac{R_1^{-2\alpha}}{R^{-2\alpha}} ;}\\
0, \quad\quad\quad\quad\quad\quad\quad\quad\quad\quad\quad\,\,\,\,{\gamma\kappa> \frac{R_1^{-2\alpha}}{R^{-2\alpha}}.}\\
\end{array} \right.
\end{align}
\fi
\begin{remark}
According to~\eqref{eq:nofadingnonoise}, for the given spatial and channel model, $\bar{\mathcal{M}}_{2}$ is totally determined by the ratio of reflection coefficients $\kappa$ and the threshold $\gamma$. It can be easily proved that the asymptotic result of $\bar{\mathcal{M}}_{2}$ is a monotonic decreasing function of the $\gamma$ and $\kappa$. Thus, when $\gamma\kappa\leq1$, $\bar{\mathcal{M}}_{2}$ is maximized. In other words, for the given channel threshold $\gamma$, $\kappa=\xi_2/\xi_1$ should be lower than $1/\gamma$ to optimize the network work performance, which is consistent with Proposition 1.
\end{remark}
\subsection{Multiple-Node Multiplexing Case ($N\geq$ 3)}\label{sec:generalcase}
Under this scenario, we analyze the average number of successful BNs given $N$ multiplexing BNs.

\begin{define}
Based on our NOMA-enhanced BackCom system in Section~\ref{sec:NOMAmodel}, the average number of successful BNs given $N$ multiplexing nodes, $\bar{\mathcal{M}}_N$, is given by
\begin{align}\label{eq:Ngeneral}
 \bar{\mathcal{M}}_N=\sum_{k=0}^{N}kp_k,
\end{align}
\noindent where $p_k$ is the probability that only the signals from the first $k$ BNs are successfully decoded.
\end{define}

The derivation of probability $p_k$ is very challenging. This is because the event that the signal from a BN in $i$-th subregion is successfully decoded is correlated with the event that the signal from the BN in the $i+1$-th subregion is unsuccessfully decoded. Thus, for analytical tractability, similar to most literatures~\cite{6954404,Psomas-2017}, we assume that each decoding step in the SIC is independent. We will show in Section~\ref{sec:result} that the independence assumption does not adversely affect the accuracy of the analysis. Based on this independence assumption, we can approximately express $p_k$ as
\begin{align}\label{eq:pkgeneral}
p_k\approx p_{\out}^{(k+1)}\prod_{i=1}^{k}\left(1-p_{\out}^{(i)}\right),
\end{align}
\noindent where $p_{\out}^{(i)}$ denotes the probability when the SINR of the $i$-th strongest signal (e.g., the signal from the BN in the $i$-th subregion) falls below $\gamma$ given that the $i-1$-th strongest signal is successfully decoded. Note that except $p_{\out}^{(1)}$, any $p_{\out}^{(i)}$ is the conditional outage probability.

Different from the previous two-node pairing scenario, there is no direct way to compute $p_{\out}^{(i)}$ when $N> 3$. Instead, we adopt the moment generating function (MGF)-based approach in~\cite{Guo-2013} to work out $p_{\out}^{(i)}$, which is presented in the following lemma.

\begin{lemma}
Based on our system model considered in Section~\ref{sec:system}, the probability that the signal from the $i$-th BN fails to be decoded, given that the $i-1$-th strongest signal is successfully decoded, is
\ifCLASSOPTIONpeerreview
\begin{align}\label{eq:general:mgf}
p_{\out}^{(i)}
&=1-\frac{2^{-\mathcal{B}}\exp(\frac{\mathcal{A}}{2})}{\gamma^{-1}}\sum_{b=0}^{\mathcal{B}}\binom{\mathcal{B}}{b} \sum_{c=0}^{\mathcal{C}+b}\frac{(-1)^c}{\mathcal{D}_c}\mathrm{Re} \left\{\frac{\mathcal{M}_{w_{i}}\left(s\right)}{s} \right\},
\end{align}
\noindent where $\mathcal{M}_{w_{i}}\left(s\right)\!=\!\!\!\mathlarger{\int}_{R_{i}}^{R_{i+1}}\exp\left(\frac{-sr_i^{2\alpha}\mathcal{N}}{P_T\xi_i}\right)\prod_{j=i+1}^{N}\frac{\left(sr_i^{2\alpha}\frac{\xi_{j}}{\xi_i}\right)^{\frac{1}{\alpha}}
\Gamma\left[-\frac{1}{\alpha},sr_i^{2\alpha}\frac{\xi_{j}}{\xi_iR_{j+1}^{2\alpha}},sr_i^{2\alpha}\frac{\xi_{j}}{\xi_iR_{j}^{2\alpha}}\right]}{\alpha(R_{j+1}^2-R_{j}^2)}\frac{2r_i}{R_{i+1}^2-R_{i}^2}\textup{d}r_i$. $\mathcal{D}_c= 2$ (if $c=0$) and $\mathcal{D}_c=1$ (if $c=1,2,\hdots$), $s=(\mathcal{A}+\mathbf{i}2\pi c)/(2\gamma^{-1})$. $\mathcal{A}$, $\mathcal{B}$ and $\mathcal{C}$ are three parameters employed to control the error estimation and following~\cite{Guo-2013}, we set $\mathcal{A} = 8 \ln 10$, $\mathcal{B} = 11$, $\mathcal{C} = 14$ in this work.
\else
\begin{align}\label{eq:general:mgf}
p_{\out}^{(i)}
&=1\!-\!\frac{2^{-\mathcal{B}}\exp(\frac{\mathcal{A}}{2})}{\gamma^{-1}}\sum_{b=0}^{\mathcal{B}}\binom{\mathcal{B}}{b} \sum_{c=0}^{\mathcal{C}+b}\frac{(-1)^c}{\mathcal{D}_c}\mathrm{Re} \left\{\!\frac{\mathcal{M}_{w_{i}}\left(s\right)}{s}\! \right\},
\end{align}
\noindent where $\mathcal{M}_{w_{i}}\left(s\right)\!=\mathlarger{\int}_{R_{i}}^{R_{i+1}}\exp\left(\frac{-sr_i^{2\alpha}\mathcal{N}}{P_T\xi_i}\right)$
$\times\prod_{j=i+1}^{N}\frac{\left(sr_i^{2\alpha}\frac{\xi_{j}}{\xi_i}\right)^{\frac{1}{\alpha}}
\Gamma\left[-\frac{1}{\alpha},sr_i^{2\alpha}\frac{\xi_{j}}{\xi_iR_{j+1}^{2\alpha}},sr_i^{2\alpha}\frac{\xi_{j}}{\xi_iR_{j}^{2\alpha}}\right]}{\alpha(R_{j+1}^2-R_{j}^2)}\frac{2r_i}{R_{i+1}^2-R_{i}^2}\textup{d}r_i$. $\mathcal{D}_c= 2$ (if $c=0$) and $\mathcal{D}_c=1$ (if $c=1,2,\hdots$), $s=(\mathcal{A}+\mathbf{i}2\pi c)/(2\gamma^{-1})$. $\mathcal{A}$, $\mathcal{B}$ and $\mathcal{C}$ are three parameters employed to control the error estimation and following~\cite{Guo-2013}, we set $\mathcal{A} = 8 \ln 10$, $\mathcal{B} = 11$, $\mathcal{C} = 14$ in this work.
\fi
\end{lemma}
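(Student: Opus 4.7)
The plan is to cast the outage event as a tail event for the random inverse-SINR, and then invert its Laplace transform numerically by the Euler/Abate--Whitt scheme used in \cite{Guo-2013}. First I would divide the numerator and denominator of $\textsf{SINR}_i$ by $P_T\xi_i r_i^{-2\alpha}$ and introduce the random variable
\[
w_i \;=\; \frac{r_i^{2\alpha}}{\xi_i}\!\left(\sum_{j=i+1}^{N}\xi_j r_j^{-2\alpha} + \frac{\mathcal{N}}{P_T}\right),
\]
so that $\textsf{SINR}_i = 1/w_i$, and (under the independence approximation already adopted around \eqref{eq:pkgeneral}) the conditional outage reduces to $p_{\out}^{(i)} = \Pr(w_i > 1/\gamma) = 1 - F_{w_i}(1/\gamma)$. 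Since $F_{w_i}$ is the inverse Laplace transform of $\mathcal{M}_{w_i}(s)/s$, the task splits cleanly into two parts: compute $\mathcal{M}_{w_i}(s) = \mathbb{E}[e^{-sw_i}]$ in closed form, then invert at the single point $t = 1/\gamma$.

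Next I would derive $\mathcal{M}_{w_i}(s)$ by conditioning on $r_i$ and exploiting the mutual independence of $r_{i+1},\dots,r_N$. The noise term factors out of the product as $\exp(-sr_i^{2\alpha}\mathcal{N}/(P_T\xi_i))$, and for each interferer the expectation reduces to the one-dimensional integral
\[
\int_{R_j}^{R_{j+1}} \exp\!\left(-\frac{sr_i^{2\alpha}\xi_j}{\xi_i r_j^{2\alpha}}\right)\frac{2r_j}{R_{j+1}^2-R_j^2}\,dr_j.
\]
The key manipulation is the substitution $u = sr_i^{2\alpha}\xi_j/(\xi_i r_j^{2\alpha})$, which (after the sign flip from the reversed limits) turns the integrand into a constant multiple of $u^{-1/\alpha-1}e^{-u}$ and therefore produces the generalized incomplete gamma function $\Gamma[-1/\alpha,\cdot,\cdot]$ with the two arguments $sr_i^{2\alpha}\xi_j/(\xi_i R_{j+1}^{2\alpha})$ and $sr_i^{2\alpha}\xi_j/(\xi_i R_j^{2\alpha})$ displayed in the lemma. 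Multiplying these per-interferer factors together, attaching the noise exponential, and then integrating over $r_i$ against the radial PDF $2r_i/(R_{i+1}^2-R_i^2)$ yields exactly the claimed expression for $\mathcal{M}_{w_i}(s)$.

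Finally I would apply the Euler-summation Laplace inversion from \cite{Guo-2013} to $\mathcal{M}_{w_i}(s)/s$ at $t = 1/\gamma$: this is precisely the double sum over $b$ and $c$ with binomial weights $\binom{\mathcal{B}}{b}/\mathcal{D}_c$, prefactor $2^{-\mathcal{B}}e^{\mathcal{A}/2}/\gamma^{-1}$, and shift point $s = (\mathcal{A}+\mathbf{i}2\pi c)/(2\gamma^{-1})$. Subtracting the resulting approximation of $F_{w_i}(1/\gamma)$ from unity delivers \eqref{eq:general:mgf}. The main obstacle is the per-interferer reduction to the incomplete gamma function: the substitution must be carried out with complex $s$ (needed in the inversion step), which forces one to stay on the principal branches of $u^{1/\alpha}$ and of $\Gamma[-1/\alpha,\cdot,\cdot]$, and one must carefully track the sign introduced by the limit reversal so that the prefactor $(sr_i^{2\alpha}\xi_j/\xi_i)^{1/\alpha}/\alpha$ emerges with the correct sign and the final $\mathrm{Re}\{\cdot\}$ operation in \eqref{eq:general:mgf} is unambiguous.
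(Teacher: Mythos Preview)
Your proposal is correct and follows essentially the same approach as the paper: define $w_i$ as the inverse SINR, compute its MGF $\mathcal{M}_{w_i}(s)$ by conditioning on $r_i$ and factoring over the independent interferer distances (yielding the incomplete gamma per factor), and then apply the Euler-summation Laplace inversion of \cite{Guo-2013} at $t=\gamma^{-1}$. The paper's own proof is in fact terser---it states the MGF derivation~\eqref{eq:mgfderi} without showing the substitution leading to $\Gamma[-1/\alpha,\cdot,\cdot]$---so your explicit identification of the change of variables and your remark about the complex-$s$ branch issues add useful detail rather than diverging from the intended argument.
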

\begin{proof}
Based on the MGF-approach, we can express $p_{\out}^{(i)}$ as
\ifCLASSOPTIONpeerreview
\begin{align}
p_{\out}^{(i)}&=\Pr\left(\textsf{SINR}_i<\gamma\right)=\Pr\left(\frac{P_T\xi_i r_i^{-2\alpha}}{\sum_{j=i+1}^{N}P_T\xi_j r_j^{-2\alpha}+\mathcal{N}}<\gamma\right)\nonumber\\
&=1-\frac{2^{-\mathcal{B}}\exp(\frac{\mathcal{A}}{2})}{\gamma^{-1}}\sum_{b=0}^{\mathcal{B}}\binom{\mathcal{B}}{b} \sum_{c=0}^{\mathcal{C}+b}\frac{(-1)^c}{\mathcal{D}_c}\mathrm{Re} \left\{\frac{\mathcal{M}_{w_{i}}\left(s\right)}{s} \right\},
\end{align}
\else
\begin{align}
p_{\out}^{(i)}&=\Pr\!\left(\textsf{SINR}_i<\gamma\right)=\Pr\!\left(\frac{P_T\xi_i r_i^{-2\alpha}}{\sum_{j=i+1}^{N}P_T\xi_j r_j^{-2\alpha}\!+\!\mathcal{N}}<\!\gamma\!\right)\nonumber\\
&=1\!-\!\frac{2^{-\mathcal{B}}\exp(\frac{\mathcal{A}}{2})}{\gamma^{-1}}\sum_{b=0}^{\mathcal{B}}\binom{\mathcal{B}}{b} \sum_{c=0}^{\mathcal{C}+b}\frac{(-1)^c}{\mathcal{D}_c}\mathrm{Re} \left\{\!\frac{\mathcal{M}_{w_{i}}\left(s\right)}{s} \!\right\},
\end{align}
\fi
\noindent where $w_{i}$ is the inverse $\textsf{SINR}_{i}$ and $\mathcal{M}_{w_{i}}\left(s\right)$ is its distribution's MGF.
\ifCLASSOPTIONpeerreview
Following the definition of MGF, we then can express the MGF of the distribution of $w_i$ as
\begin{align}\label{eq:mgfderi}
&\mathcal{M}_{w_i}\left(s\right)=\mathbb{E}_{r_i,r_{i+1},...,r_N}\left[\exp\left(-\frac{sr_i^{2\alpha}\mathcal{N}}{P_T\xi_i}-\sum_{j=i+1}^{N}sr_i^{2\alpha}r_{j}^{-\alpha}\frac{\xi_j}{\xi_i}\right)\right] \nonumber\\
&=\mathbb{E}_{r_i}\left[\exp\left(-\frac{sr_i^{2\alpha}\mathcal{N}}{P_T\xi_i}\right)\prod_{j=i+1}^{N}\int_{R_{j}}^{R_{j+1}}
\exp\left(-sr_i^{2\alpha}r_{j}^{-\alpha}\frac{\xi_j}{\xi_i}\right)\frac{2r_j}{R_{j+1}^2-R_j^2}\textup{d}r_j \right] \nonumber\\
&=\mathbb{E}_{r_i}\left[\exp\left(-\frac{sr_i^{2\alpha}\mathcal{N}}{P_T\xi_i}\right)\prod_{j=i+1}^{N}\frac{\left(sr_i^{2\alpha}\frac{\xi_{j}}{\xi_i}\right)^{\frac{1}{\alpha}}
\Gamma\left[-\frac{1}{\alpha},sr_i^{2\alpha}\frac{\xi_{j}}{\xi_iR_{j+1}^{2\alpha}},sr_i^{2\alpha}\frac{\xi_{j}}{\xi_iR_{j}^{2\alpha}}\right]}{\alpha(R_{j+1}^2-R_{j}^2)} \right].
\end{align}
\else
Following the definition of MGF, we then can express the MGF of the distribution of $w_i$ as in~\eqref{eq:mgfderi}, as shown at the top of next page.
   \begin{figure*}[!t]
\normalsize
\begin{align}\label{eq:mgfderi}
&\mathcal{M}_{w_i}\left(s\right)=\mathbb{E}_{r_i,r_{i+1},...,r_N}\!\left[\exp\!\!\left(-\frac{sr_i^{2\alpha}\mathcal{N}}{P_T\xi_i}\!-\!\sum_{j=i+1}^{N}sr_i^{2\alpha}\frac{\xi_j}{\xi_ir_{j}^{\alpha}}\!\right)\!\right] \!=\!\mathbb{E}_{r_i}\!\left[\exp\!\left(-\frac{sr_i^{2\alpha}\mathcal{N}}{P_T\xi_i}\right)\!\!\prod_{j=i+1}^{N}\!\int_{R_{j}}^{R_{j+1}}
\!\frac{\exp\!\left(-sr_i^{2\alpha}\frac{\xi_j}{\xi_ir_{j}^{\alpha}}\right)2r_j}{R_{j+1}^2-R_j^2}\textup{d}r_j \right] \nonumber\\
&=\mathbb{E}_{r_i}\left[\exp\left(-\frac{sr_i^{2\alpha}\mathcal{N}}{P_T\xi_i}\right)\prod_{j=i+1}^{N}\frac{\left(sr_i^{2\alpha}\frac{\xi_{j}}{\xi_i}\right)^{\frac{1}{\alpha}}
\Gamma\left[-\frac{1}{\alpha},sr_i^{2\alpha}\frac{\xi_{j}}{\xi_iR_{j+1}^{2\alpha}},sr_i^{2\alpha}\frac{\xi_{j}}{\xi_iR_{j}^{2\alpha}}\right]}{\alpha(R_{j+1}^2-R_{j}^2)} \right].
\end{align}
 \hrulefill
\vspace*{4pt}
\vspace{-0.05 in}
\end{figure*}
\fi

\end{proof}


\section{Two-Node Pairing Case with Fading}\label{sec:fading}
 In this section, we consider the analysis for two-node pairing by taking fading channel model into account. The block fading is considered, which indicates that the fading coefficient is unchanged within one time slot, but it may vary independently from one time slot to another time slot. The fading on the communication link is assumed to be the i.i.d. Nakagami-$m$ fading and let $g$ denote the fading power gain on the communication link that follows gamma distribution. Moreover, we assume that the downlink and uplink channels are reciprocal (i.e., the fading coefficient of the downlink channel is the transpose conjugate of the uplink channel and their fading amplitudes are the same)~\cite{Liu-2017}.

When fading is included, another type of randomness is added to the received signal. In this section, we consider two pairing approaches for power-domain NOMA, which are named as the region division and power division, respectively. Under the region division approach, the situation is similar to Section~\ref{sec:nofading}, where the reader pairs the BNs from the near subregion and far subregion. In the fading context, this approach requires the long term training to recognize the BNs either in the near subregion or in the far subregion. In terms of the power division approach, the reader pairs the BNs with the higher instantaneous backscattered power and the lower instantaneous backscattered power, which requires the instantaneous training to classify the BNs. Its explicit implementation will be explained in Section~\ref{sec:powerdivision}. Note these two approaches converge to the same one for the fading-free scenario.

 The average number of successfully decoded bits, $\bar{\mathcal{C}}_{\suc}$, is the metric investigated under the fading case. Its general expression is the same as~\eqref{eq:general:C} in Definition 1 for both approaches, while the key factors, such as $p_{\near}$, $p_1$, $p_2$, $\bar{\mathcal{M}}_{1\near}$ and $\bar{\mathcal{M}}_{1\far}$, are changed. The analysis for these factors are presented as follows and the summary is presented in Table~\ref{tb:1} (cf. Section~\ref{sec:fading:summary}).
\subsection{Region Division Approach}\label{sec:regiondevision}
For the region division approach, $p_{\near}$ is the probability that the BN is located in the near subregion, which is the same as Section~\ref{sec:nofading} and is equal to $p_{\near}=\frac{R_2^2-R_1^2}{R^2-R_1^2}$. The analysis of $p_k$ (i.e., the probability that the signals from $k$ BNs are successfully decoded) becomes complicated due to the consideration of fading.

Note that our considered SIC scheme is based on the instantaneous received power at the reader, i.e., $P_T\xi g^{2}r^{-2\alpha}$. Under the region division approach, the stronger signal may not come from the BN in the near subregion. Before deriving $p_k$, we first present the following lemma which shows the composite distribution of the random distance and fading.
\begin{lemma}\label{lemma:composite}
Let $r$ denote a random variable following the distribution of $f_r(r)=\frac{2r}{R_{u}^2-R_{l}^2}$, where $r\in[R_{l},R_{u}]$, and $g$ is a random variable following the gamma distribution, i.e., $f_g(g)=\frac{m^m g^{m-1}\exp(-m g)}{\Gamma[m]}$. The cumulative distribution function (CDF) and PDF for the composite random variable $x\triangleq g^{2}r^{-2\alpha}$ are given by
\ifCLASSOPTIONpeerreview
\begin{align}
 \Phi(x,R_l,R_u)&=1-\frac{R_u^2\Gamma\!\left[m,mR_u^{\alpha}\sqrt{x} \right]-R_l^2\Gamma\!\left[m,mR_l^{\alpha}\sqrt{x}\right]+\left(m\sqrt{x}\right)^{-\frac{2}{\alpha}}\Gamma\!\left[m+\frac{2}{\alpha},mR_l^{\alpha}\sqrt{x},mR_u^{\alpha}\sqrt{x}\right]}{\left(R_{u}^2-R_{l}^2\right)\Gamma[m]},\label{eq:cdf1} \\
  \phi(x,R_l,R_u)&=\frac{\Gamma\!\left[m+\frac{2}{\alpha},mR_l^{\alpha}\sqrt{x},mR_u^{\alpha}\sqrt{x}\right]}{m^{\frac{2}{\alpha}}x^{\frac{1}{\alpha}+1}\alpha\left(R_{u}^2-R_{l}^2\right)\Gamma[m]},
 \end{align}
 \else
 \begin{align}
 &\Phi(x,R_l,R_u)=1-\frac{R_u^2\Gamma\!\left[m,mR_u^{\alpha}\sqrt{x} \right]\!-\!R_l^2\Gamma\!\left[m,mR_l^{\alpha}\sqrt{x}\right]}{\left(R_{u}^2-R_{l}^2\right)\Gamma[m]}\nonumber\\
 &\quad\quad\quad-\frac{\left(m\sqrt{x}\right)^{-\frac{2}{\alpha}}\Gamma\!\left[m+\frac{2}{\alpha},mR_l^{\alpha}\sqrt{x},mR_u^{\alpha}\sqrt{x}\right]}{\left(R_{u}^2-R_{l}^2\right)\Gamma[m]}, \label{eq:cdf1} \\
  &\phi(x,R_l,R_u)=\frac{\Gamma\!\left[m+\frac{2}{\alpha},mR_l^{\alpha}\sqrt{x},mR_u^{\alpha}\sqrt{x}\right]}{m^{\frac{2}{\alpha}}x^{\frac{1}{\alpha}+1}\alpha\left(R_{u}^2-R_{l}^2\right)\Gamma[m]},
 \end{align}
 \fi
\noindent respectively.
\end{lemma}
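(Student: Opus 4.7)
The plan is to compute the CDF by conditioning on $r$, integrate out $r$ using integration by parts, and then obtain the PDF either by direct differentiation or by computing $\phi$ directly via the same conditioning-and-substitution technique.

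First I would write $x = g^2 r^{-2\alpha} \leq x$ as $g \leq r^\alpha \sqrt{x}$ and condition on $r$, so that
\begin{equation*}
\Phi(x,R_l,R_u)=\int_{R_l}^{R_u}\frac{2r}{R_u^2-R_l^2}F_g\bigl(r^\alpha\sqrt{x}\bigr)\,\mathrm{d}r,
\end{equation*}
where $F_g(y)=1-\Gamma[m,my]/\Gamma[m]$ is the gamma CDF. The ``$1$'' part integrates trivially to $1$, leaving the task of evaluating $J=\int_{R_l}^{R_u} 2r\,\Gamma[m,mr^\alpha\sqrt{x}]\,\mathrm{d}r$.

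Next I would apply integration by parts with $u=\Gamma[m,mr^\alpha\sqrt{x}]$ and $\mathrm{d}v=2r\,\mathrm{d}r$, using $\partial_y\Gamma[m,y]=-y^{m-1}e^{-y}$. The boundary term gives exactly $R_u^2\Gamma[m,mR_u^\alpha\sqrt{x}]-R_l^2\Gamma[m,mR_l^\alpha\sqrt{x}]$, and the remaining integral reduces, after the change of variable $y=mr^\alpha\sqrt{x}$, to
\begin{equation*}
(m\sqrt{x})^{-2/\alpha}\int_{mR_l^\alpha\sqrt{x}}^{mR_u^\alpha\sqrt{x}}y^{m+2/\alpha-1}e^{-y}\,\mathrm{d}y,
\end{equation*}
which is the generalized incomplete gamma $\Gamma[m+2/\alpha,\,mR_l^\alpha\sqrt{x},\,mR_u^\alpha\sqrt{x}]$. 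Substituting back into $\Phi=1-J/[(R_u^2-R_l^2)\Gamma[m]]$ yields \eqref{eq:cdf1}.

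For the PDF I would bypass differentiating the three-term CDF expression (which is error-prone) and instead compute $\phi$ directly. Writing $\phi(x,R_l,R_u)=\int_{R_l}^{R_u}f_r(r)\,f_g\!\bigl(r^\alpha\sqrt{x}\bigr)\cdot\frac{r^\alpha}{2\sqrt{x}}\,\mathrm{d}r$, plugging in the gamma density, and applying the same substitution $y=mr^\alpha\sqrt{x}$ collapses all the $m$, $\sqrt{x}$ and $r$ powers into a single generalized incomplete gamma function, producing the stated form. The main obstacle is purely bookkeeping: tracking the exponents of $m$, $\sqrt{x}$, and $r$ through the substitution so that the prefactor $m^{-2/\alpha}x^{-(1+1/\alpha)}\alpha^{-1}$ comes out correctly; there are no convergence or measurability issues since everything is finite on $[R_l,R_u]\times\mathbb{R}_+$. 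As a sanity check, differentiating \eqref{eq:cdf1} in $x$ and using $\partial_x\Gamma[m,my]=-\tfrac{1}{2}m^m x^{m/2-1}r^{\alpha m}e^{-mr^\alpha\sqrt{x}}/\sqrt{x}$ (with the boundary contributions from $R_u^2$ and $R_l^2$ cancelling against the endpoint values of the $\Gamma[m+2/\alpha,\cdot,\cdot]$ term) should reproduce the PDF expression, confirming consistency.
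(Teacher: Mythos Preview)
Your proposal is correct and follows the same conditioning-on-$r$ approach as the paper, which simply writes $\Phi(x,R_l,R_u)=\int_{R_l}^{R_u}\frac{\Gamma[m,0,m\sqrt{x}r^{\alpha}]}{\Gamma[m]}\frac{2r}{R_u^2-R_l^2}\,\textup{d}r$ and states the result; you have in fact supplied the integration-by-parts and substitution details that the paper omits. The only minor difference is that the paper obtains $\phi$ by differentiating $\Phi$ in $x$, whereas you compute it directly from $\int f_r(r)f_g(r^\alpha\sqrt{x})\tfrac{r^\alpha}{2\sqrt{x}}\,\textup{d}r$; both routes are valid and your version arguably involves less cancellation to track.
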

\begin{proof}
The CDF of $x$ can be written as
\ifCLASSOPTIONpeerreview
\begin{align}
\Phi(x,R_l,R_u)&=\Pr\left(g^{2}r^{-2\alpha}< x\right) =\mathbb{E}_{r}\left\{\Pr\left(g< r^{\alpha}\sqrt{x}\right)\right\}\nonumber\\
&=\int_{R_l}^{R_u}\frac{\Gamma\left[m,0,m\sqrt{x}r^{\alpha}\right]}{\Gamma[m]}\frac{2r}{R_{u}^2-R_{l}^2}\, \textup{d}r\nonumber\\
&=1-\frac{R_u^2\Gamma\!\left[m,mR_u^{\alpha}\sqrt{x} \right]-R_l^2\Gamma\!\left[m,mR_l^{\alpha}\sqrt{x}\right]+\left(m\sqrt{x}\right)^{-\frac{2}{\alpha}}\Gamma\!\left[m+\frac{2}{\alpha},mR_l^{\alpha}\sqrt{x},mR_u^{\alpha}\sqrt{x}\right]}{\left(R_{u}^2-R_{l}^2\right)\Gamma[m]}.
\end{align}
\else
\begin{align}
&\Phi(x,R_l,R_u)=\Pr\left(g^{2}r^{-2\alpha}< x\right) =\mathbb{E}_{r}\left\{\Pr\left(g< r^{\alpha}\sqrt{x}\right)\right\}\nonumber\\
&\quad\quad=\int_{R_l}^{R_u}\frac{\Gamma\left[m,0,m\sqrt{x}r^{\alpha}\right]}{\Gamma[m]}\frac{2r}{R_{u}^2-R_{l}^2}\, \textup{d}r\nonumber\\
&\quad\quad=1-\frac{R_u^2\Gamma\!\left[m,mR_u^{\alpha}\sqrt{x} \right]-R_l^2\Gamma\!\left[m,mR_l^{\alpha}\sqrt{x}\right]}{\left(R_{u}^2-R_{l}^2\right)\Gamma[m]}\nonumber\\
 &\quad\quad\quad-\frac{\left(m\sqrt{x}\right)^{-\frac{2}{\alpha}}\Gamma\!\left[m+\frac{2}{\alpha},mR_l^{\alpha}\sqrt{x},mR_u^{\alpha}\sqrt{x}\right]}{\left(R_{u}^2-R_{l}^2\right)\Gamma[m]}.
\end{align}
\fi
Taking the derivative of $\Phi(x,R_l,R_u)$ with respect to $x$, we obtain its PDF.
\end{proof}

According to Lemma~\ref{lemma:composite} and probability theory, the key elements for the region division approach with fading are shown in the following lemmas.
\ifCLASSOPTIONpeerreview
\begin{lemma}
Based on our system model considered in Sections~\ref{sec:system} and~\ref{sec:fading}, under the fading scenario with region division approach, the probability that the signals from two BNs are successfully decoded and the probability that the signal from only one BN is successfully decoded are given by
\begin{align}\label{eq:general:p2}
p_2=\left\{ \begin{array}{ll}
       \mathop{\mathlarger{\int}}_{\!\!\frac{\mathcal{N}\gamma}{P_T\xi_2(1-\gamma)}}^{\infty}\!\left(1-\Phi^A\!\left(\kappa x_2\right)\right)\phi^B(x_2)\textup{d}x_2+ \mathop{\mathlarger{\int}}_{\!\!\frac{\mathcal{N}\gamma}{P_T\xi_2}}^{\frac{\mathcal{N}\gamma}{P_T\xi_2(1-\gamma)}}\!\left(1-\Phi^A\!\left(\gamma\kappa x_2\!+\!\frac{\mathcal{N}\gamma}{P_T\xi_1}\right)\!\right)\phi^B(x_2)\textup{d}x_2 \\
       +\mathop{\mathlarger{\int}}_{\!\!\frac{\mathcal{N}\gamma}{P_T\xi_1(1-\gamma)}}^{\infty}\!\left(1-\Phi^B\!\left(\frac{x_1}{\kappa}\right)\right)\phi^A(x_1)\textup{d}x_1+ \mathop{\mathlarger{\int}}_{\!\!\frac{\mathcal{N}\gamma}{P_T\xi_1}}^{\frac{\mathcal{N}\gamma}{P_T\xi_1(1-\gamma)}}\!\left(1-\Phi^B\!\left(\frac{\gamma x_1}{\kappa}\!+\!\frac{\mathcal{N}\gamma}{P_T\xi_2}\right)\!\right)\phi^A(x_1)\textup{d}x_1, &{\gamma<1 ;}\\
        \mathop{\mathlarger{\int}}_{\!\!\frac{\mathcal{N}\gamma}{P_T\xi_2}}^{\infty}\!\left(1-\Phi^A\!\left(\gamma\kappa x_2\!+\!\frac{\mathcal{N}\gamma}{P_T\xi_1}\right)\!\right)\phi^B(x_2)\textup{d}x_2 + \mathop{\mathlarger{\int}}_{\!\!\frac{\mathcal{N}\gamma}{P_T\xi_1}}^{\infty}\!\left(1-\Phi^B\!\left(\frac{\gamma x_1}{\kappa}\!+\!\frac{\mathcal{N}\gamma}{P_T\xi_2}\right)\!\right)\phi^A(x_1)\textup{d}x_1, &{\gamma \geq 1 ;}\\
                    \end{array} \right.
\end{align}
\begin{align}\label{eq:general:p1}
p_1&= \mathlarger{\int}_0^{\!\!\frac{\mathcal{N}\gamma}{P_T\xi_2}}\!\left(1-\Phi^A\!\left(\gamma\kappa x_2\!+\!\frac{\mathcal{N}\gamma}{P_T\xi_1}\right)\!\right)\phi^B(x_2)\textup{d}x_2 + \mathlarger{\int}_0^{\!\!\frac{\mathcal{N}\gamma}{P_T\xi_1}}\!\left(1-\Phi^B\!\left(\frac{\gamma x_1}{\kappa}\!+\!\frac{\mathcal{N}\gamma}{P_T\xi_2}\right)\!\right)\phi^A(x_1)\textup{d}x_1,
\end{align}
\noindent respectively, where $\kappa=\xi_2/\xi_1$, $\Phi^A(x_1)\triangleq\Phi(x_1,R_1,R_2)$, $\phi^A(x_1)\triangleq\phi(x_1,R_1,R_2)$, $\Phi^B(x_2)\triangleq\Phi(x_2,R_2,R)$ and $\phi^B(x_2)\triangleq\phi(x_2,R_2,R)$. $\Phi(\cdot,\cdot,\cdot)$ and $\phi(\cdot,\cdot,\cdot)$ are defined in Lemma~\ref{lemma:composite}.
\end{lemma}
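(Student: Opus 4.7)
The plan is to unify notation with Lemma~\ref{lemma:composite} by writing, for each paired BN, the composite random variable $x_i \triangleq g_i^{2} r_i^{-2\alpha}$, so that the instantaneous received backscattered power at the reader from the $i$-th BN is $P_T\xi_i x_i$. Because the BN in the near subregion has $r_1\in[R_1,R_2]$ and the BN in the far subregion has $r_2\in[R_2,R]$, and the fading powers are independent, Lemma~\ref{lemma:composite} supplies the marginal densities $\phi^A,\phi^B$ and CDFs $\Phi^A,\Phi^B$ appearing in the statement, and $x_1$ and $x_2$ are independent.

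Next I would partition the probability space according to which signal is stronger at the reader, since the SIC decodes strongest first. The two events are $\{\xi_1 x_1>\xi_2 x_2\}$ (near is stronger) and $\{\xi_2 x_2>\xi_1 x_1\}$ (far is stronger), which in terms of $\kappa=\xi_2/\xi_1$ read $x_1>\kappa x_2$ and $x_2>x_1/\kappa$ respectively. In the first sub-event, $p_2$ requires both the strong-signal SINR condition $\xi_1 x_1\ge\gamma(\xi_2 x_2+\mathcal{N}/P_T)$ and the residual SNR condition $x_2\ge\mathcal{N}\gamma/(P_T\xi_2)$; in the second sub-event the roles of $1$ and $2$ swap. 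For $p_1$ the residual SNR condition is replaced by its complement, so the stronger signal succeeds while the weaker one is in outage after SIC.

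The main obstacle is the interaction between the "dominance" constraint $x_1>\kappa x_2$ and the SINR constraint $x_1\ge\gamma\kappa x_2+\mathcal{N}\gamma/(P_T\xi_1)$: which one is binding depends on the sign of $\gamma-1$ and on the magnitude of the opposite variable. Comparing the two right-hand sides shows that $\gamma\kappa x_2+\mathcal{N}\gamma/(P_T\xi_1)>\kappa x_2$ iff $x_2<\mathcal{N}\gamma/\bigl(P_T\xi_2(1-\gamma)\bigr)$ when $\gamma<1$, and holds for all $x_2$ when $\gamma\ge 1$. This dictates the two-piece split in~\eqref{eq:general:p2}: for $\gamma<1$ the $x_2$-integral splits at $\mathcal{N}\gamma/(P_T\xi_2(1-\gamma))$, the SINR condition dominating below and the dominance condition dominating above; for $\gamma\ge 1$ the SINR condition is always binding and the integral has a single piece. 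The corresponding split applies symmetrically to the $x_1$-integral for the far-stronger event.

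Once the binding region has been identified, each contribution is $\int (1-\Phi^{A/B}(\cdot))\,\phi^{B/A}(\cdot)\,\mathrm{d}(\cdot)$ by independence of $x_1$ and $x_2$ and the definition of the marginal CDFs; summing the two symmetric events gives~\eqref{eq:general:p2}. For $p_1$, because we only keep configurations where the weaker signal is in outage, the integration range in $x_2$ (resp.\ $x_1$) collapses to $[0,\mathcal{N}\gamma/(P_T\xi_2)]$ (resp.\ $[0,\mathcal{N}\gamma/(P_T\xi_1)]$); on this sub-interval the SINR constraint is always the binding one, so the branching in $\gamma$ disappears and we obtain the single-piece formula~\eqref{eq:general:p1}. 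The remaining step is bookkeeping: verify that the lower endpoints $\mathcal{N}\gamma/(P_T\xi_i)$ and $\mathcal{N}\gamma/\bigl(P_T\xi_i(1-\gamma)\bigr)$ match those in the statement and that the dominance/SINR regions are mutually disjoint and exhaustive, so no double counting occurs.
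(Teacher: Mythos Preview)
Your proposal is correct and follows essentially the same approach as the paper: the paper likewise decomposes $p_2$ into the two events $\{\xi_1 x_1\ge \xi_2 x_2\}$ and $\{\xi_1 x_1<\xi_2 x_2\}$, writes each as an expectation over the weaker variable of a tail probability in the stronger one (with the constraints $x_1\in(\kappa x_2,\infty)$, $x_2\in(\mathcal{N}\gamma/(P_T\xi_2),\infty)$), and then refers to the Appendix~A case analysis to resolve which of the dominance and SINR inequalities is binding. Your explicit comparison of $\gamma\kappa x_2+\mathcal{N}\gamma/(P_T\xi_1)$ with $\kappa x_2$, yielding the breakpoint $\mathcal{N}\gamma/(P_T\xi_2(1-\gamma))$ and the $\gamma\lessgtr 1$ split, is exactly the content of that deferred step, and your observation that on $[0,\mathcal{N}\gamma/(P_T\xi_2)]$ the SINR constraint always dominates (so $p_1$ has no $\gamma$-branching) is the correct justification for the single-piece form of~\eqref{eq:general:p1}.
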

\else
\begin{lemma}
   \begin{figure*}[!t]
\normalsize
\begin{align}\label{eq:general:p2}
p_2=\left\{ \begin{array}{ll}
       \mathop{\mathlarger{\int}}_{\!\!\frac{\mathcal{N}\gamma}{P_T\xi_2(1-\gamma)}}^{\infty}\!\left(1-\Phi^A\!\left(\kappa x_2\right)\right)\phi^B(x_2)\textup{d}x_2+ \mathop{\mathlarger{\int}}_{\!\!\frac{\mathcal{N}\gamma}{P_T\xi_2}}^{\frac{\mathcal{N}\gamma}{P_T\xi_2(1-\gamma)}}\!\left(1-\Phi^A\!\left(\gamma\kappa x_2\!+\!\frac{\mathcal{N}\gamma}{P_T\xi_1}\right)\!\right)\phi^B(x_2)\textup{d}x_2 \\
       +\mathop{\mathlarger{\int}}_{\!\!\frac{\mathcal{N}\gamma}{P_T\xi_1(1-\gamma)}}^{\infty}\!\left(1-\Phi^B\!\left(\frac{x_1}{\kappa}\right)\right)\phi^A(x_1)\textup{d}x_1+ \mathop{\mathlarger{\int}}_{\!\!\frac{\mathcal{N}\gamma}{P_T\xi_1}}^{\frac{\mathcal{N}\gamma}{P_T\xi_1(1-\gamma)}}\!\left(1-\Phi^B\!\left(\frac{\gamma x_1}{\kappa}\!+\!\frac{\mathcal{N}\gamma}{P_T\xi_2}\right)\!\right)\phi^A(x_1)\textup{d}x_1, &{\gamma<1 ;}\\
        \mathop{\mathlarger{\int}}_{\!\!\frac{\mathcal{N}\gamma}{P_T\xi_2}}^{\infty}\!\left(1-\Phi^A\!\left(\gamma\kappa x_2\!+\!\frac{\mathcal{N}\gamma}{P_T\xi_1}\right)\!\right)\phi^B(x_2)\textup{d}x_2 + \mathop{\mathlarger{\int}}_{\!\!\frac{\mathcal{N}\gamma}{P_T\xi_1}}^{\infty}\!\left(1-\Phi^B\!\left(\frac{\gamma x_1}{\kappa}\!+\!\frac{\mathcal{N}\gamma}{P_T\xi_2}\right)\!\right)\phi^A(x_1)\textup{d}x_1, &{\gamma \geq 1 ;}\\
                    \end{array} \right.
\end{align}
\begin{align}\label{eq:general:p1}
p_1&= \mathlarger{\int}_0^{\!\!\frac{\mathcal{N}\gamma}{P_T\xi_2}}\!\left(1-\Phi^A\!\left(\gamma\kappa x_2\!+\!\frac{\mathcal{N}\gamma}{P_T\xi_1}\right)\!\right)\phi^B(x_2)\textup{d}x_2 + \mathlarger{\int}_0^{\!\!\frac{\mathcal{N}\gamma}{P_T\xi_1}}\!\left(1-\Phi^B\!\left(\frac{\gamma x_1}{\kappa}\!+\!\frac{\mathcal{N}\gamma}{P_T\xi_2}\right)\!\right)\phi^A(x_1)\textup{d}x_1.
\end{align}
 \hrulefill
\vspace*{4pt}
\vspace{-0.05 in}
\end{figure*}
Based on our system model considered in Sections~\ref{sec:system} and~\ref{sec:fading}, under the fading scenario with region division approach, the probability that the signals from two BNs are successfully decoded and the probability that the signal from only one BN is successfully decoded are given by~\eqref{eq:general:p2} and~\eqref{eq:general:p1}, respectively, at the top of this page, where $\kappa=\xi_2/\xi_1$, $\Phi^A(x_1)\triangleq\Phi(x_1,R_1,R_2)$, $\phi^A(x_1)\triangleq\phi(x_1,R_1,R_2)$, $\Phi^B(x_2)\triangleq\Phi(x_2,R_2,R)$ and $\phi^B(x_2)\triangleq\phi(x_2,R_2,R)$. $\Phi(\cdot,\cdot,\cdot)$ and $\phi(\cdot,\cdot,\cdot)$ are defined in Lemma~\ref{lemma:composite}.
\end{lemma}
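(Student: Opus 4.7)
The plan is to decompose both probabilities according to which BN produces the stronger instantaneous received signal at the reader, and then, within each case, to combine the SINR/SNR conditions dictated by the SIC procedure with this ordering constraint. I introduce the composite random variables $x_i\triangleq g_i^{2}r_i^{-2\alpha}$ for $i=1,2$ (near and far), which by Lemma~\ref{lemma:composite} admit the marginal CDF/PDF pairs $(\Phi^A,\phi^A)$ and $(\Phi^B,\phi^B)$ respectively; the instantaneous backscattered power arriving at the reader from BN~$i$ is then $P_T\xi_i x_i$. Since $\xi_1 x_1>\xi_2 x_2 \Leftrightarrow x_1>\kappa x_2$, the sample space splits into Case~A (near stronger) and Case~B (far stronger), which will produce the two symmetric pairs of terms in~\eqref{eq:general:p2} and the two symmetric terms in~\eqref{eq:general:p1}.

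In Case~A, decoding the stronger (near) signal requires $\tfrac{P_T\xi_1 x_1}{P_T\xi_2 x_2+\mathcal{N}}\geq\gamma$, i.e., $x_1\geq \gamma\kappa x_2+\tfrac{\mathcal{N}\gamma}{P_T\xi_1}$, and recovery of the weaker post-SIC signal requires $x_2\geq \tfrac{\mathcal{N}\gamma}{P_T\xi_2}$. For $p_2$, I intersect these with the ordering constraint $x_1>\kappa x_2$. The pivotal observation is that the two lower bounds on $x_1$ cross exactly at $x_2=\tfrac{\mathcal{N}\gamma}{P_T\xi_2(1-\gamma)}$: for $\gamma<1$ and $x_2$ above the crossing, the ordering bound is tighter, producing the integrand $1-\Phi^A(\kappa x_2)$; for $x_2$ between $\tfrac{\mathcal{N}\gamma}{P_T\xi_2}$ and the crossing, the SINR bound is tighter, producing $1-\Phi^A(\gamma\kappa x_2+\tfrac{\mathcal{N}\gamma}{P_T\xi_1})$; and for $\gamma\geq 1$ the SINR bound is globally tighter (since $(\gamma-1)\kappa x_2\geq 0$), collapsing the region to a single piece. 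Applying Fubini and the independence of $x_1,x_2$ then yields the first two (resp.\ first) Case~A integrals of~\eqref{eq:general:p2}; Case~B follows under the symmetric substitution $1\leftrightarrow 2$.

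For $p_1$ I require the stronger-signal SINR condition together with the failure of the weaker SNR condition, which in Case~A reads $x_2<\tfrac{\mathcal{N}\gamma}{P_T\xi_2}$. Because this upper bound is strictly below the crossing value $\tfrac{\mathcal{N}\gamma}{P_T\xi_2(1-\gamma)}$ whenever $\gamma<1$, and the crossing simply disappears when $\gamma\geq 1$, the SINR bound dominates the ordering bound throughout the entire integration range. Consequently no piecewise split is needed: a single integral per case suffices, and the symmetric Cases A and B produce the two terms of~\eqref{eq:general:p1}.

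The main obstacle will be the bookkeeping in the second paragraph: correctly identifying the crossing at $x_2=\tfrac{\mathcal{N}\gamma}{P_T\xi_2(1-\gamma)}$, verifying the reverse inclusion between the ordering and SINR half-planes on either side of it, and confirming that the $\gamma\geq 1$ branch is a genuine single-piece limiting form rather than a separate geometric case. Once the integration regions in the $(x_1,x_2)$-plane are laid out, the remainder is a routine invocation of Lemma~\ref{lemma:composite}, since each inner integration reduces to evaluating $1-\Phi^{A/B}$ at the appropriate threshold, weighted against the outer density $\phi^{B/A}$.
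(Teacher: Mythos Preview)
Your proposal is correct and follows essentially the same approach as the paper: both decompose $p_2$ (and $p_1$) according to which BN yields the stronger instantaneous received power, introduce the composite variables $x_i=g_i^{2}r_i^{-2\alpha}$ with the distributions from Lemma~\ref{lemma:composite}, and then integrate the inner tail probability against the outer density. Your treatment is in fact more explicit than the paper's, which merely writes the decomposition $p_2=p_2^A+p_2^B$, records the constraints $x_1\in(\kappa x_2,\infty)$ and $x_2\in(\tfrac{\mathcal{N}\gamma}{P_T\xi_2},\infty)$, and then defers to ``the similar procedure as presented in Appendix~A''; you spell out the crossing at $x_2=\tfrac{\mathcal{N}\gamma}{P_T\xi_2(1-\gamma)}$ and the resulting $\gamma\lessgtr 1$ split that the paper leaves implicit.
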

\fi
\begin{proof}
 \ifCLASSOPTIONpeerreview
  In order to ensure that the signals from both paired BNs are successfully decoded, it requires both of the SINR from the stronger signal and the SNR from the weaker signal to be greater than the channel threshold. Based on the decoding order, $p_{2}$ can be decomposed into
\begin{align}\label{eq:deliever1}
p_{2}=&\underbrace{\Pr\!\left(\frac{P_T\xi_1g_1^{2}r_1^{-2\alpha}}{P_T\xi_2g_2^{2}r_2^{-2\alpha}+\mathcal{N}}\geq\gamma \,\,\&\& \,\,\frac{P_T\xi_2g_2^{2}r_2^{-2\alpha}}{\mathcal{N}}\geq\gamma\,\, \&\&\,\, \frac{\xi_1g_1^{2}}{r_1^{2\alpha}}\geq \frac{\xi_2g_2^{2}}{r_2^{2\alpha}}\right)}_{p_2^A} \nonumber \\
&+\underbrace{\Pr\!\left(\frac{P_T\xi_2g_2^{2}r_2^{-2\alpha}}{P_T\xi_1g_1^{2}r_1^{-2\alpha}+\mathcal{N}}\geq\gamma \,\,\&\& \,\,\frac{P_T\xi_1g_1^{2}r_1^{-2\alpha}}{\mathcal{N}}\geq\gamma\,\, \&\& \,\, \frac{\xi_1g_1^{2}}{r_1^{2\alpha}}<\frac{\xi_2g_2^{2}}{r_2^{2\alpha}}\right)}_{p_2^B},
\end{align}
\noindent where $x_1\triangleq r_1^{2\alpha}/g_1^{2}$, $x_2\triangleq r_2^{2\alpha}/g_2^{2}$, $g_1$ and $g_2$ represent the fading power gain for the BN from the near subregion and far subregion, respectively.
\else
   \begin{figure*}[!t]
\normalsize
\begin{align}\label{eq:deliever1}
p_{2}=&\underbrace{\Pr\!\left(\frac{P_T\xi_1g_1^{2}r_1^{-2\alpha}}{P_T\xi_2g_2^{2}r_2^{-2\alpha}+\mathcal{N}}\geq\gamma \,\,\&\& \,\,\frac{P_T\xi_2g_2^{2}r_2^{-2\alpha}}{\mathcal{N}}\geq\gamma\,\, \&\&\,\, \frac{\xi_1g_1^{2}}{r_1^{2\alpha}}\geq \frac{\xi_2g_2^{2}}{r_2^{2\alpha}}\right)}_{p_2^A} \nonumber \\
&+\underbrace{\Pr\!\left(\frac{P_T\xi_2g_2^{2}r_2^{-2\alpha}}{P_T\xi_1g_1^{2}r_1^{-2\alpha}+\mathcal{N}}\geq\gamma \,\,\&\& \,\,\frac{P_T\xi_1g_1^{2}r_1^{-2\alpha}}{\mathcal{N}}\geq\gamma\,\, \&\& \,\, \frac{\xi_1g_1^{2}}{r_1^{2\alpha}}<\frac{\xi_2g_2^{2}}{r_2^{2\alpha}}\right)}_{p_2^B}.
\end{align}
 \hrulefill
\vspace*{4pt}
\vspace{-0.05 in}
\end{figure*}
 In order to ensure that the signals from both paired BNs are successfully decoded, it requires both of the SINR from the stronger signal and the SNR from the weaker signal to be greater than the channel threshold. Based on the decoding order, $p_{2}$ can be decomposed into~\eqref{eq:deliever1}, as shown at the top of next page, where $x_1\triangleq r_1^{2\alpha}/g_1^{2}$, $x_2\triangleq r_2^{2\alpha}/g_2^{2}$, $g_1$ and $g_2$ represent the fading power gain for the BN from the near subregion and far subregion, respectively.
\fi

Let us consider $p_2^A$ firstly, which is the probability that both BNs are successfully decoded when the signal from the near BN is decoded at first. The condition of the signal from the near BN being decoded at first is $\xi_1 x_1\geq\xi_2 x_2$ (equivalently, $x_1\geq \kappa x_2$). Additionally, the condition that the signal from the far BN is successfully decoded is that $x_2$ must be greater than $\frac{\mathcal{N}\gamma}{P_T\xi_2}$. Then, we can express $p_2^A$ as
\ifCLASSOPTIONpeerreview
\begin{align}\label{eq:p2_derive}
p_2^A&=\Pr\left(\frac{x_1}{\kappa x_2+\frac{\mathcal{N}}{P_T\xi_1}}\geq\gamma\right)=\mathbb{E}_{x_2}\left\{\Pr\left(x_1\geq \gamma\kappa x_2+\frac{\mathcal{N}\gamma}{P_T\xi_1} \right)\right\},
\end{align}
\else
\begin{align}\label{eq:p2_derive}
p_2^A&=\Pr\left(\frac{x_1}{\kappa x_2+\frac{\mathcal{N}}{P_T\xi_1}}\geq\gamma\right)\nonumber\\
&=\mathbb{E}_{x_2}\left\{\Pr\left(x_1\geq \gamma\kappa x_2+\frac{\mathcal{N}\gamma}{P_T\xi_1} \right)\right\},
\end{align}
\fi
\noindent where $x_1\in\left(\kappa x_2,\infty\right)$ and $x_2\in\left(\frac{\mathcal{N}\gamma}{P_T\xi_2},\infty\right)$.

Then following the similar procedure as presented in Appendix A, we obtain the expression of $p_2^A$. $p_2^B$ can be derived using the same procedure. After combining these two results, we arrive at the final result in~\eqref{eq:general:p2}. The derivation of $p_1$ is similar.
\end{proof}

Due to the complexity of functions $\Phi$ and $\phi$, it is not possible to obtain the closed-form results. But the single-fold integration can be easily numerically evaluated using standard mathematical packages such as Mathematica or Matlab.

\begin{lemma}
Based on our system model considered in Sections~\ref{sec:system} and~\ref{sec:fading}, under the fading scenario with region division approach, the average number of successful BNs given that only one BN from the near subregion accesses the reader is
\begin{align}\label{eq:m1nr}
\bar{\mathcal{M}}_{1\near}&=1-\Phi^A\left(\frac{\mathcal{N}\gamma}{P_T\xi_1}\right),
\end{align}
\noindent and the average number of successful BNs given that only one BN from the far subregion accesses the reader is
\begin{align}\label{eq:m1fr}
\bar{\mathcal{M}}_{1\far}&=1-\Phi^B\left(\frac{\mathcal{N}\gamma}{P_T\xi_2}\right).
\end{align}
\end{lemma}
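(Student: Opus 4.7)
The proof plan here is direct because the ``single BN'' scenario collapses to a one-variable probability calculation with no interference term. When a single BN in the near subregion accesses the reader alone, the only random quantities in its link are the composite gain $g_1^2 r_1^{-2\alpha}$, where $r_1$ is uniformly (in the annular sense) distributed on $[R_1,R_2]$ and $g_1$ is Nakagami-$m$ distributed. I would therefore start from the definition of the performance metric and write
\begin{align*}
\bar{\mathcal{M}}_{1\near} &= \mathbb{E}_{r_1,g_1}\!\left[\mathbf{1}\!\left(\frac{P_T\xi_1 g_1^{2}r_1^{-2\alpha}}{\mathcal{N}}\geq \gamma\right)\right] \\
&= \Pr\!\left(g_1^{2}r_1^{-2\alpha} \geq \frac{\mathcal{N}\gamma}{P_T\xi_1}\right).
\end{align*}

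Next I would recognize that the random variable on the left is exactly the composite variable $x$ for which Lemma~\ref{lemma:composite} supplies the CDF, taking the interval $[R_l,R_u]=[R_1,R_2]$ so that the CDF is $\Phi^A$. Using $\Pr(x \geq t) = 1 - \Phi^A(t)$ with the threshold $t = \mathcal{N}\gamma/(P_T\xi_1)$ yields the claimed expression~\eqref{eq:m1nr}. The same argument with $[R_l,R_u]=[R_2,R]$ and reflection coefficient $\xi_2$ gives~\eqref{eq:m1fr}; I would simply state this as ``following the same steps'' rather than repeat them.

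There is no real obstacle: no interference is present, so no SIC argument or joint event decomposition is needed, and Lemma~\ref{lemma:composite} already encapsulates the averaging over both distance and fading. The only care point is to plug the correct radii and reflection coefficient into $\Phi(\cdot,\cdot,\cdot)$, and to note that $\mathbf{1}(\cdot)$ is not required in the fading version because the event $P_T\xi_i R_i^{-2\alpha}/\mathcal{N}\geq \gamma$ from the fading-free case is now automatically absorbed into the range of the random variable (large enough fading realizations can always push the SNR above threshold, so the indicator becomes redundant and $\Phi^A$ evaluates to unity when the threshold is too large).
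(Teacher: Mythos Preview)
Your proposal is correct and matches the paper's approach: the paper simply states that the derivation is similar to the proof of Lemma~2 (the fading-free single-access case) and omits the details, and your write-up is precisely the fading analogue of that argument, invoking Lemma~\ref{lemma:composite} to evaluate the single SNR tail probability. Your additional remark about why no indicator function is needed in the fading version is a nice observation beyond what the paper spells out.
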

Since the derivation is similar to the proof of Lemma 2, we skip it here for the sake of brevity.

\subsection{Power Division Approach}\label{sec:powerdivision}
\ifCLASSOPTIONpeerreview
\else
   \begin{figure*}[!t]
\normalsize
\begin{align}\label{eq:p2newnew}
p_2\!=\left\{ \begin{array}{ll}
       0,&{\tilde{\beta}\leq\frac{\mathcal{N}\gamma}{P_T\xi_2} ;}\\
       \mathlarger{\int}_{\textrm{min}\!\left\{\tilde{\beta},\textrm{max}\!\left\{\frac{\mathcal{N}\gamma}{P_T\xi_2},\frac{\tilde{\beta}}{\gamma\kappa}-\frac{\mathcal{N}}{P_T\xi_2}\right\}\right\}}^{\tilde{\beta}}\left(1-\frac{\Phi\left(\gamma\kappa x_2'+\frac{\mathcal{N}\gamma}{P_T\xi_1},R_1,R\right)-\Phi\left(\tilde{\beta},R_1,R\right)}{1-\Phi\left(\tilde{\beta},R_1,R\right)}\right)\frac{\phi\left(x_2',R_1,R\right)}{\Phi\left(\tilde{\beta},R_1,R\right)}\textup{d}x_2'\\
       +\frac{\Phi\left(\textrm{min}\!\left\{\tilde{\beta},\textrm{max}\!\left\{\frac{\mathcal{N}\gamma}{P_T\xi_2},\frac{\tilde{\beta}}{\gamma\kappa}-\frac{\mathcal{N}}{P_T\xi_2}\right\}\right\},R_1,R\right)
-\Phi\left(\frac{\mathcal{N}\gamma}{P_T\xi_2},R_1,R\right)}{\Phi\left(\tilde{\beta},R_1,R\right)},&{\tilde{\beta}>\frac{\mathcal{N}\gamma}{P_T\xi_2} ;}\\
                    \end{array} \right.
\end{align}
\begin{align}\label{eq:p1newnew}
p_1=&\mathlarger{\int}_{\textrm{min}\!\left\{\tilde{\beta},\frac{\mathcal{N}\gamma}{P_T\xi_2},\textrm{max}\!\left\{0,\frac{\tilde{\beta}}{\gamma\kappa}-\frac{\mathcal{N}}{P_T\xi_2}\right\}\right\}}^{\textrm{min}\left\{\tilde{\beta}, \frac{\mathcal{N}\gamma}{P_T\xi_2}\right\}}\left(1-\frac{\Phi\left(\gamma\kappa x_2'+\frac{\mathcal{N}\gamma}{P_T\xi_1},R_1,R\right)-\Phi\left(\tilde{\beta},R_1,R\right)}{1-\Phi\left(\tilde{\beta},R_1,R\right)}\right)\frac{\phi\left(x_2',R_1,R\right)}{\Phi\left(\tilde{\beta},R_1,R\right)}\textup{d}x_2'\nonumber\\
&+\frac{\Phi\left(\textrm{min}\!\left\{\tilde{\beta},\frac{\mathcal{N}\gamma}{P_T\xi_2},\textrm{max}\!\left\{0,\frac{\tilde{\beta}}{\gamma\kappa}-\frac{\mathcal{N}}{P_T\xi_2}\right\}\right\},R_1,R\right)}{\Phi\left(\tilde{\beta},R_1,R\right)}.
\end{align}
 \hrulefill
\vspace*{4pt}
\vspace{-0.05 in}
\end{figure*}
\fi
Under the power division approach, rather than pairing the BNs from different subregions, the reader pairs the BNs with different power levels. Specifically, for the reader, there is a pre-defined threshold $\beta$ and training period at the start of each time slot. By comparing the threshold $\beta$ with the instantaneous backscattered signal power from each node, the reader categorizes the BNs into high power level group and low power level group. Correspondingly, each BN can pick its reflection coefficient by comparing its received power with the threshold $(1-\xi_1)\sqrt{P_T\beta/\xi_1}$ \footnote{In the training period, all the BNs' reflection coefficients are assumed to be $\xi_1$.}. If the received power is greater than the threshold, this BN belongs to the high power level group and its reflection coefficient will be set to $\xi_1$. Otherwise, it belongs to the low level power level group and the reflection coefficient is set to $\xi_2$.

According to the principle of power division approach, $p_{\near}$ can be interpreted as the probability that the backscattered signal power for the node is greater than the threshold $\beta$. Thus, $p_{\near}$ can be written as $p_{\near}=1-\Phi\left(\tilde{\beta},R_1,R\right)$, where $\tilde{\beta}\triangleq \beta/(P_T \xi_1)$ is the normalized threshold. The key results for $p_2$, $p_1$, $\bar{\mathcal{M}}_{1\near}$ and $\bar{\mathcal{M}}_{1\far}$ are given in the following lemmas.

\ifCLASSOPTIONpeerreview
\begin{lemma}
Based on our system model considered in Sections~\ref{sec:system} and~\ref{sec:fading}, under the fading scenario with power division approach, the probability that the signals from two BNs are successfully decoded is
\begin{align}\label{eq:p2newnew}
p_2\!=\left\{ \begin{array}{ll}
       0,&{\tilde{\beta}\leq\frac{\mathcal{N}\gamma}{P_T\xi_2} ;}\\
       \mathlarger{\int}_{\textrm{min}\!\left\{\tilde{\beta},\textrm{max}\!\left\{\frac{\mathcal{N}\gamma}{P_T\xi_2},\frac{\tilde{\beta}}{\gamma\kappa}-\frac{\mathcal{N}}{P_T\xi_2}\right\}\right\}}^{\tilde{\beta}}\left(1-\frac{\Phi\left(\gamma\kappa x_2'+\frac{\mathcal{N}\gamma}{P_T\xi_1},R_1,R\right)-\Phi\left(\tilde{\beta},R_1,R\right)}{1-\Phi\left(\tilde{\beta},R_1,R\right)}\right)\frac{\phi\left(x_2',R_1,R\right)}{\Phi\left(\tilde{\beta},R_1,R\right)}\textup{d}x_2'\\
       +\frac{\Phi\left(\textrm{min}\!\left\{\tilde{\beta},\textrm{max}\!\left\{\frac{\mathcal{N}\gamma}{P_T\xi_2},\frac{\tilde{\beta}}{\gamma\kappa}-\frac{\mathcal{N}}{P_T\xi_2}\right\}\right\},R_1,R\right)
-\Phi\left(\frac{\mathcal{N}\gamma}{P_T\xi_2},R_1,R\right)}{\Phi\left(\tilde{\beta},R_1,R\right)},&{\tilde{\beta}>\frac{\mathcal{N}\gamma}{P_T\xi_2} ;}\\
                    \end{array} \right.
\end{align}
\noindent and the probability that the signal from only one BN is successfully decoded is given by
\begin{align}\label{eq:p1newnew}
p_1=&\mathlarger{\int}_{\textrm{min}\!\left\{\tilde{\beta},\frac{\mathcal{N}\gamma}{P_T\xi_2},\textrm{max}\!\left\{0,\frac{\tilde{\beta}}{\gamma\kappa}-\frac{\mathcal{N}}{P_T\xi_2}\right\}\right\}}^{\textrm{min}\left\{\tilde{\beta}, \frac{\mathcal{N}\gamma}{P_T\xi_2}\right\}}\left(1-\frac{\Phi\left(\gamma\kappa x_2'+\frac{\mathcal{N}\gamma}{P_T\xi_1},R_1,R\right)-\Phi\left(\tilde{\beta},R_1,R\right)}{1-\Phi\left(\tilde{\beta},R_1,R\right)}\right)\frac{\phi\left(x_2',R_1,R\right)}{\Phi\left(\tilde{\beta},R_1,R\right)}\textup{d}x_2'\nonumber\\
&+\frac{\Phi\left(\textrm{min}\!\left\{\tilde{\beta},\frac{\mathcal{N}\gamma}{P_T\xi_2},\textrm{max}\!\left\{0,\frac{\tilde{\beta}}{\gamma\kappa}-\frac{\mathcal{N}}{P_T\xi_2}\right\}\right\},R_1,R\right)}{\Phi\left(\tilde{\beta},R_1,R\right)}.
\end{align}
\end{lemma}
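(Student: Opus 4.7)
The plan is to reduce the computation of $p_2$ and $p_1$ to a single integration over the conditional distribution of the low-power BN's composite random variable $x_2' \triangleq g_2^{2}r_2^{-2\alpha}$, using that the decoding order is deterministic under power division. Specifically, during the training phase all BNs use $\xi_1$, so the reader's classification partitions the underlying composite variable $x \triangleq g^{2}r^{-2\alpha}$ at the normalized level $\tilde{\beta}$. Hence the high-power BN's variable $x_1$ has conditional density $\phi(x_1,R_1,R)/(1-\Phi(\tilde{\beta},R_1,R))$ on $(\tilde{\beta},\infty)$, while $x_2'$ has conditional density $\phi(x_2',R_1,R)/\Phi(\tilde{\beta},R_1,R)$ on $[0,\tilde{\beta}]$. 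Because $\xi_1 x_1 > \xi_1 \tilde{\beta} \geq \xi_2 \tilde{\beta} \geq \xi_2 x_2'$, the high-power BN is always decoded first, which is a significant simplification compared with Lemma~5 where either ordering was possible.

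Next, I would translate the SIC success events into inequalities on $(x_1,x_2')$. Both signals succeed iff the SINR of the high-power BN exceeds $\gamma$ and the SNR of the low-power BN exceeds $\gamma$, i.e.\ $x_1 \geq \gamma \kappa x_2' + \frac{\mathcal{N}\gamma}{P_T\xi_1}$ and $x_2' \geq \frac{\mathcal{N}\gamma}{P_T\xi_2}$; only the stronger signal succeeds iff the first inequality holds but $x_2' < \frac{\mathcal{N}\gamma}{P_T\xi_2}$. The first case of \eqref{eq:p2newnew} is then immediate: if $\tilde{\beta}\leq\frac{\mathcal{N}\gamma}{P_T\xi_2}$, then $x_2'\leq\tilde{\beta}$ forces the low-power SNR below threshold, so $p_2=0$.

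For the generic case $\tilde{\beta}>\frac{\mathcal{N}\gamma}{P_T\xi_2}$, I would write $p_2$ as $\int \Pr(x_1 \geq \gamma\kappa x_2'+\tfrac{\mathcal{N}\gamma}{P_T\xi_1}\mid x_1>\tilde{\beta})\,f_{x_2'}(x_2')\,\textup{d}x_2'$ over the appropriate $x_2'$-range, and split that range according to whether the threshold $\gamma\kappa x_2'+\frac{\mathcal{N}\gamma}{P_T\xi_1}$ exceeds $\tilde{\beta}$. The crossover occurs at $x_2' = \frac{\tilde{\beta}}{\gamma\kappa}-\frac{\mathcal{N}}{P_T\xi_2}$. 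Below the crossover, the constraint $x_1>\tilde{\beta}$ already implies the SINR inequality, so the conditional probability equals one and the contribution collapses to a difference of $\Phi$ values (the second term in \eqref{eq:p2newnew}). Above the crossover, the conditional probability equals $(1-\Phi(\gamma\kappa x_2'+\tfrac{\mathcal{N}\gamma}{P_T\xi_1},R_1,R))/(1-\Phi(\tilde{\beta},R_1,R))$, which I would rewrite using the algebraic identity $1-\tfrac{\Phi(a)-\Phi(\tilde{\beta})}{1-\Phi(\tilde{\beta})}=\tfrac{1-\Phi(a)}{1-\Phi(\tilde{\beta})}$ to match the integrand displayed in \eqref{eq:p2newnew}. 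The expression for $p_1$ in \eqref{eq:p1newnew} would be obtained by repeating the same case analysis with the low-power SNR condition reversed, so that the $x_2'$ range is truncated at $\frac{\mathcal{N}\gamma}{P_T\xi_2}$ from above instead of from below.

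The main obstacle is purely bookkeeping: packaging all four possible orderings of $0$, $\tilde{\beta}$, $\frac{\mathcal{N}\gamma}{P_T\xi_2}$, and $\frac{\tilde{\beta}}{\gamma\kappa}-\frac{\mathcal{N}}{P_T\xi_2}$ into the compact $\min$/$\max$ integration limits shown in the statement, while ensuring that the intervals degenerate correctly (i.e., collapse to empty ranges) in the corner cases where one region vanishes. Once these limits are handled carefully, the two formulae drop out of the case split described above with no further analytical difficulty.
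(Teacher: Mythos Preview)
Your proposal is correct and follows essentially the same route as the paper. The paper's proof likewise sets up the truncated conditional distributions $F_{x_1'}(x_1')=\frac{\Phi(x_1',R_1,R)-\Phi(\tilde{\beta},R_1,R)}{1-\Phi(\tilde{\beta},R_1,R)}$ on $[\tilde{\beta},\infty)$ and $F_{x_2'}(x_2')=\frac{\Phi(x_2',R_1,R)}{\Phi(\tilde{\beta},R_1,R)}$ on $(0,\tilde{\beta})$, observes that the decoding order is fixed, writes $p_2$ and $p_1$ as the joint probabilities you state, and then defers to the Appendix~A case analysis; your write-up actually spells out the crossover at $x_2'=\tfrac{\tilde{\beta}}{\gamma\kappa}-\tfrac{\mathcal{N}}{P_T\xi_2}$ and the resulting $\min/\max$ bookkeeping more explicitly than the paper does.
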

\else
\begin{lemma}
Based on our system model considered in Sections~\ref{sec:system} and~\ref{sec:fading}, under the fading scenario with power division approach, the probability that the signals from two BNs are successfully decoded is~\eqref{eq:p2newnew} and the probability that the signal from only one BN is successfully decoded is given by~\eqref{eq:p1newnew}, as shown at the top of this page.
\end{lemma}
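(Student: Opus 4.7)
The plan is to characterize the conditional joint distribution of the two paired BNs under the power division scheme, then evaluate the SIC success events by a case analysis on the SINR and SNR inequalities, mirroring the style used in Lemma~6.

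First I would observe that each BN's raw channel quality $X \triangleq g^2 r^{-2\alpha}$ has CDF $\Phi(\cdot,R_1,R)$ and PDF $\phi(\cdot,R_1,R)$ from Lemma~\ref{lemma:composite} (with $R_l=R_1$, $R_u=R$), and that the reader classifies a BN into the high-power group precisely when $X>\tilde\beta$. Consequently, the high-power member of a pair has $X_1'\in(\tilde\beta,\infty)$ with truncated density $\phi(x,R_1,R)/(1-\Phi(\tilde\beta,R_1,R))$, and the low-power member has $X_2'\in[0,\tilde\beta]$ with truncated density $\phi(x,R_1,R)/\Phi(\tilde\beta,R_1,R)$. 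Because $\xi_1 \geq \xi_2$ and $X_1'>\tilde\beta \geq X_2'$, the backscattered powers satisfy $P_T\xi_1 X_1' > P_T\xi_2 X_2'$, so the SIC decoding order always coincides with the group labels and no outer ``stronger-signal-from-each-side'' case split (as in Lemma~6) is needed here.

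Next I would translate the SIC success conditions into events on $(X_1',X_2')$: the stronger-signal SINR exceeds $\gamma$ iff $X_1' \geq \gamma\kappa X_2' + \gamma\mathcal{N}/(P_T\xi_1)$, and the post-cancellation SNR of the weaker signal exceeds $\gamma$ iff $X_2' \geq \mathcal{N}\gamma/(P_T\xi_2)$. For $p_2$ I would integrate the joint indicator of these two conditions against the conditional density of $X_2'$, and factor the inner expectation over $X_1'$ as $\Pr(X_1' \geq A \mid X_1' > \tilde\beta)$ with $A \triangleq \gamma\kappa x_2' + \gamma\mathcal{N}/(P_T\xi_1)$. This inner probability equals $1$ when $A \leq \tilde\beta$, i.e.\ $x_2' \leq \tilde\beta/(\gamma\kappa) - \mathcal{N}/(P_T\xi_2)$, and otherwise equals $(1-\Phi(A,R_1,R))/(1-\Phi(\tilde\beta,R_1,R))$. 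Combining the two regimes via the identity $1 - (\Phi(A,R_1,R)-\Phi(\tilde\beta,R_1,R))/(1-\Phi(\tilde\beta,R_1,R)) = (1-\Phi(A,R_1,R))/(1-\Phi(\tilde\beta,R_1,R))$ gives the single-integral-plus-tail form in \eqref{eq:p2newnew}, with the degenerate case $p_2=0$ emerging exactly when $\tilde\beta \leq \mathcal{N}\gamma/(P_T\xi_2)$ since no admissible $x_2' \in [0,\tilde\beta]$ can then meet the SNR condition. The derivation of \eqref{eq:p1newnew} follows the identical recipe after swapping the SNR inequality for its complement $X_2' < \mathcal{N}\gamma/(P_T\xi_2)$, so the outer $x_2'$-range becomes $[0,\min\{\tilde\beta,\mathcal{N}\gamma/(P_T\xi_2)\}]$ with the same internal split point clamped to be nonnegative.

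The hard part will not be any individual computation but the bookkeeping of case splits: the limits of integration must simultaneously respect (i) the support $[0,\tilde\beta]$ of the low-power BN, (ii) the SNR threshold $\mathcal{N}\gamma/(P_T\xi_2)$, and (iii) the inner-probability split point $\tilde\beta/(\gamma\kappa) - \mathcal{N}/(P_T\xi_2)$. The nested $\min$ and $\max$ operators in \eqref{eq:p2newnew}--\eqref{eq:p1newnew} are a compact encoding of every possible ordering of these three quantities, so the bulk of the writing effort would go into verifying that the stated min/max expressions collapse to the correct integration endpoints in each ordering, and that the non-integral tail term represents precisely the contribution from the sub-interval where the inner probability equals $1$.
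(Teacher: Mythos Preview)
Your proposal is correct and follows essentially the same route as the paper: the paper's proof also introduces the truncated distributions $F_{x_1'}(x_1')=\frac{\Phi(x_1',R_1,R)-\Phi(\tilde\beta,R_1,R)}{1-\Phi(\tilde\beta,R_1,R)}$ on $[\tilde\beta,\infty)$ and $F_{x_2'}(x_2')=\frac{\Phi(x_2',R_1,R)}{\Phi(\tilde\beta,R_1,R)}$ on $(0,\tilde\beta)$, notes that the decoding order is fixed by the group labels, writes $p_2$ and $p_1$ as the same SINR/SNR events you identify, and then defers the integration-limit bookkeeping to ``the similar derivation approach in Appendix~A.'' Your write-up is in fact more explicit than the paper's about the inner split point $\tilde\beta/(\gamma\kappa)-\mathcal{N}/(P_T\xi_2)$ and how the nested $\min/\max$ operators encode the various orderings, which is exactly the part the paper leaves to the reader.
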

\fi
\begin{proof}
Let $x_1'$ represent the normalized instantaneous received power from a BN belonging to the high power level group, which is normalized over $P_T$ and $\xi_1$, and its CDF can be expressed as $F_{x_1'}(x_1')=\frac{\Phi\left(x_1',R_1,R\right)-\Phi\left(\tilde{\beta},R_1,R\right)}{1-\Phi\left(\tilde{\beta},R_1,R\right)}$, where $x_1'\in[\tilde{\beta},\infty)$. Similarly, let
$x_2'$ denote the normalized instantaneous received power from a BN belonging to the lower power level group and its CDF is $F_{x_2'}(x_2')=\frac{\Phi\left(x_2',R_1,R\right)}{\Phi\left(\tilde{\beta},R_1,R\right)}$, where $x_2'\in(0,\tilde{\beta})$.

Clearly, under the power division approach, the decoding order is always from the high power level group to the low power level group. $p_2$ and $p_1$ are then written as $p_2=\mathbb{E}_{x_1',x_2'}\left[\Pr\left(\frac{P_T\xi_1x_1'}{P_T\xi_2x_2'+\mathcal{N}}\geq\gamma \&\&\frac{P_T\xi_2x_2'}{\mathcal{N}}\geq\gamma\right)\right]$ and $p_1=\mathbb{E}_{x_1',x_2'}\left[\Pr\left(\frac{P_T\xi_1x_1'}{P_T\xi_2x_2'+\mathcal{N}}\geq\gamma \&\&\frac{P_T\xi_2x_2'}{\mathcal{N}}<\gamma\right)\right]$, respectively. Following the similar derivation approach in Appendix A, we arrive at the results in~\eqref{eq:p2newnew} and~\eqref{eq:p1newnew}.
\end{proof}

\begin{lemma}
Based on our system model considered in Sections~\ref{sec:system} and~\ref{sec:fading}, under the fading scenario with power division approach, the average number of successful BNs given that only one BN from the near subregion accesses the reader is
\begin{align}
 \bar{\mathcal{M}}_{1\near} &= 1-\Phi\left(\frac{\mathcal{N}\gamma}{\xi_1P_T},R_1,R\right)\textbf{1}\left(\frac{\mathcal{N}\gamma}{\xi_1P_T}>\tilde{\beta} \right),\label{eq:m1nearnewnew}
\end{align}
\noindent and the average number of successful BNs given that only one BN from the far subregion accesses the reader is
\begin{align}\label{eq:m1farnewnew}
\bar{\mathcal{M}}_{1\far} &= \left(1-\Phi\left(\frac{\mathcal{N}\gamma}{\xi_2P_T},R_1,R\right)\right)\textbf{1}\left(\frac{\mathcal{N}\gamma}{\xi_2P_T}<\tilde{\beta} \right).
\end{align}
\end{lemma}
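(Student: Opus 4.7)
The plan is to exploit the fact that an isolated BN sees no interference, so the success event collapses to a pure SNR condition, which, once intersected with the instantaneous power-level categorization, reduces to at most two disjoint regimes. First I would rewrite the success criterion in terms of the composite variable $x\triangleq g^{2}r^{-2\alpha}$ whose CDF and PDF are furnished by Lemma~\ref{lemma:composite}. For a single BN using reflection coefficient $\xi_i$, the condition $P_T\xi_i x/\mathcal{N}\geq\gamma$ is equivalent to $x\geq \tfrac{\mathcal{N}\gamma}{\xi_i P_T}$. In addition, the training procedure of Section~\ref{sec:powerdivision} implies that the BN belongs to the near (high-power) group precisely when $x>\tilde{\beta}$ and to the far (low-power) group precisely when $x\leq\tilde{\beta}$.

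For $\bar{\mathcal{M}}_{1\near}$, I would intersect the SNR event $\{x\geq \tfrac{\mathcal{N}\gamma}{\xi_1 P_T}\}$ with the near-group event $\{x>\tilde{\beta}\}$ and split on whether $\tfrac{\mathcal{N}\gamma}{\xi_1 P_T}$ lies above or below $\tilde{\beta}$. When $\tfrac{\mathcal{N}\gamma}{\xi_1 P_T}\leq\tilde{\beta}$ the SNR inequality is automatically implied by the near-group membership and the value collapses to $1$; when $\tfrac{\mathcal{N}\gamma}{\xi_1 P_T}>\tilde{\beta}$ the SNR threshold is the binding constraint and a direct evaluation of the CDF in Lemma~\ref{lemma:composite} yields $1-\Phi\!\bigl(\tfrac{\mathcal{N}\gamma}{\xi_1 P_T},R_1,R\bigr)$. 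Packaging the two subcases with the indicator $\mathbf{1}(\tfrac{\mathcal{N}\gamma}{\xi_1 P_T}>\tilde{\beta})$ produces \eqref{eq:m1nearnewnew}.

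The expression for $\bar{\mathcal{M}}_{1\far}$ follows from the mirror argument. I would intersect the SNR event $\{x\geq \tfrac{\mathcal{N}\gamma}{\xi_2 P_T}\}$ with the far-group event $\{x\leq\tilde{\beta}\}$. If $\tfrac{\mathcal{N}\gamma}{\xi_2 P_T}\geq\tilde{\beta}$ the two events are disjoint so success is impossible and the probability equals $0$; if $\tfrac{\mathcal{N}\gamma}{\xi_2 P_T}<\tilde{\beta}$ the binding constraint is again the SNR threshold, and the same CDF evaluation delivers $1-\Phi\!\bigl(\tfrac{\mathcal{N}\gamma}{\xi_2 P_T},R_1,R\bigr)$, which together with $\mathbf{1}(\tfrac{\mathcal{N}\gamma}{\xi_2 P_T}<\tilde{\beta})$ gives \eqref{eq:m1farnewnew}.

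The main obstacle is not analytical but bookkeeping: in each regime of the case split one must carefully identify which of the two constraints (SNR threshold versus power-level categorization) actually binds, so that the correct tail of the $x$-distribution is evaluated, and one must align this with the convention already used in Definition~1 and the $p_{\near}=1-\Phi(\tilde{\beta},R_1,R)$ factor that appears in the outer binomial weighting of $\bar{\mathcal{C}}_{\suc}$. Once the case split is settled, no new integration is needed beyond what Lemma~\ref{lemma:composite} already provides.
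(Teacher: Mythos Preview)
Your approach---reduce the single-BN success event to a pure SNR condition on the composite variable $x=g^{2}r^{-2\alpha}$ and then case-split against the power-level threshold $\tilde\beta$---is exactly the route the paper has in mind; its own proof is simply ``similar to the proof of Lemma~2'' and supplies no further detail. Your handling of $\bar{\mathcal M}_{1\near}$ is sound.

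There is, however, a slip in your $\bar{\mathcal M}_{1\far}$ argument. You say that when $\tfrac{\mathcal N\gamma}{\xi_2 P_T}<\tilde\beta$ ``the binding constraint is again the SNR threshold'' and read off $1-\Phi\bigl(\tfrac{\mathcal N\gamma}{\xi_2 P_T},R_1,R\bigr)$. But the far-group membership event is $\{x\le\tilde\beta\}$, an \emph{upper} bound on $x$, so it is not absorbed by the SNR lower bound; both constraints are active and the joint event is $\{\tfrac{\mathcal N\gamma}{\xi_2 P_T}\le x\le\tilde\beta\}$. Computed as a genuine conditional probability (using the conditional CDF $F_{x_2'}$ you yourself would invoke from the proof of Lemma~7), one gets $\bigl(\Phi(\tilde\beta,R_1,R)-\Phi(\tfrac{\mathcal N\gamma}{\xi_2 P_T},R_1,R)\bigr)/\Phi(\tilde\beta,R_1,R)$, not $1-\Phi(\tfrac{\mathcal N\gamma}{\xi_2 P_T},R_1,R)$. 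An analogous normalization by $1-\Phi(\tilde\beta,R_1,R)$ is missing in the second sub-case of $\bar{\mathcal M}_{1\near}$. Your final formulas coincide with the paper's stated expressions, but your ``binding constraint'' sentence does not actually establish them; you should either carry out the conditional computation properly and then explain why the unnormalized form is the convention used inside \eqref{eq:general:C}, or drop the ``intersect with the group event'' framing altogether and argue directly from the conditional distribution of $x$ given group membership.
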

The derivation is similar to the proof of Lemma 2; hence, we skip it here for the sake of brevity.
\begin{remark}
The reflection coefficient selection criterion for the region division approach depends on the subregion radius and channel threshold, while the selection criterion for the power division approach strongly relies on the threshold $\beta$ and channel threshold. Following the same derivation procedure for Proposition 1, to achieve the better system performance, we can compute the relationship between $\xi_1$ and $\xi_2$ as
\begin{align}\label{eq:design2}
\xi_1\geq \max\left\{\xi_{2},\gamma\left (\xi_2+\frac{\mathcal{N}}{P_T\tilde{\beta}} \right)\right\}.
\end{align}
\end{remark}

\subsection{Summary}\label{sec:fading:summary}
Table~\ref{tb:1} summarizes the key factors used to calculate the average number of successfully decoded bits, $\bar{\mathcal{C}}_{\suc}$, under the fading-free and fading scenarios with different pairing approaches. The general expression of $\bar{\mathcal{C}}_{\suc}$ for two-node pairing is given in~\eqref{eq:general:C}, where $\bar{\mathcal{M}}_2=p_1+2p_2$.
 \begin{table*}[t]
\centering
\caption{Key factors determining $\bar{\mathcal{C}}_{\suc}$ for two-node pairing case.}\label{tb:1}
\begin{tabular}{|c||c|c|c|c|c|} \hline
 Scenario & $p_{\near}$ & $p_2$ & $p_1$ & $\bar{\mathcal{M}}_{1\near}$ & $\bar{\mathcal{M}}_{1\far}$\\ \hline
\multirow{2}{*}{\makecell[c]{Fading-free\\Fading: region division}} & \multirow{2}{*}{$\frac{R_2^2-R_1^2}{R^2-R_1^2}$} & \multirow{2}{*}{\makecell[c]{\eqref{eq:nofading:p2}\\\eqref{eq:general:p2}} } &\multirow{2}{*}{\makecell[c]{\eqref{eq:nofading:p1}\\\eqref{eq:general:p1}}} &\multirow{2}{*}{\makecell[c]{\eqref{eq:N1case}\\\eqref{eq:m1nr}}}&\multirow{2}{*}{\makecell[c]{\eqref{eq:N1case1}\\ \eqref{eq:m1fr}}}
 \\ \cline{1-1}\cline{3-6}
&&&& &\\ \cline{1-6}
  Fading: power division&$1-\Phi\left(\tilde{\beta},R_1,R\right)$&\eqref{eq:p2newnew} & \eqref{eq:p1newnew}& \eqref{eq:m1nearnewnew}& \eqref{eq:m1farnewnew} \\ \hline
\end{tabular}
\end{table*}
\ifCLASSOPTIONpeerreview
\else
   \begin{figure*}[!t]
\centering
\subfigure[Two-node pairing.]{\label{fig1b}\includegraphics[width=0.32\textwidth]{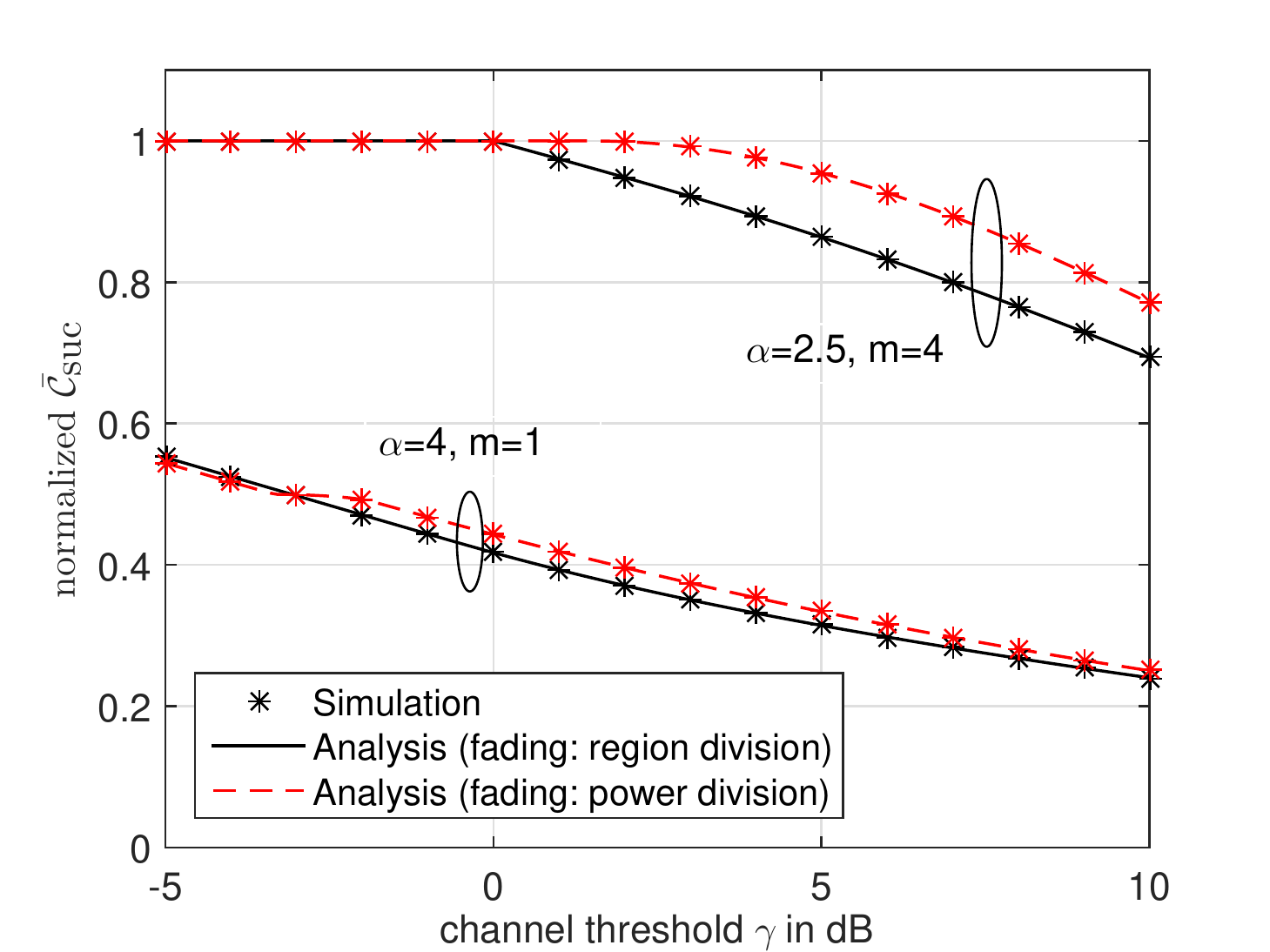}}
\subfigure[Multiple-node multiplexing.]{ \label{fig1a}\includegraphics[width=0.32\textwidth]{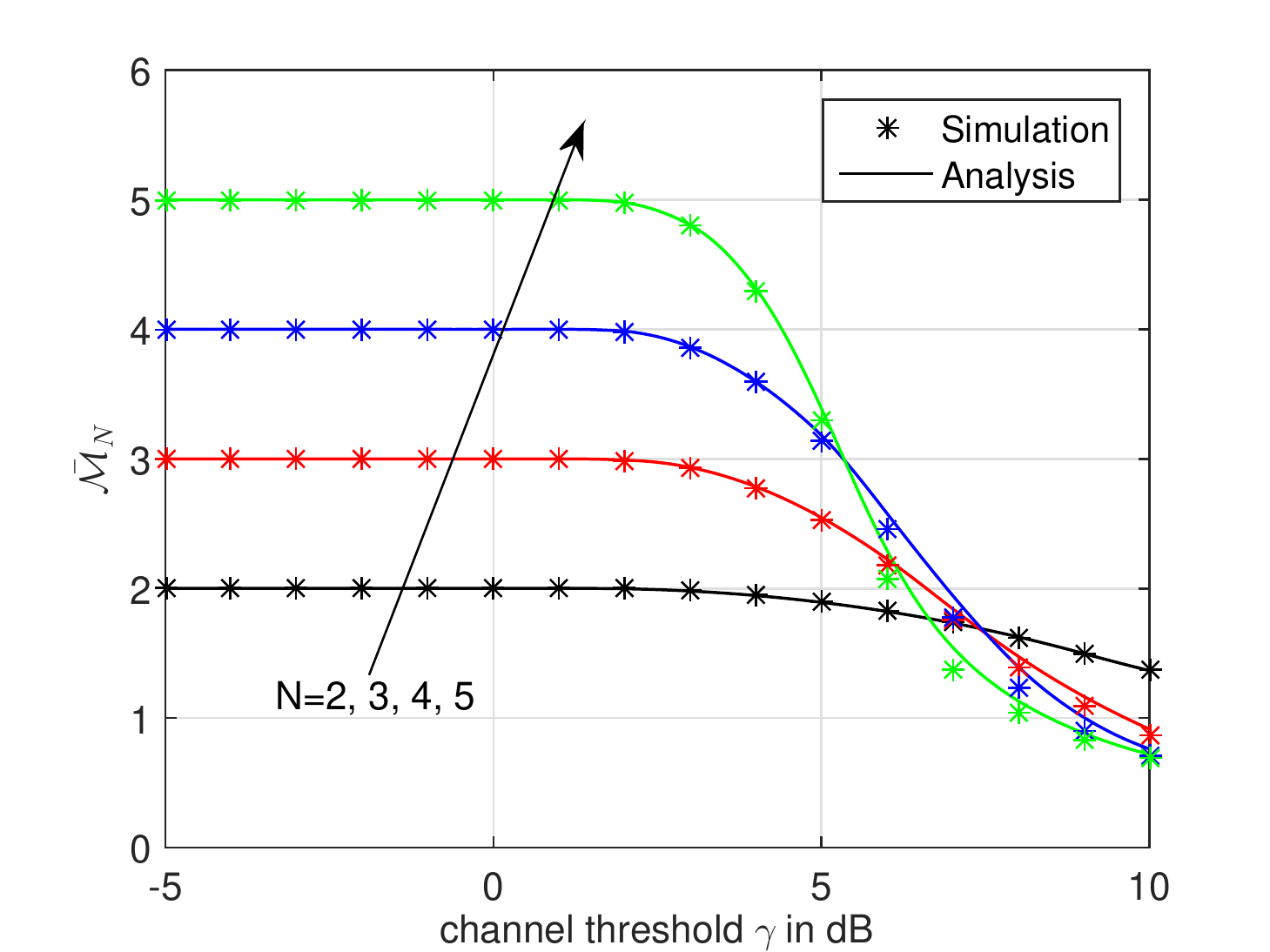}}
\subfigure[Multiple-node multiplexing (simulation only).]{ \label{fig1c}\includegraphics[width=0.32\textwidth]{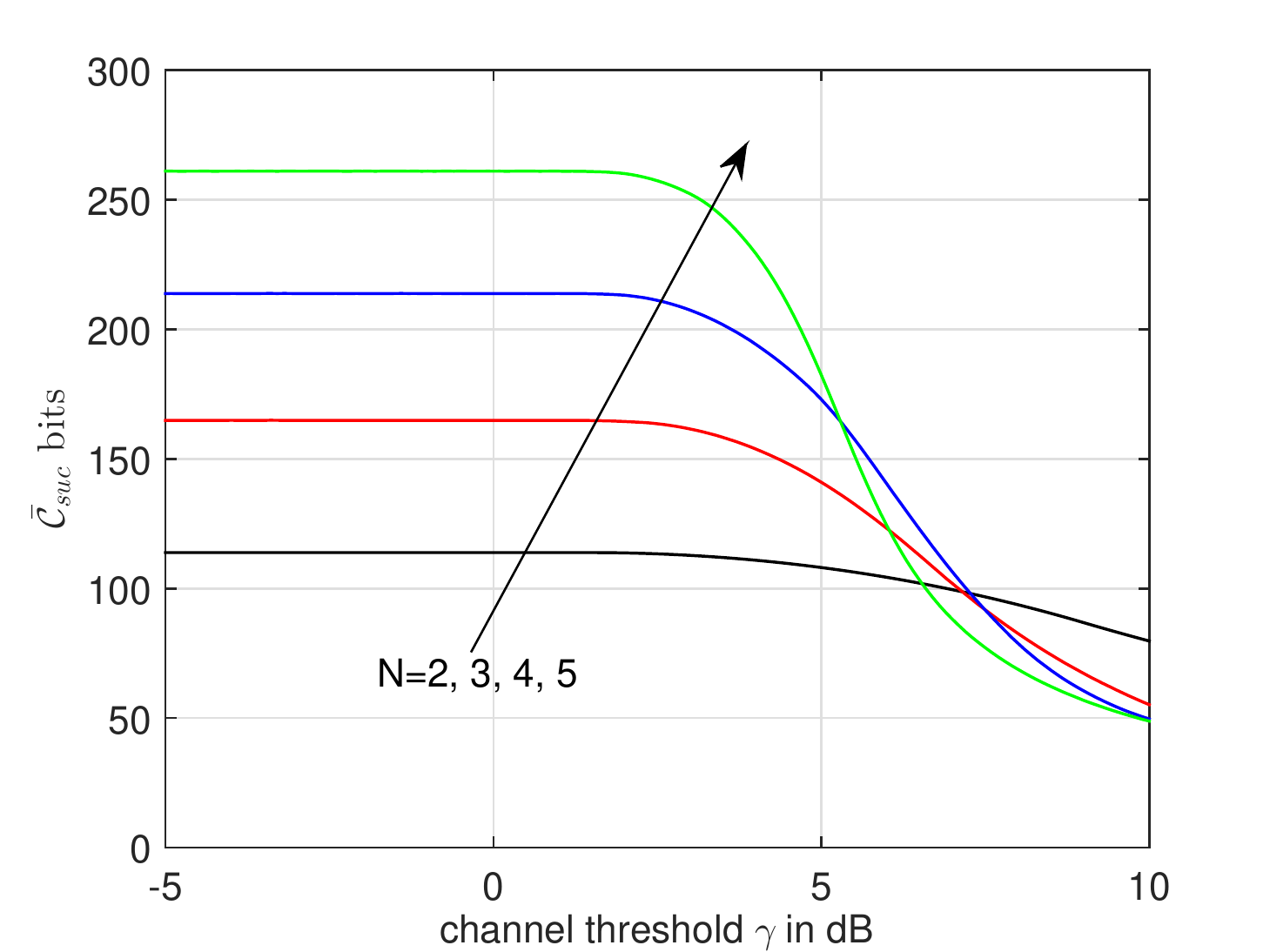}}

\caption{Channel threshold $\gamma$ versus (a) the normalized average number of successfully decoded bits under two-node pairing; (b) the average number of successful BNs $\bar{\mathcal{M}}_{N}$ given $N$ multiplexing nodes and (c) the average number of successfully decoded bits $\bar{\mathcal{C}}_{\suc}$ for general multiplexing case.}\label{fig_valid}
\vspace*{4pt}
\vspace{-0.05 in}
\end{figure*}
\fi

\section{Numerical Results}\label{sec:result}
In this section, we present the numerical results to investigate the performance of the NOMA-enhanced BackCom system. In order to validate the numerical results, we also present simulation results which are generated using Matlab and are averaged over $10^6$ simulation runs. Unless specified otherwise, the following values of the main system parameters are adopted: the outer radius of the coverage zone $R=65$ m, the inner radius of the coverage zone $R_1=1$ m, the number of BNs $M=60$, the path-loss exponent $\alpha=2.5$ for Nakagami-$m$ ($m=4$) fading scenario and fading-free scenario while $\alpha=4$ for Rayleigh fading case, the reader's transmit power $P_T=35$ dBm, the noise power $\mathcal{N}=-100$ dBm, and the product of the time slot and the data rate $\mathcal{L}\mathcal{R}=60$ bits. In addition, for the region division approach, we set $R_{i}=\sqrt{\frac{(i-1)R^2+(N+1-i)R_1^2}{N}}$ for $i={2,...,N}$. As for the power division approach, we find a $\tilde{\beta}$ value that makes $p_{\near}=0.5$. Note that such a value ensures that the average number of BNs in each group is the same. The impact of $R_2$ and $\tilde{\beta}$ will be analyzed in the following Section~\ref{sec:result::r2}.

\subsection{Analysis Validation}
\ifCLASSOPTIONpeerreview
\begin{figure}[t]
\centering
\subfigure[Two-node pairing.]{\label{fig1b}\includegraphics[width=0.45\textwidth]{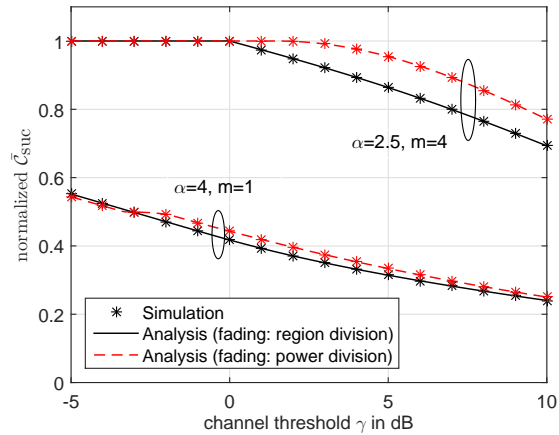}}\\
\subfigure[Multiple-node multiplexing.]{ \label{fig1a}\includegraphics[width=0.45\textwidth]{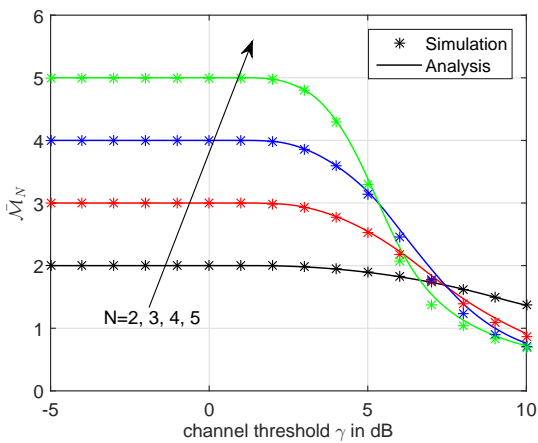}}
\subfigure[Multiple-node multiplexing (simulation only).]{ \label{fig1c}\includegraphics[width=0.45\textwidth]{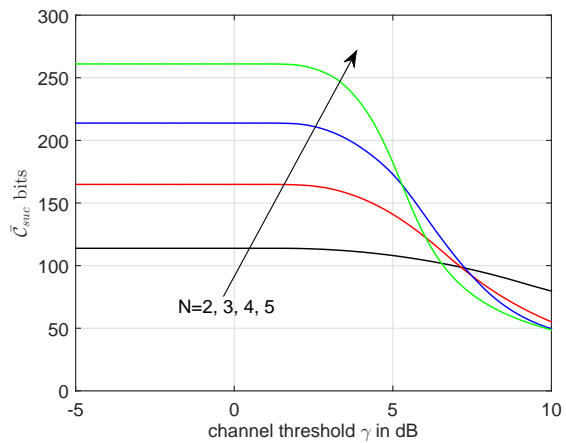}}

\caption{Channel threshold $\gamma$ versus (a) the normalized average number of successfully decoded bits under two-node pairing; (b) the average number of successful BNs $\bar{\mathcal{M}}_{N}$ given $N$ multiplexing nodes and (c) the average number of successfully decoded bits $\bar{\mathcal{C}}_{\suc}$ for general multiplexing case.}\label{fig_valid}
\end{figure}
\fi
Fig.~\ref{fig_valid} plots the channel threshold $\gamma$ versus the (normalized) average number of successfully decoded bits $\bar{\mathcal{C}}_{\suc}$ and the average number of successful BNs $\bar{\mathcal{M}}_{N}$ given $N$ multiplexing nodes for different fading and multiplexing scenarios. Note that the \textit{normalized average number of successfully decoded bits} is defined as the average number of successfully decoded bits over the total number of bits transmitted by BNs, where the latter term is a constant for the given system setup and is given by the formulation in Remark 2. We set $\xi_1=0.7$, $\xi_2=0.5$, $\xi_3=0.3$, $\xi_4=0.1$ and $\xi_5=0.05$. The curves in Fig.~\ref{fig1c} are generated using simulations only. From Fig.~\ref{fig1b} and $N=2$ curve in Fig.~\ref{fig1a}, we can see that the simulation results match perfectly with the analytical results as expected. As for the $N\geq3$ curves in Fig.~\ref{fig1a}, we find that, when the channel threshold is large, the analytical results slightly deviate from the simulation results. This is due to the independence assumption we made when calculating $p_k$ for more users multiplexing scenario. The (close) match of the simulation and analytical results demonstrates the accuracy of our derivations. Furthermore, comparing Fig.~\ref{fig1a} with Fig.~\ref{fig1c}, we find the trends for $\bar{\mathcal{M}}_{N}$ and $\bar{\mathcal{C}}_{\suc}$ are the same. This indicates that $\bar{\mathcal{M}}_{N}$ is a reasonable metric to investigate the performance.

As shown in Fig.~\ref{fig_valid}, when the considered reflection coefficient sets satisfy the criteria proposed in Proposition 1 and Remark 4 for certain value of $\gamma$, the curves for fading-free and Nakagami fading (i.e., $\alpha=2.4$, $m=4$) scenarios are constant. These are the best achievable system performance, where all the transmitted bits are successfully delivered and the average number of successfully decoded bits is the same as the the total number of bits transmitted by BNs. Note that the normalized $\bar{\mathcal{C}}_{\suc}$ under Rayleigh fading can only achieve about half of the best performance; hence, in the following subsections, we focus on the fading-free and Nakagami fading (i.e., $\alpha=2.4$, $m=4$) scenarios.
\subsection{Effect of the Reflection Coefficient for Two-Node Pairing}\label{sec:result:4}
\ifCLASSOPTIONpeerreview
  \begin{figure}[t]
        \centering
        \includegraphics[width=0.5  \textwidth]{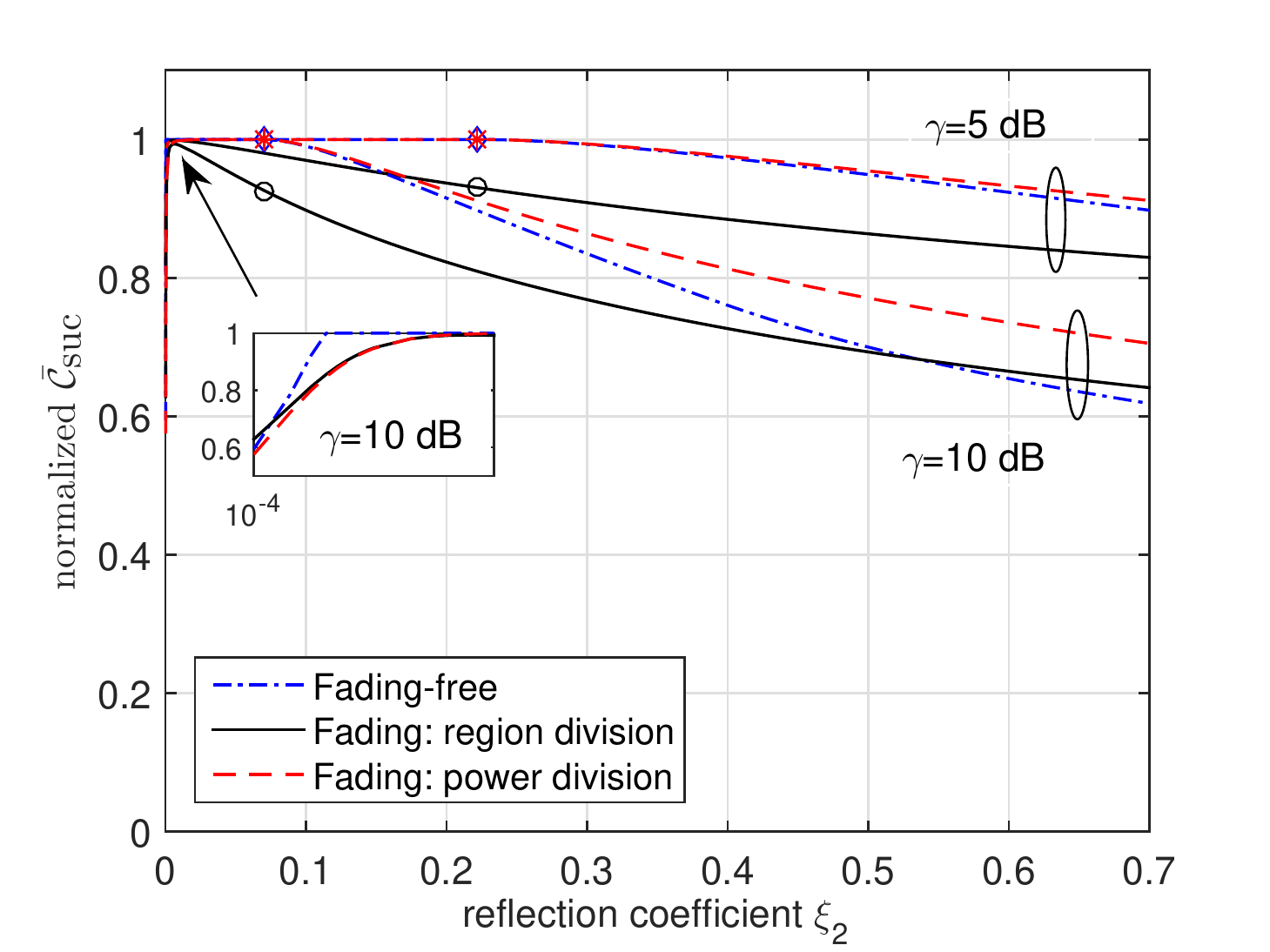}
        \caption{The reflection coefficient of the far backscatter group $\xi_2$ versus the normalized average number of successfully decoded bits for two-node pairing.}
        \label{fig_sec2}
\end{figure}
\fi
In this subsection, we investigate the effect of reflection coefficient for the two-node pairing case and examine the proposed reflection coefficient selection criteria. Fig.~\ref{fig_sec2} plots the reflection coefficient of the far backscatter group $\xi_2$ versus the normalized average number of successfully decoded bits. We set $\xi_1=0.7$.

From Fig.~\ref{fig_sec2}, we can see that the general trend for the normalized $\bar{\mathcal{C}}_{\suc}$ is decreasing as $\xi_2$ increases. This shows that a smaller value of $\xi_2$ can benefit the system. This is because, by reducing $\xi_2$, the interference from the weaker signal is reduced; the stronger signal, thus, has the higher chance to be decoded successfully. However, $\xi_2$ cannot be set too small, as the curves begin to decrease when $\xi_2$ approaches to an extremely small value (e.g., $10^{-4}$). When $\xi_2$ is extremely small, the weaker signal is less likely to be decoded successfully (i.e., SNR is very small for most of the time), which leads to the reduction of $p_2$, $\bar{\mathcal{M}}_{2}$ and $\bar{\mathcal{C}}_{\suc}$ correspondingly.

We also mark the maximum $\xi_2$ and the corresponding normalized $\bar{\mathcal{C}}_{\suc}$ which satisfies~\eqref{eq:designguide} or~\eqref{eq:design2} in Fig.~\ref{fig_sec2}. We find that, under the fading-free scenario or fading case with the power division approach, the marked normalized $\bar{\mathcal{C}}_{\suc}$ is equal to 1, which implies that the signals from all the paired BNs are successfully decoded and the system performance is consequently optimized. It also validates our proposed selection criteria. For the fading case with the region division approach, the proposed selection criterion still provides a good performance (i.e., the marked normalized $\bar{\mathcal{C}}_{\suc}$ is 0.9265  for $\gamma=10$ dB and 0.9306 for $\gamma=5$ dB).
\ifCLASSOPTIONpeerreview
\else
  \begin{figure}[t]
        \centering
        \includegraphics[width=0.45  \textwidth]{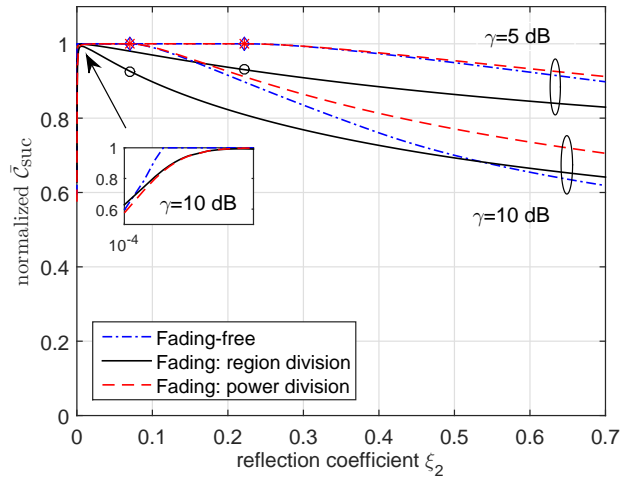}
        \caption{The reflection coefficient of the far backscatter group $\xi_2$ versus the normalized average number of successfully decoded bits for two-node pairing.}
        \label{fig_sec2}
\end{figure}
\fi
\subsection{Effect of the Reflection Coefficient for Multiple-Node Multiplexing}
\ifCLASSOPTIONpeerreview
\begin{figure}[t]
\centering
\subfigure[$N=3$.]{\label{fig3a}\includegraphics[width=0.45\textwidth]{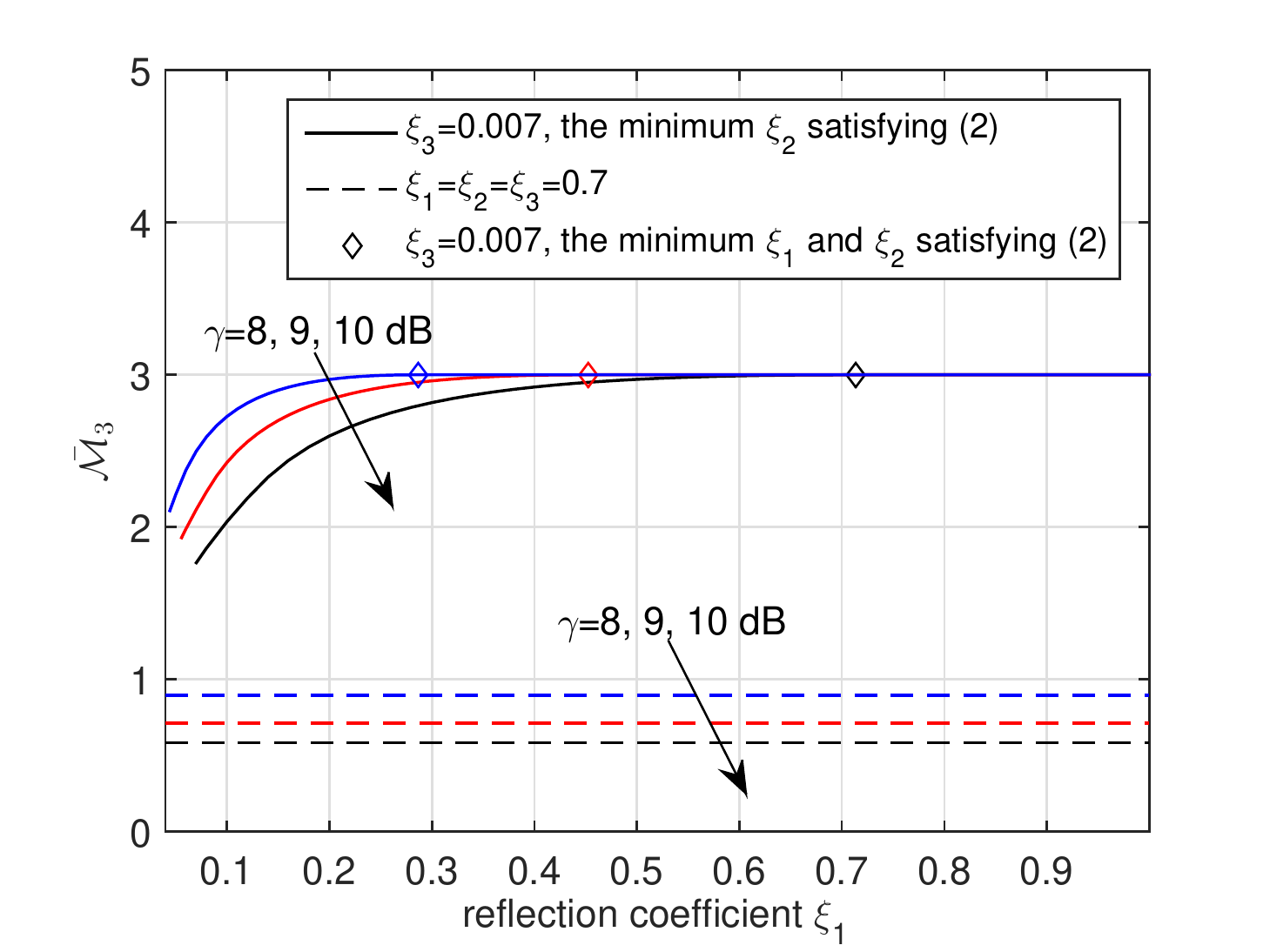}}
\mbox{\hspace{0.5cm}}
\subfigure[$N=5$.]{ \label{fig3b}\includegraphics[width=0.45\textwidth]{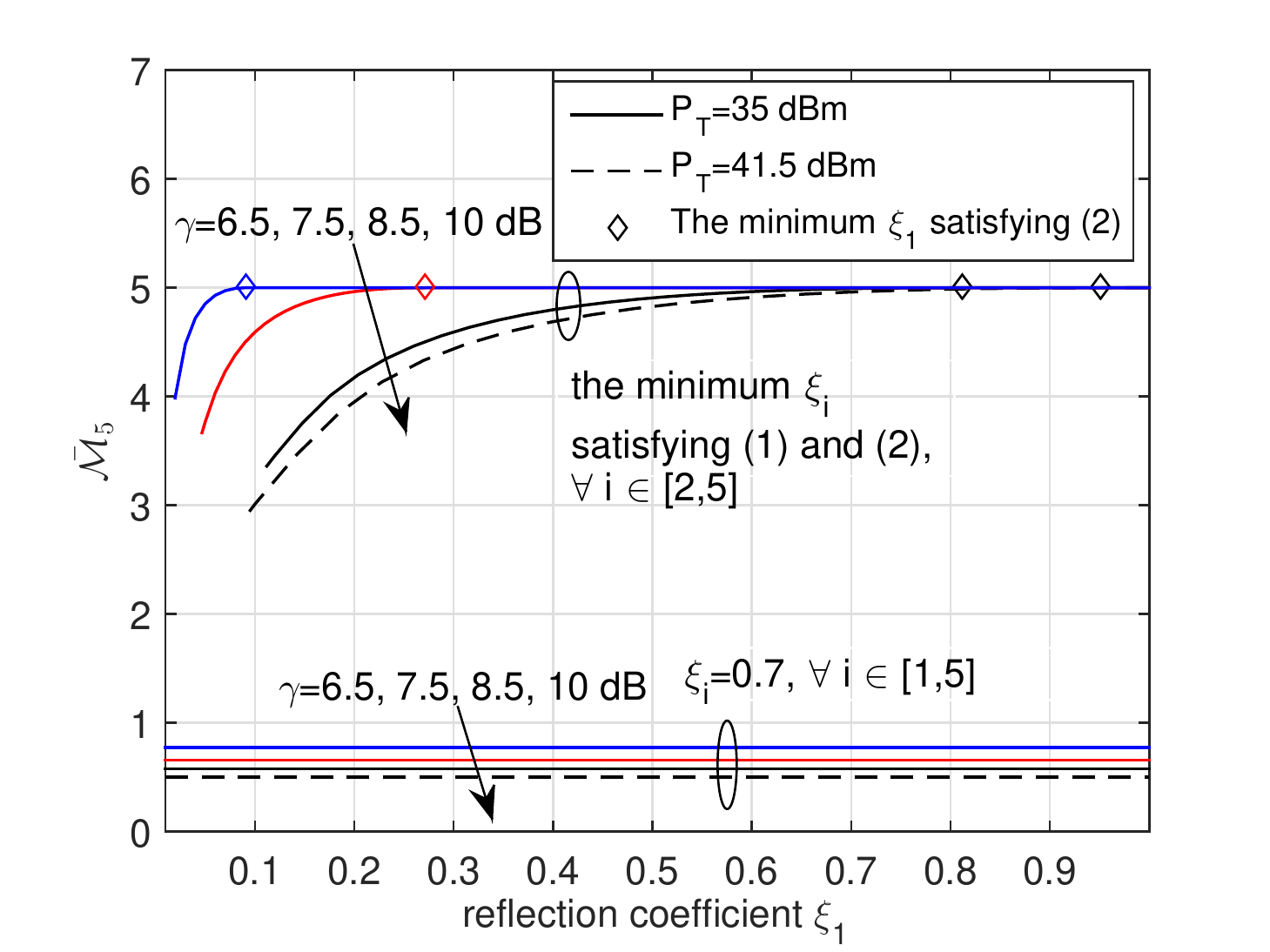}}

\caption{The reflection coefficient of the first backscatter group $\xi_1$ versus the average number of successful BNs $\bar{\mathcal{M}}_{N}$ given $N$ multiplexing BNs.}\label{fig_sec3}
\end{figure}
\else
\begin{figure}[t]
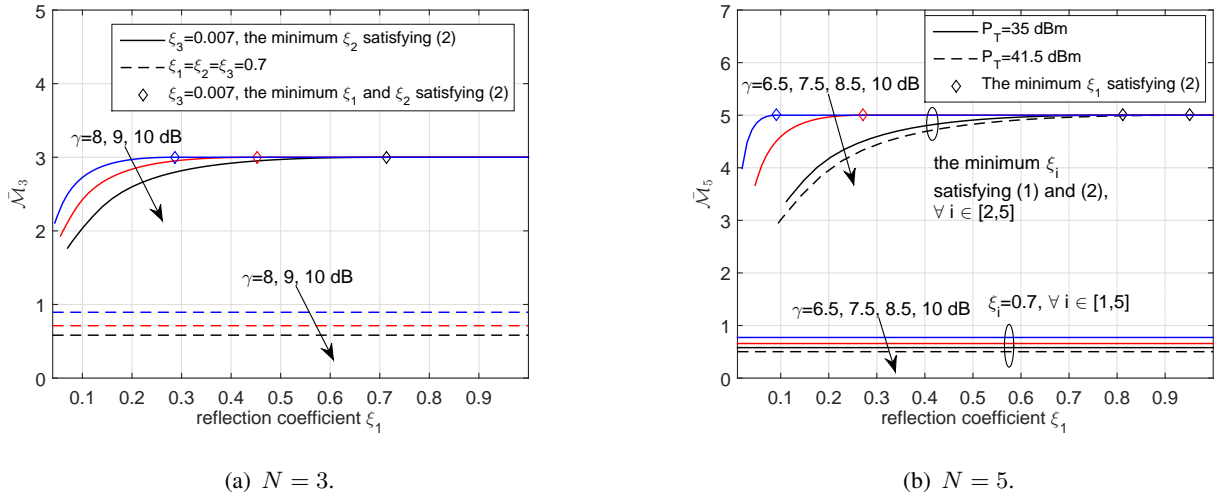

\centering
\subfigure[$N=3$.]{\label{fig3a}\includegraphics[width=0.45\textwidth]{fig3a}}\\
\subfigure[$N=5$.]{ \label{fig3b}\includegraphics[width=0.45\textwidth]{fig3b}}

\caption{The reflection coefficient of the first backscatter group $\xi_1$ versus the average number of successful BNs $\bar{\mathcal{M}}_{N}$ given $N$ multiplexing BNs.}\label{fig_sec3}
\end{figure}
\fi
Fig.~\ref{fig_sec3} plots the reflection coefficient of the first backscatter group $\xi_1$ versus the average number of successful BNs $\bar{\mathcal{M}}_{N}$ given $N$ multiplexing BNs, under the fading-free scenario, for $N=3$ and $N=5$, respectively. In Fig.~\ref{fig3a}, we set $\xi_3=0.007$ and $\xi_2$ to be the minimum value satisfying~\eqref{eq:designguide}. In Fig.~\ref{fig3b}, we set $\xi_5$ to be the minimum value which satisfies~\eqref{eq:designguide1} and $\xi_i$ (where $i \in [2,4]$) to be the minimum value satisfying~\eqref{eq:designguide}. We also mark the minimum $\xi_1$ satisfying~\eqref{eq:designguide}. As expected, these curves increase as $\xi_1$ increases and then become a constant (e.g., $\bar{\mathcal{M}}_{N}=N$) after the marked points. For the purpose of comparison, we also plot the curves when the reflection coefficients for all the subregions are the same. It is clear that, by properly selecting the reflection coefficients, the performance for BackCom system with NOMA can be greatly enhanced.

In addition, as shown in Fig.~\ref{fig3b}, when $N=5$ and $P_T=35$ dBm, the system can achieve the optimum performance when the channel threshold $\gamma$ is less than $8.5$ dB. If $\gamma$ further increases, the minimum $\xi_1$ satisfying~\eqref{eq:designguide} will be greater than one, which is impossible. Thus, we have to increase the transmit power of the reader in order to set a smaller $\xi_5$. In Fig.~\ref{fig3b}, we plot the curve for $P_T=41.5$ dBm, which allows to achieve the best performance when $\gamma$ is less than 10 dB.

\subsection{Effect of $R_2$ and $\tilde{\beta}$}\label{sec:result::r2}
\ifCLASSOPTIONpeerreview
\begin{figure}[t]
\centering
\subfigure[Fading-free scenario.]{\label{fig4a}\includegraphics[width=0.45\textwidth]{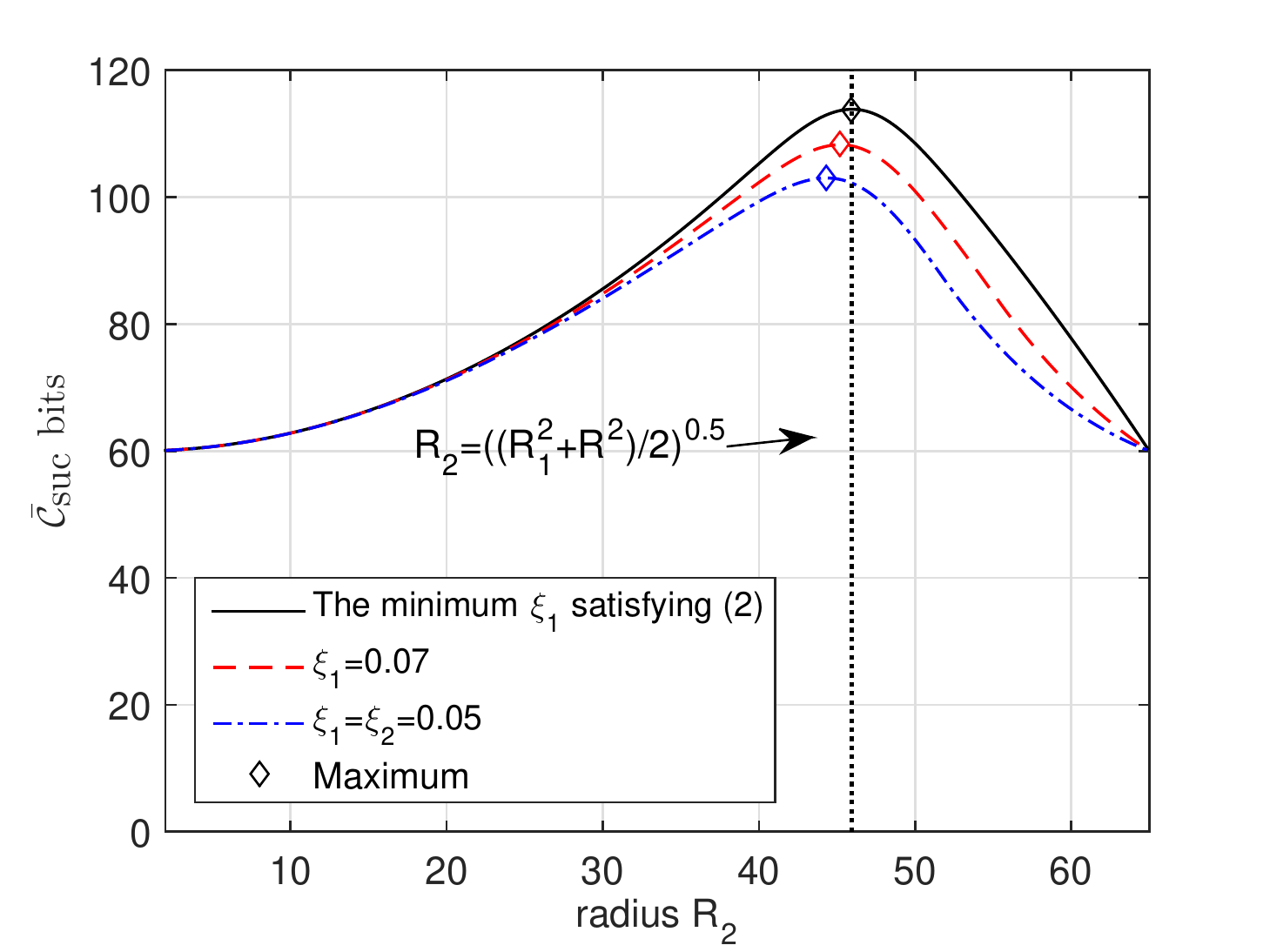}}\\
\subfigure[Fading: region division.]{ \label{fig4b}\includegraphics[width=0.45\textwidth]{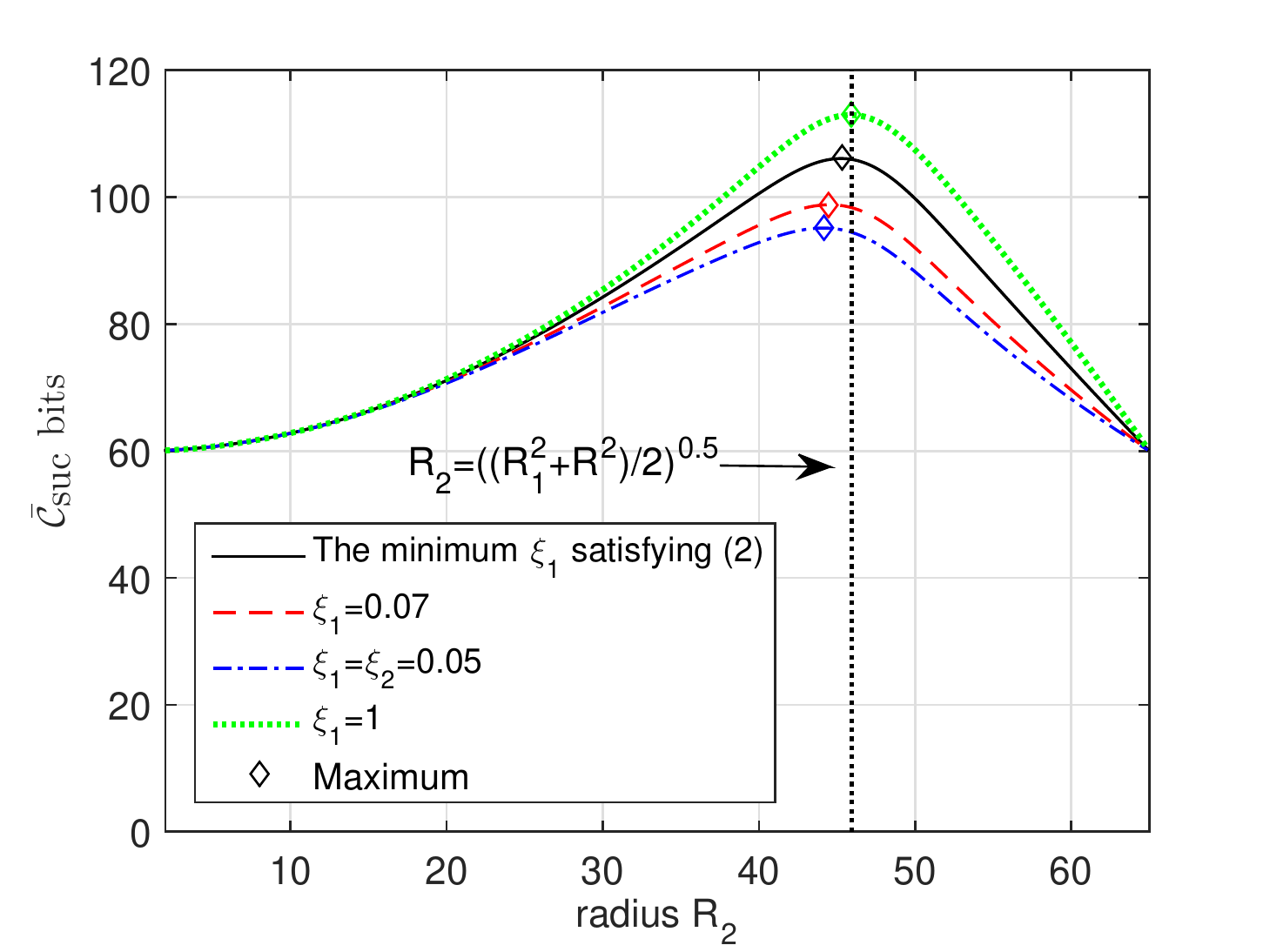}}
\subfigure[Fading: power division.]{ \label{fig4c}\includegraphics[width=0.45\textwidth]{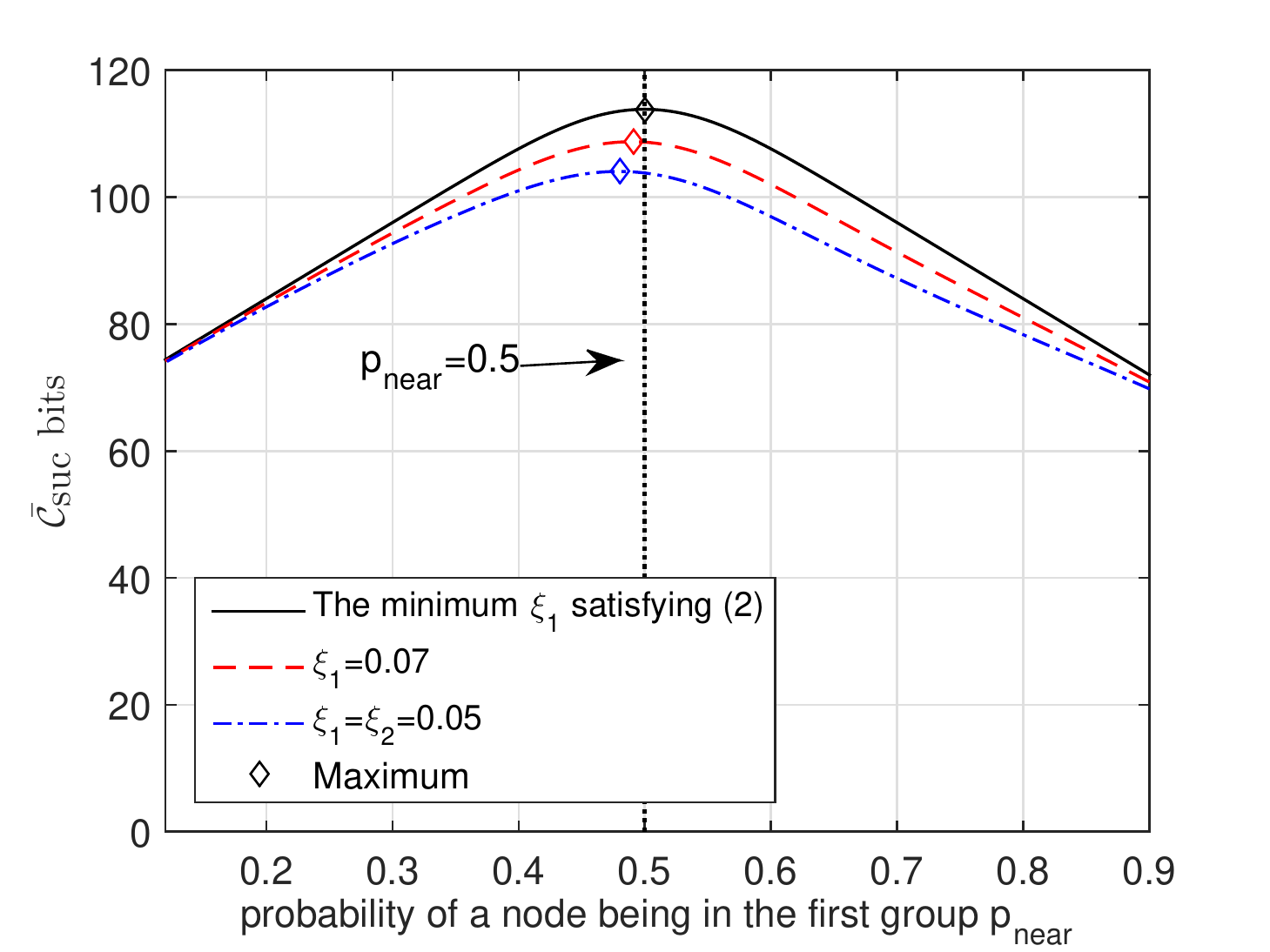}}

\caption{The average number of successfully decoded bits $\bar{\mathcal{C}}_{\suc}$ versus (a) the radius $R_2$ under fading-free scenario; (b) the radius $R_2$ with fading and (c) the probability $p_{\near}$ with power division approach.}\label{fig_sec4}
\end{figure}
\else
  \begin{figure*}[!t]
\centering
\subfigure[Fading-free scenario.]{\label{fig4a}\includegraphics[width=0.32\textwidth]{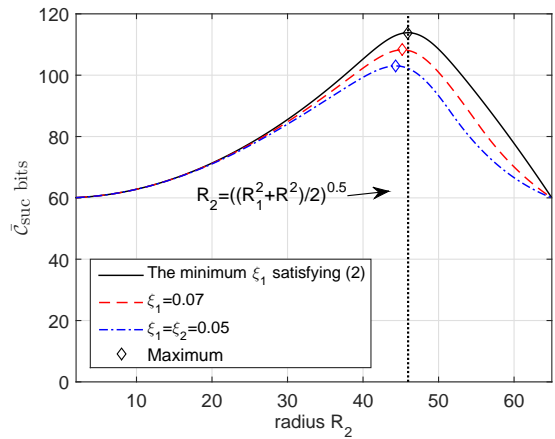}}
\subfigure[Fading: region division.]{ \label{fig4b}\includegraphics[width=0.32\textwidth]{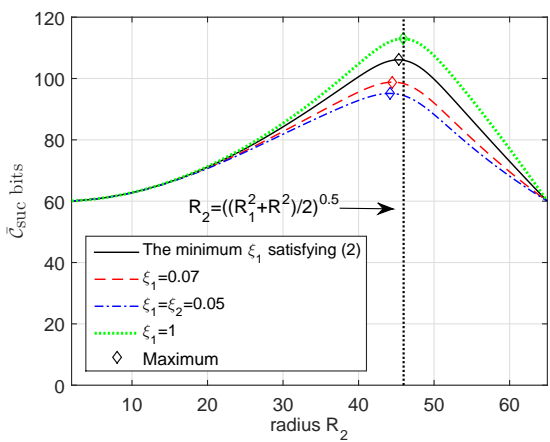}}
\subfigure[Fading: power division.]{ \label{fig4c}\includegraphics[width=0.32\textwidth]{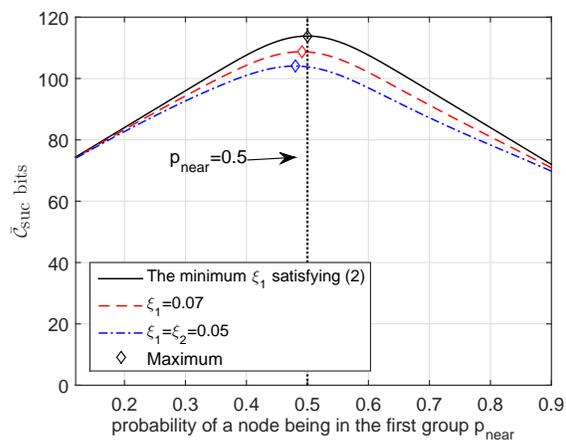}}

\caption{The average number of successfully decoded bits $\bar{\mathcal{C}}_{\suc}$ versus (a) the radius $R_2$ under fading-free scenario; (b) the radius $R_2$ with fading and (c) the probability $p_{\near}$ with power division approach.}\label{fig_sec4}
\vspace*{4pt}
\vspace{-0.05 in}
\end{figure*}
\fi
In this subsection, we investigate the impact of the radius $R_2$ for the region division approach and the threshold $\beta$ for the power division approach. To analyze the impact of system parameters, we focus on the metric $\bar{\mathcal{C}}_{\suc}$ rather than the normalized $\bar{\mathcal{C}}_{\suc}$, since the total number of bits transmitted by BNs varies for different system setups. Figs.~\ref{fig4a} and~\ref{fig4b} plot the radius $R_2$ versus the average number of successfully decoded bits $\bar{\mathcal{C}}_{\suc}$ for fading-free and fading scenarios, respectively. Fig.~\ref{fig4c} plots the probability $p_{\near}$ versus $\bar{\mathcal{C}}_{\suc}$ for the fading case with the power division approach. We set channel threshold $\gamma=5$ dB and $\xi_2=0.05$. We also mark the maximum $\bar{\mathcal{C}}_{\suc}$ reached by each case. From these figures, we can see that, when $p_{\near}$ is varying from 0 to 1 (equivalently, $R_2$ varies from $R_1$ to $R$ for the region division approach), $\bar{\mathcal{C}}_{\suc}$ first increases and then decreases. When $\xi_1$ follows the selection criterion in Proposition 1 and Remark 4, the maximum $\bar{\mathcal{C}}_{\suc}$ is achieved for $p_{\near}=0.5$ (equivalently, $R_2=\sqrt{\frac{R_1^2+R^2}{2}}$ for the region division approach). This is due to the fact that when the reflection coefficients follow the selection criterion in Proposition 1 or Remark 4, $\bar{M}_{2}$ is always equal to 2 and it is the best performance gain achieved by pairing BNs. Hence, the overall system can benefit more when more BNs are paired. $p_{\near}=0.5$ can result in the highest probability that all BNs are paired, thereby maximizing the average number of successfully decoded bits $\bar{\mathcal{C}}_{\suc}$.

For other scenarios, $p_{\near}=0.5$ may not lead to the maximum $\bar{\mathcal{C}}_{\suc}$. This is because, when $\bar{M}_{2}$ is no longer equal to 2, both $\bar{M}_{2}$ and the probability of different pairing cases (equivalently, $p_{\near}$) are determined by $R_2$ or $\tilde{\beta}$. The varying of $R_2$ or $\tilde{\beta}$ results in the different value of $\bar{M}_{2}$ and $p_{\near}$, and the interplay of these two factors results in the different maximum $\bar{\mathcal{C}}_{\suc}$ that can be achieved by the system. In addition, we find that the maximum $\bar{\mathcal{C}}_{\suc}$ achieved by the system where $\bar{M}_{2}<2$ is always less than the maximum $\bar{\mathcal{C}}_{\suc}$ achieved by the system where $\bar{M}_{2}=2$. This shows the importance of carefully selecting system parameters in order to achieve the best performance.

\subsection{Performance Gain Achieved by Applying NOMA to the BackCom System}
\ifCLASSOPTIONpeerreview
\begin{figure}[t]
\centering
\subfigure[The average number of successfully decoded bits.]{\label{fig5a}\includegraphics[width=0.45\textwidth]{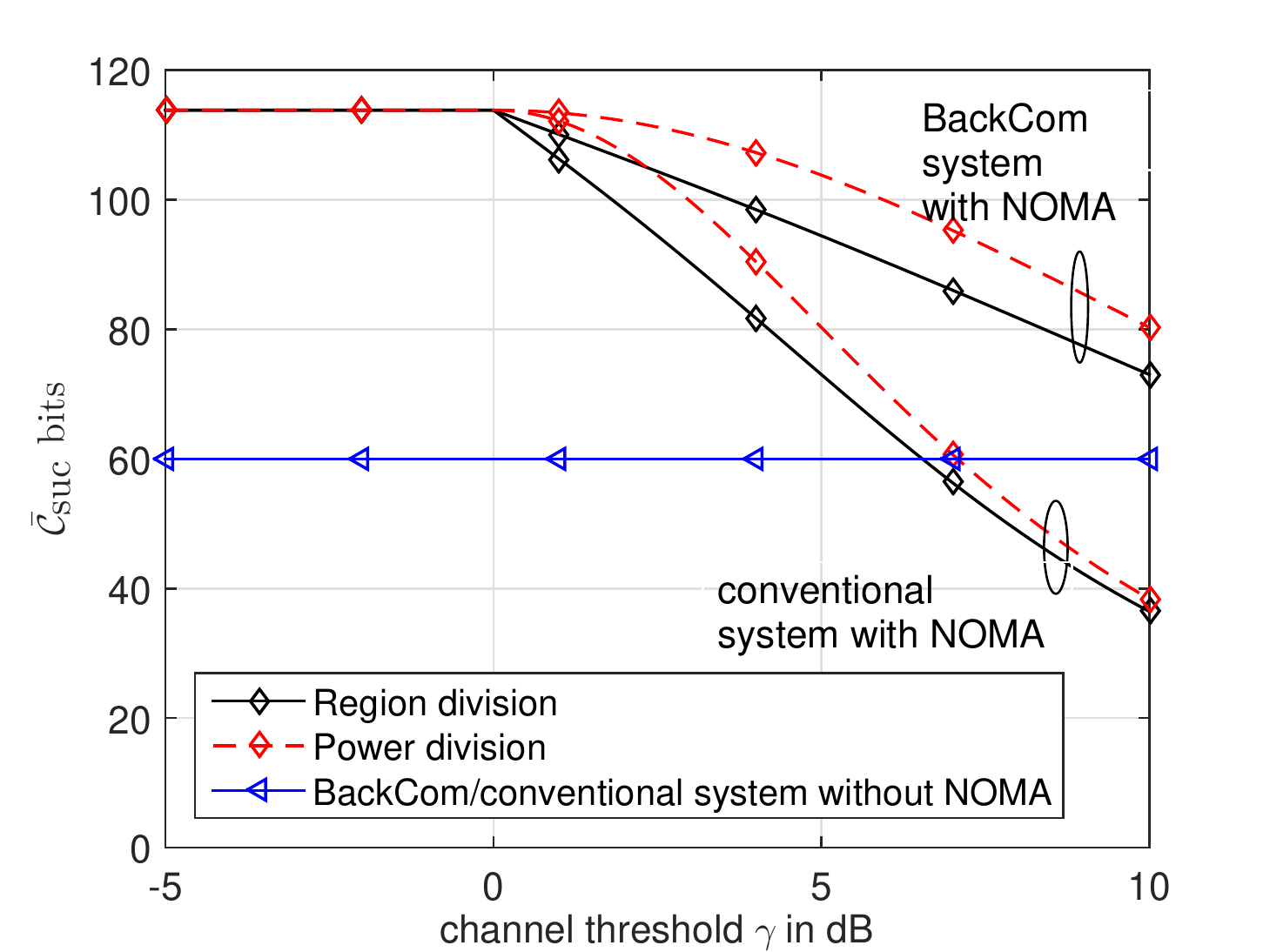}}
\mbox{\hspace{0.5cm}}
\subfigure[The ratio of $\bar{\mathcal{C}}_{\suc}$ for BackCom system with/without NOMA.]{ \label{fig5c}\includegraphics[width=0.45\textwidth]{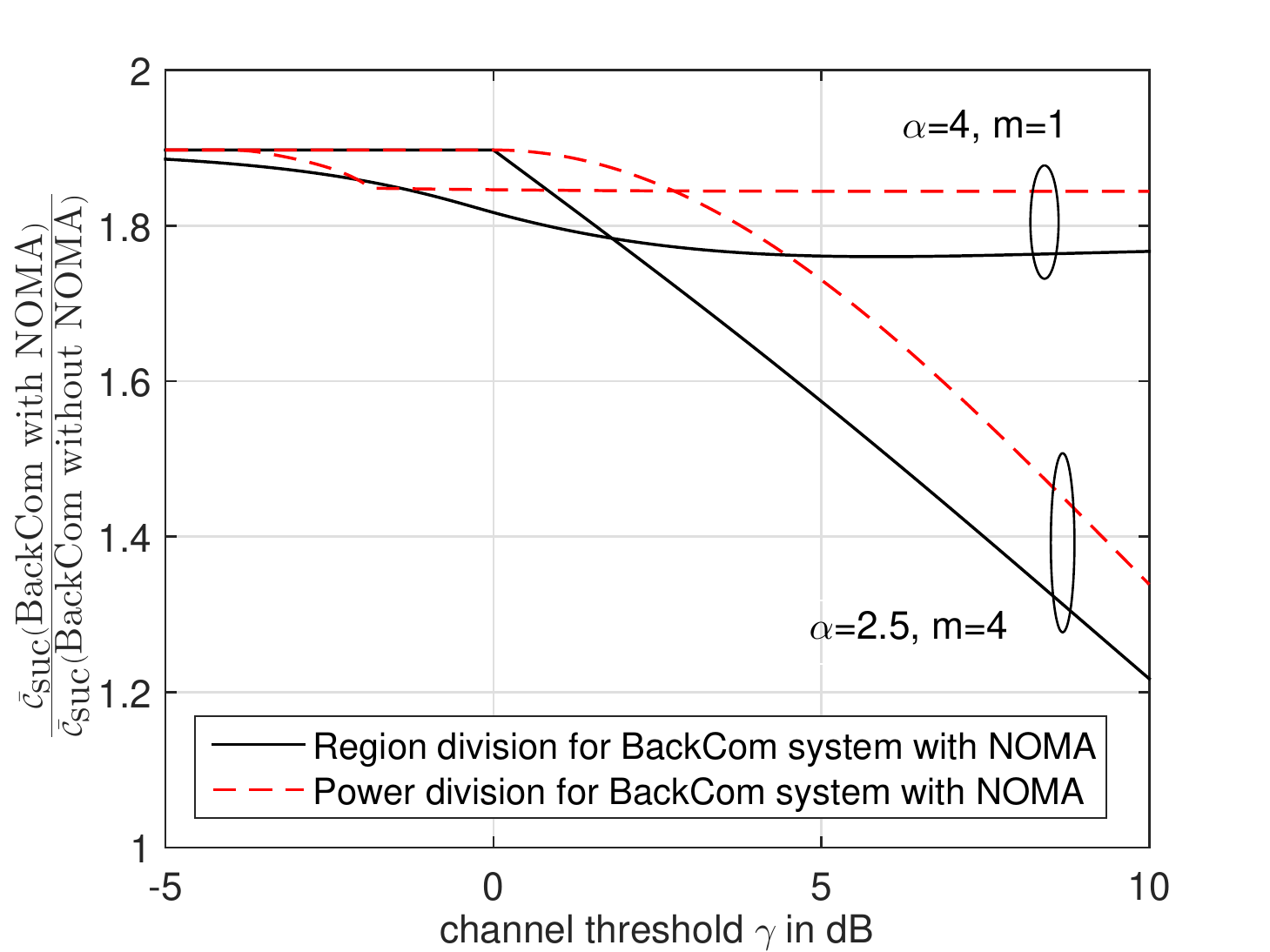}}

\caption{Channel threshold $\gamma$ versus (a) the average number of successfully decoded bits $\bar{\mathcal{C}}_{\suc}$ under $\alpha=2.5$, $m=4$ and (b) the ratio of $\bar{\mathcal{C}}_{\suc}$ for BackCom system with/without NOMA.}\label{fig_sec5}
\end{figure}
\else
\begin{figure}[t]
\centering
\subfigure[The average number of successfully decoded bits.]{\label{fig5a}\includegraphics[width=0.45\textwidth]{fig5a}}\\
\subfigure[The ratio of $\bar{\mathcal{C}}_{\suc}$ for BackCom system with/without NOMA.]{ \label{fig5c}\includegraphics[width=0.45\textwidth]{fig5c}}

\caption{Channel threshold $\gamma$ versus (a) the average number of successfully decoded bits $\bar{\mathcal{C}}_{\suc}$ under $\alpha=2.5$, $m=4$ and (b) the ratio of $\bar{\mathcal{C}}_{\suc}$ for BackCom system with/without NOMA.}\label{fig_sec5}
\end{figure}
\fi
\ifCLASSOPTIONpeerreview
\else
   \begin{figure*}[!t]
\centering
\includegraphics[width=0.7 \textwidth]{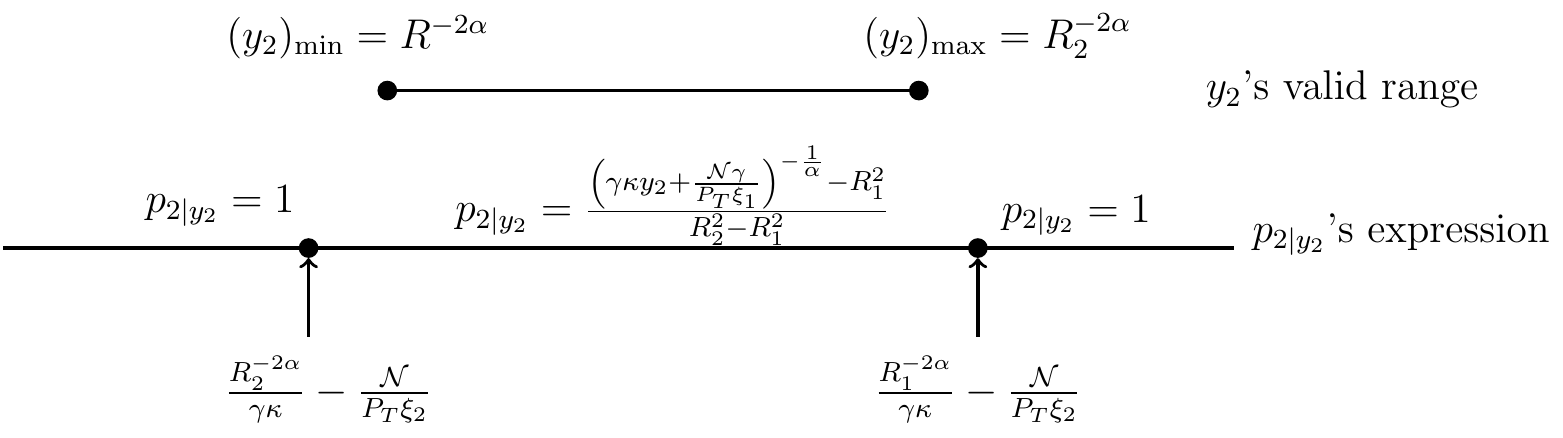}
\vspace{-5mm}
        \caption{Illustration of expressions of $p_{2|y_2}$ and the valid range of $y_2$ when $\frac{R_1^{-2\alpha}}{\gamma\kappa}-\frac{R_2^{-2\alpha}}{\gamma\kappa}\geq R_2^{-2\alpha}-R^{-2\alpha}$.}
        \label{fig_1ppendix2}
\vspace*{4pt}
\vspace{-0.05 in}
\end{figure*}
\fi
In this subsection, we evaluate the performance gain achieved by adopting NOMA into the BackCom system. For the purpose of comparison, we present the numerical results for the benchmark systems (i.e., the conventional system with/without NOMA and the BackCom system without NOMA). For the \textit{conventional communication system with NOMA}, the transmitting nodes are active devices and they use powered transceiver for the uplink communication. For the \textit{system without NOMA}, the BNs or conventional nodes access the reader in the pure TDMA fashion, e.g., only one BN/conventional node is scheduled to transmit signal to the reader per mini-slot lasting $\frac{\mathcal{L}}{M}$ seconds. The analytical results for these benchmark system can be derived using our analysis in this work. For the sake of brevity, we omit them here. Additionally, for a fair comparison among different communication systems, we assume that $\xi=0.7$ for all BNs and the transmit power for all conventional nodes are set to same, i.e., $20$ dBm.

Figs.~\ref{fig5a} plots the channel threshold $\gamma$ versus the average number of successfully decoded bits $\bar{\mathcal{C}}_{\suc}$. We first \textit{compare the BackCom system with the conventional system under NOMA scenario}. As shown in this figure, under the good channel condition (i.e., the channel condition tends to be LOS), the BackCom system has the larger average number of successfully decoded bits $\bar{\mathcal{C}}_{\suc}$ than the conventional system. This is mainly caused by the double attenuation of the received power at the reader for the BackCom system. This double attenuation effect can boost the performance of the BackCom system with NOMA under good channel condition. When the channel condition is good, the BNs are very likely to be successfully decoded alone. The double attenuation effect can make the channel gain between the stronger signal and weaker signal more distinguishable; hence, introducing the NOMA (i.e., bringing in the interference from the weaker signal) results in a small impact on the system.

We then \textit{compare the BackCom system with and without NOMA}, and we also plot the ratio of $\bar{\mathcal{C}}_{\suc}$ for these two systems in Fig.~\ref{fig5c}. From this figure, we can see that the BackCom system with NOMA generally leads to a better performance than the BackCom system without NOMA regardless of channel conditions. Under the case of same reflection coefficient, the system with NOMA allows two BNs to access the reader at the same time, which makes the reader experience the interference from the weaker signal when decoding the stronger signal. Hence, it is possible that less number of BNs can be successfully decoded when BNs are paired. However, in terms of the average number of successfully decoded bits, since the time on each mini-slot under NOMA is doubled, the BackCom system with NOMA can achieve larger $\bar{\mathcal{C}}_{\suc}$ than the system without NOMA. This illustrates why it is beneficial to apply NOMA to the BackCom system. In particular, by setting the proper reflection coefficients for the BackCom system with NOMA, the performance gain can be further improved.

\section{Conclusions}\label{sec:summary}
In this work, we have come up with a BackCom system enhanced by the power-domain NOMA, i.e., multiplexing the BNs located in different spatial regions or with different reflected power levels. Especially, the reflection coefficients for the BNs coming from different groups are set to be different such that the NOMA is fully utilized (i.e., increase the channel gain different for multiplexing BNs). In order to optimize the system performance, we provided the criteria for choosing the reflection coefficients for different groups of BNs. We also derived the analytical results for the average number of successfully decoded bits for two-node pairing case and the average number of successful BNs for the general multiplexing case. These derived results validated our proposed selection criteria. Our numerical results illustrated that NOMA generally results in the much better performance gain in the BackCom system than its performance gain in the conventional system. This demonstrated the significance of adopting NOMA with the BackCom system. Future work can consider the multiple readers scenario and the BNs powered by power beacons or ambient RF signals.
\section*{Appendix A}
\begin{proof}
\ifCLASSOPTIONpeerreview
\begin{figure}
\centering
\includegraphics[width=0.7 \textwidth]{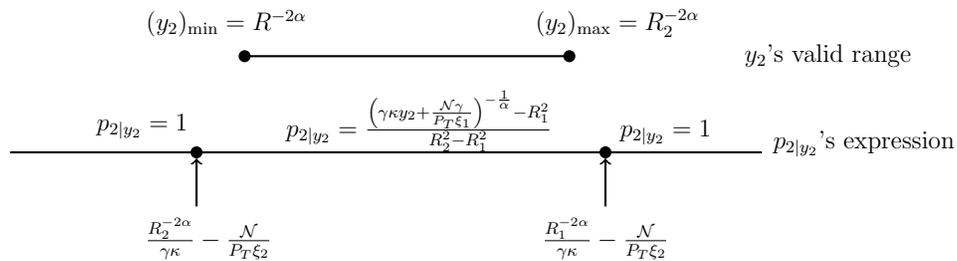}
\vspace{-5mm}
        \caption{Illustration of expressions of $p_{2|y_2}$ and the valid range of $y_2$ when $\frac{R_1^{-2\alpha}}{\gamma\kappa}-\frac{R_2^{-2\alpha}}{\gamma\kappa}\geq R_2^{-2\alpha}-R^{-2\alpha}$.}
        \label{fig_1ppendix2}
\end{figure}
\fi
Since we consider the $\xi_1\geq\xi_2$ scenario, the decoding order is always from the near BN to the far BN. The probability that both BNs are successfully decoded is given by
\ifCLASSOPTIONpeerreview
\begin{align}
p_2&=\Pr\left(\frac{P_T\xi_1 r_1^{-2\alpha}}{P_T\xi_2 r_2^{-2\alpha}+\mathcal{N}}\geq\gamma\,\,\&\& \,\,\frac{P_T\xi_2 r_2^{-2\alpha}}{\mathcal{N}}\geq\gamma\right)\nonumber\\
&=\Pr\left(y_1\geq\gamma\kappa y_2+\frac{\mathcal{N}\gamma}{P_T\xi_1}\,\,\&\&\,\,y_2\geq\frac{\mathcal{N}\gamma}{P_T\xi_2}\right)\nonumber\\
&=\left\{ \begin{array}{ll}
       0,\quad\quad\quad\quad\quad\quad\quad\quad\quad\quad\quad\quad\quad\quad\quad\,\,\,{\frac{\mathcal{N}\gamma}{P_T\xi_2}\geq R_2^{-2\alpha};} \\
       \mathlarger{\int}_{\textrm{min}\left\{\frac{\mathcal{N}\gamma}{P_T\xi_2},R^{-2\alpha} \right\}}^{R_2^{-2\alpha}}p_{2|y_2}f_{y_2}(y_2)\textup{d}y_2,\quad\quad\quad\quad{\frac{\mathcal{N}\gamma}{P_T\xi_2}<R_2^{-2\alpha};}\\
                    \end{array} \right.
\end{align}
\else
\begin{align}
p_2&=\Pr\left(\frac{P_T\xi_1 r_1^{-2\alpha}}{P_T\xi_2 r_2^{-2\alpha}+\mathcal{N}}\geq\gamma\,\,\&\& \,\,\frac{P_T\xi_2 r_2^{-2\alpha}}{\mathcal{N}}\geq\gamma\right)\nonumber\\
&=\Pr\left(y_1\geq\gamma\kappa y_2+\frac{\mathcal{N}\gamma}{P_T\xi_1}\,\,\&\&\,\,y_2\geq\frac{\mathcal{N}\gamma}{P_T\xi_2}\right)\nonumber\\
&=\left\{ \begin{array}{ll}
       0,\quad\quad\quad\quad\quad\quad\quad\quad\quad\quad\quad\quad\quad\quad\,{\frac{\mathcal{N}\gamma}{P_T\xi_2}\geq R_2^{-2\alpha};} \\
       \mathlarger{\int}_{\textrm{min}\left\{\frac{\mathcal{N}\gamma}{P_T\xi_2},R^{-2\alpha} \right\}}^{R_2^{-2\alpha}}p_{2|y_2}f_{y_2}(y_2)\textup{d}y_2,\quad\,\,{\frac{\mathcal{N}\gamma}{P_T\xi_2}<R_2^{-2\alpha};}\\
                    \end{array} \right.
\end{align}
\fi
\noindent where $y_1\triangleq r_1^{-2\alpha}$ with PDF $f_{y_1}(y_1)=\frac{y_1^{-\frac{1}{\alpha}-1}}{\alpha(R_2^2-R_1^2)}$ and $y_1\in\left[R_2^{-2\alpha},R_1^{-2\alpha}\right]$, $y_2\triangleq r_2^{-2\alpha}$ with PDF $f_{y_2}(y_2)=\frac{y_2^{-\frac{1}{\alpha}-1}}{\alpha(R^2-R_2^2)}$ and $y_2\in\left[R^{-2\alpha},R_2^{-2\alpha}\right]$, and $p_{2|y_2}$ is the conditional probability of $p_2$.

We first consider the case of $\frac{\mathcal{N}\gamma}{P_T\xi_2}<R^{-2\alpha}$, which implies that the weaker signal can be always successfully decoded given that the stronger signal is successfully decoded. Note that when $\gamma\kappa y_2+\frac{\mathcal{N}\gamma}{P_T\xi_1}\leq \left(y_1\right)_{\textrm{min}}=R_2^{-2\alpha}$ (e.g., $y_2\leq \frac{R_2^{-2\alpha}}{\gamma\kappa}-\frac{\mathcal{N}}{P_T\xi_2}$), the conditional probability $p_{2|y_2}$ is always equal to one. When $\gamma\kappa y_2+\frac{\mathcal{N}\gamma}{P_T\xi_1}\geq \left(y_1\right)_{\max}=R_1^{-2\alpha}$ (e.g., $y_2\geq \frac{R_1^{-2\alpha}}{\gamma\kappa}-\frac{\mathcal{N}}{P_T\xi_2}$), $p_{2|y_2}$ is always equal to zero. For the remaining range of $y_2$, $p_{2|y_2}=\int_{\gamma\kappa y_2+\frac{\mathcal{N}\gamma}{P_T\xi_1}}^{R_1^{-2\alpha}}\frac{y_1^{-\frac{1}{\alpha}-1}}{\alpha(R_2^2-R_1^2)}\textup{d}y_1=\frac{\left(\gamma\kappa y_2+\frac{\mathcal{N}\gamma}{P_T\xi_1}\right)^{-\frac{1}{\alpha}}-R_1^2}{R_2^2-R_1^2}$.

Based on the expressions of $p_{2|y_2}$ and $y_2$'s valid range, when $\frac{R_1^{-2\alpha}}{\gamma\kappa}-\frac{R_2^{-2\alpha}}{\gamma\kappa}\geq R_2^{-2\alpha}-R^{-2\alpha}$, we can plot a diagram in Fig.~\ref{fig_1ppendix2} to help finding the integration limits. From Fig.~\ref{fig_1ppendix2}, we obtain the final expression of $p_2$ as
\begin{itemize}
\item $\gamma\leq\frac{R_2^{-2\alpha}}{\kappa R_2^{-2\alpha}+\frac{\mathcal{N}}{P_T\xi_1}}$: $p_2=1$;

\item $\gamma\geq \frac{R_1^{-2\alpha}}{\kappa R^{-2\alpha}+\frac{\mathcal{N}}{P_T\xi_1}}$: $p_2=0$;

\item Other range:

 $p_2=\mathlarger{\int}_{R^{-2\alpha}}^{\max\left\{\frac{R_2^{-2\alpha}}{\gamma\kappa}-\frac{\mathcal{N}}{P_T\xi_2},R^{-2\alpha} \right\}}f_{y_2}(y_2)\textup{d}y_2+\mathlarger{\int}_{\max\left\{\frac{R_2^{-2\alpha}}{\gamma\kappa}-\frac{\mathcal{N}}{P_T\xi_2},R^{-2\alpha} \right\}}^{\min\left\{\frac{R_1^{-2\alpha}}{\gamma\kappa}-\frac{\mathcal{N}}{P_T\xi_2},R_2^{-2\alpha} \right\}}\frac{\left(\gamma\kappa y_2+\frac{\mathcal{N}\gamma}{P_T\xi_1}\right)^{-\frac{1}{\alpha}}-R_1^2}{R_2^2-R_1^2} f_{y_2}(y_2)\textup{d}y_2$.
\end{itemize}

We note the above expressions of $p_2$ also hold for $\frac{R_1^{-2\alpha}}{\gamma\kappa}-\frac{R_2^{-2\alpha}}{\gamma\kappa}< R_2^{-2\alpha}-R^{-2\alpha}$. For other cases, we can adopt the similar steps to work out $p_2$. After further computation and simplification, we arrive at the result in~\eqref{eq:nofading:p2}.
\end{proof}

 \bibliographystyle{IEEEtran}

\end{document}